\documentclass[12pt,a4paper]{article}

\usepackage{cancel}
\usepackage[normalem]{ulem}
\usepackage[T1]{fontenc}
\usepackage[utf8]{inputenc}
\usepackage{authblk}
\usepackage{graphicx,booktabs,balance}
\usepackage{tikz}
\usepackage{algorithmicx}
\usepackage{algpseudocode}
\usepackage[english]{babel}
\usepackage{color}
\usepackage{hyperref}
\usepackage{amsmath,amssymb,amsthm}
\usepackage{caption}
\usepackage{listings}


\newtheorem{theorem}{Theorem}[section]
\newtheorem{corollary}{Corollary}

\newtheorem{lemma}[theorem]{Lemma}
\newtheorem{proposition}{Proposition}

\theoremstyle{definition}
\newtheorem{definition}[theorem]{Definition}
\newtheorem{remark}{Remark}

\newtheorem{example}{Example}

\newcommand{\bbF}{{\mathbb{F}}}
\newcommand{\lexprec}{{\prec_{\mathrm{Lex}}}}
\newcommand{\lexpreceq}{{\preceq_{\mathrm{Lex}}}}
\newcommand{\antiprec}{{\prec_{\mathrm{A }}}}
\newcommand{\antisucc}{{\succ_{\mathrm{A }}}}
\newcommand{\antisucceq}{{\succeq_{\mathrm{A }}}}
\newcommand{\antipreceq}{{\preceq_{\mathrm{A }}}}
\newcommand{\Psucceq}{{\succeq_{\mathrm{P }}}}
\newcommand{\Ppreceq}{{\preceq_{\mathrm{P }}}}

\title{Relative generalized Hamming weights of $q$-ary Reed-Muller codes}
\author{Olav Geil\thanks{olav@math.aau.dk}}
\author{Stefano Martin\thanks{stefano@math.aau.dk}}
\affil{Department of Mathematical Sciences, Aalborg University, Denmark}

\begin{document}
\maketitle

\begin{abstract}
Coset constructions of $q$-ary Reed-Muller codes can be used to store
secrets on a distributed storage system in such a way that only
parties with access to a large part of the system can obtain
information while still allowing for local error-correction. In this paper
we determine the relative generalized Hamming weights of these codes
which can be translated into a detailed description of the information leakage~\cite{Bains,luo2005some,kurihara2012secret,geil2014}.\\

\noindent {\bf{Keywords:}} Distributed storage, $q$-ary Reed-Muller code, relative generalized Hamming weight, secret sharing.
\end{abstract}

\section{Introduction}\label{secosaco1}

We consider the situation where a central party wants to store sensitive
information (a secret) on a distributed storage system in such a way
that other parties with access to a large part of the system will be
able to recover it, but other parties will not. The following requirements are natural:
\begin{itemize}
\item[R1:] Access to arbitrary $r$ (or more) of the stored data
  symbols makes it possible to recover the secret in full, however, with 
  $\tau$ (or less) one cannot recover any information -- or less
  restrictive one can only recover a limited amount of information.
\item[R2:] The storage device must be able to locally repair
  itself. More precisely, if the storage media experiences random
  errors then with a very high probability any stored
  symbol can be corrected from only a small number of randomly
  accessed locations (symbols) of the media.
\end{itemize}
To meet simultaneously the requirements R1 and R2 we propose to use
a coset construction $C_1/C_2$ of $q$-ary Reed-Muller codes. As is
well-known any linear ramp secret sharing scheme can be realized as a
coset construction of two linear codes and vice versa~\cite{chen2007secure}. By
choosing the code $C_1$ to be a $q$-ary Reed-Muller code not only do
we address R1 but we also meet the requirement R2. This is due to the
fact that $q$-ary
Reed-Muller codes are locally
correctable. When
considering the coset construction $C_1/C_2$ rather than $C_1$ this property is
always maintained (See Section~\ref{secosaco2}) which corresponds to R2. The local correctability
properties of $q$-ary Reed-Muller codes have been studied in 
detail, see e.g.\ \cite{58,now}. 

\begin{definition} \label{defi00}
Let $q$ be a power of a prime, $u$ an integer, $s$
a positive integer, and write $n=q^s$. We enumerate the elements of $({\mathbb{F}}_q)^s$ as
$\{ P_1, \ldots , P_n\}$ and consider the
evaluation map $\varphi: \bbF_q[X_1,\dots,X_s] \rightarrow
(\bbF_q)^n$, $\varphi(f) = (f(P_1),\ldots,f(P_n))$. 
The $q$-ary Reed-Muller code of order $u$ in $s$ variables is defined by
\begin{eqnarray}
&RM_q(u,s)  =  \{\varphi(f): f \in \bbF_q[X_1,\ldots,X_s], \deg(f)
\leq u\}.
\end{eqnarray}
\end{definition}

\begin{definition}
A code $C \subseteq ({\mathbb{F}}_q)^n$ is said to be $(\rho, \delta,
\varepsilon)$-correctable if there exists a randomized
error-correcting algorithm ${\mathcal{A}}$ which takes as input
$\vec{y} \in ({\mathbb{F}}_q)^n$ and $i \in \{1,\ldots , n\}$ such that
\begin{enumerate}
\item for all $i \in \{1, \ldots , n\}$ and all vectors $\vec{c} \in C$, $\vec{y}
  \in ({\mathbb{F}}_q)^n$ which differ in at most $\delta$ positions
$$Pr[{\mathcal{A}}(\vec{y},i)=c_i] \geq 1-\varepsilon$$
where the probability is modelling the random coin tosses of the
algorithm ${\mathcal{A}}$. Here, $c_i$ means the $i$-th entry of
$\vec{c}$.
\item ${\mathcal{A}}$ makes at most $\rho$ queries to $\vec{y}$. 
\end{enumerate}
\end{definition}
The following theorem corresponds to~\cite[Pro.\ 2.4, Pro.\ 2.5,
  Pro.\ 2.6]{now}.

\begin{theorem}\label{thelocal}
If $u < q-1$ then 
$RM_q(u,s)$ is $(u+1,\delta, (u+1)\delta)$-locally correctable for all
  $\delta$. Let $\sigma <1$ be a positive real and assume $u
  \leq \sigma (q-1)-1$. Then $RM_q(u,s)$ is
  $(q-1,\delta,2\delta/(1-\sigma))$-locally correctable for all
  $\delta$ and if furthermore $\delta < 1/2-\sigma$ then it is 
$(q-1,\delta,4(\delta -
  \delta^2)/[(q-1)(1-2(\sigma-\delta))^2])$-locally correctable.
\end{theorem}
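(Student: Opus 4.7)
All three claims are instances of the classical \emph{random line} decoding procedure for Reed-Muller codes. Let $\vec{c}=\varphi(f)$ with $\deg f\le u$ and suppose $\vec{y}$ differs from $\vec{c}$ on a set $E$ with $|E|\le \delta n$, where $n=q^{s}$. To recover $c_{i}=f(P_{i})$ the decoder samples a direction $\vec{d}\in(\mathbb{F}_q)^{s}\setminus\{0\}$ uniformly at random and works on the affine line $L=\{P_{i}+t\vec{d}:t\in\mathbb{F}_q\}$. The restriction $g(t):=f(P_{i}+t\vec{d})$ is a univariate polynomial of degree at most $u$ with $g(0)=c_{i}$, and for each fixed $t\neq 0$ the point $P_{i}+t\vec{d}$ is uniformly distributed over $(\mathbb{F}_q)^{s}\setminus\{P_{i}\}$, so it lies in $E$ with probability at most $\delta n/(n-1)$, essentially $\delta$.

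For the first bound, one queries $\vec{y}$ at any $u+1$ of the $q-1$ nonzero shifts, performs Lagrange interpolation to reconstruct $g$, and outputs $g(0)$; a union bound over the $u+1$ queries yields a failure probability of at most $(u+1)\delta$. For both bounds with $\rho=q-1$, the decoder instead queries $\vec{y}$ at \emph{all} $q-1$ nonzero shifts and runs a Reed-Solomon decoder (Berlekamp-Welch) on a word of length $q-1$ for a code of dimension at most $u+1\le\sigma(q-1)$, so of minimum distance at least $q-1-u\ge(1-\sigma)(q-1)$. Unique decoding succeeds whenever the number $X:=\sum_{t\in\mathbb{F}_q^{*}}X_{t}$ of corrupted queries, with $X_{t}=\mathbf{1}[P_{i}+t\vec{d}\in E]$, is strictly below the threshold $T\approx(1-\sigma)(q-1)/2$. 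Since $\mathbb{E}[X]\le(q-1)\delta$, Markov's inequality $\Pr[X\ge T]\le\mathbb{E}[X]/T$ gives the announced bound $2\delta/(1-\sigma)$.

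For the third, sharper bound one replaces Markov's inequality by Chebyshev's. Decomposing $\mathrm{Var}(X)=\sum_{t}\mathrm{Var}(X_{t})+\sum_{t_{1}\neq t_{2}}\mathrm{Cov}(X_{t_{1}},X_{t_{2}})$, the diagonal contribution is bounded by $(q-1)\delta(1-\delta)$. For distinct nonzero $t_{1},t_{2}$ the pair $(P_{i}+t_{1}\vec{d},P_{i}+t_{2}\vec{d})$ is parameterized by the single choice of $\vec{d}$, so $\mathbb{E}[X_{t_{1}}X_{t_{2}}]\le|E|^{2}/(n-1)$ and the off-diagonal covariances are of order $\delta^{2}/(n-1)$, negligible against the diagonal. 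Chebyshev then gives $\Pr[X\ge T]\le \mathrm{Var}(X)/(T-\mathbb{E}[X])^{2}$, which, once the exact values of $T$ and $\mathbb{E}[X]$ are inserted and a factor $q-1$ is pulled out from both the variance and the squared gap, reduces to the stated bound $4(\delta-\delta^{2})/[(q-1)(1-2(\sigma-\delta))^{2}]$.

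\textbf{Main obstacle.} The delicate part is the Chebyshev computation: one must track the factor $n/(n-1)$ carefully in both $\mathbb{E}[X]$ and $\mathbb{E}[X^{2}]$, verify that the covariance contribution is genuinely lower order, and identify the correct form of the threshold so that the denominator materialises as $(1-2(\sigma-\delta))^{2}$ rather than the naive $(1-\sigma-2\delta)^{2}$; this suggests the decoding radius or the definition of $T$ used is slightly sharper than the one I indicated, and getting this bookkeeping right is the only step that is not routine.
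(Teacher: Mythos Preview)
The paper does not prove Theorem~\ref{thelocal} at all; it simply records the statement and cites \cite[Prop.~2.4--2.6]{now}. Your outline for the first two assertions (random line through $P_i$, Lagrange interpolation with a union bound, respectively Reed--Solomon decoding with Markov's inequality) is exactly Yekhanin's argument in those propositions, so there is nothing to add there.

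For the third assertion, however, your proposed argument has a real gap. With a \emph{single} random direction $\vec{d}$, the indicators $X_{t_1},X_{t_2}$ are far from pairwise independent: the joint event $\{P_i+t_1\vec{d}\in E\}\cap\{P_i+t_2\vec{d}\in E\}$ is governed by one random vector, and your bound $\mathbb{E}[X_{t_1}X_{t_2}]\le|E|^2/(n-1)$ is false (the right-hand side can exceed $1$). In fact one only gets $\mathbb{E}[X_{t_1}X_{t_2}]\le|E|/(n-1)\approx\delta$, so the covariance can be of order $\delta-\delta^2$ --- comparable to the variance term, not negligible. If $E$ happens to be a union of lines through $P_i$ this is exactly what occurs, and your Chebyshev bound degrades by a factor $q-1$ and no longer yields the stated estimate.

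What Yekhanin actually does for Prop.~2.6 is to replace the random line by a random \emph{quadratic} curve $t\mapsto P_i+t\vec{d}_1+t^2\vec{d}_2$ with $\vec{d}_1,\vec{d}_2$ chosen independently and uniformly. Now for distinct nonzero $t_1,t_2$ the map $(\vec{d}_1,\vec{d}_2)\mapsto(t_1\vec{d}_1+t_1^2\vec{d}_2,\;t_2\vec{d}_1+t_2^2\vec{d}_2)$ is a linear bijection of $(\mathbb{F}_q)^{2s}$, so the queried points are genuinely pairwise independent and $\mathrm{Var}(X)\le(q-1)\delta(1-\delta)$ exactly. The price is that the restricted univariate polynomial now has degree at most $2u\le 2\sigma(q-1)-2$, so the Reed--Solomon distance is at least $(1-2\sigma)(q-1)+1$ and the decoding threshold is $T\approx(1-2\sigma)(q-1)/2$. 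This is precisely why the hypothesis reads $\delta<1/2-\sigma$ (so that $T-\mathbb{E}[X]>0$) and why the denominator carries a $2\sigma$: Chebyshev yields
\[
\Pr[X\ge T]\;\le\;\frac{(q-1)\,\delta(1-\delta)}{\bigl((q-1)(1-2\sigma-2\delta)/2\bigr)^{2}}
\;=\;\frac{4(\delta-\delta^{2})}{(q-1)\bigl(1-2(\sigma+\delta)\bigr)^{2}}.
\]
So the ``bookkeeping'' you flagged as the obstacle is not bookkeeping at all: the shape of the denominator is forced by the degree-$2$ parametrisation, and the sign in the paper's displayed bound appears to be a typographical slip (it should read $\sigma+\delta$, consistent with the hypothesis $\delta<1/2-\sigma$).
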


Turning to the question of information leakage in connection with ramp
secret sharing schemes based on $q$-ary Reed Muller codes, not much
can be found in the literature (see, however, \cite{duursma2012multiplicative} for other interesting
results on secret sharing schemes related to binary Reed-Muller
codes). In the present paper we fill this gap. More precisely we
establish the true values of all corresponding relative generalized
Hamming weights for $q$-ary Reed Muller codes in two variables. For
the case of more variables we device a simple and low complexity
algorithm to determine the parameters. By known methods these results
then easily translate into a detailed and accurate description of
the leakage to unauthorized parties as well as the number of
symbols needed for the authorized parties to recover the secret. We
note that a similar analysis has not been made before for any of the
known families of locally correctable codes.\\

Our work on relative generalized Hamming weights of $q$-ary
Reed-Muller codes is a non-trivial generalization of results by
Heijnen and Pellikaan~\cite{heijnen1998generalized}, who based on the
Feng-Rao bound for dual codes, showed how to calculate generalized Hamming
weights of $q$-ary Reed-Muller codes. Until recently the relative generalized Hamming weights have been
determined. 
for one
family only, namely the family of
MDS-codes. In the recent paper~\cite{geil2014} a method was given to estimate these parameters for one-point algebraic geometric
codes through the use of the Feng-Rao bounds for dual or primary codes. More results in this
direction were presented in~\cite{tower}, ~\cite{geilrefined}, ~\cite{zhang2015relative} and \cite{lee}, the latter
dealing with more-point algebraic geometric codes. The
present paper is a natural continuation
of~\cite{heijnen1998generalized} and \cite{geil2014}, however, to keep the description
as simple as possible, in the presentation of the present paper we use
the footprint bound from Gr\"{o}bner basis theory rather than the
Feng-Rao bounds.\\

The paper is organized as follows. We start in Section~\ref{secosaco2} by
giving some background information on ramp secret sharing schemes and in
particular we explain the role of a coset construction of $q$-ary
Reed-Muller codes. The subsequent four sections treat our main task which
is determination of the relative generalized Hamming weights of
$q$-ary Reed-Muller codes. In Section~\ref{secosaco3} we present the
theory, based on which we shall derive the
weights. Section~\ref{secosaco4} shows
a general method
to derive any of the weights, and this method is formalized into a simple and
low complexity algorithm in Section~\ref{secosaco5}. Finally in 
Section~\ref{secosaco6} we present closed formula expressions for
$q$-ary Reed-Muller codes in two variables. In Section~\ref{secosaco7}
we revert to the communication problem of secret sharing on a
distributed storage system. We make some general remarks on the
connection between information leakage and local correctability and we
give a number of examples. Section~\ref{secosaco8} is the
conclusion. 

\section{Linear ramp secret sharing schemes}\label{secosaco2}

A ramp secret sharing scheme is a cryptographic method to encode a secret
$\vec{s}$ into multiple shares $c_1, \ldots , c_n$ so that only from
specified subsets of the shares one can recover
$\vec{s}$. The encoding is in general probabilistic, meaning that to each secret
$\vec{s}$ there corresponds a collection of possible share vectors
$\vec{c}=(c_1, \ldots , c_n)$. Special attention has been given to linear ramp secret sharing
schemes \cite{chen2007secure}. Here, the space of secrets is $({\mathbb{F}}_q)^\ell$, where
$\ell \geq 1$ is some fixed integer, and $c_1, \ldots , c_n \in
{\mathbb{F}}_q$. Moreover, if $\vec{c}_1$ is an encoding of
$\vec{s}_1$ and $\vec{c}_2$ is an encoding of $\vec{s}_2$, then also
$\vec{c}_1+\vec{c}_2$ is an encoding of
$\vec{s}_1+\vec{s}_2$.\\

A linear ramp secret sharing scheme with $n$ participants, secrets in
$({\mathbb{F}}_q)^\ell$, and shares
belonging to ${\mathbb{F}}_q$ can be described as
follows~\cite{chen2007secure}. Consider linear codes $C_2 \subsetneq C_1 \subset
({\mathbb{F}}_q)^n$ with $\ell = \dim (C_1) - \dim (C_2)$ and let $L \subseteq ({\mathbb{F}}_q)^n$ be (a
linear code) such that $C_1=L \oplus C_2$, where $\oplus$ is the
direct sum. Consider a
vector space isomorphism $\psi: (\bbF_q)^\ell \rightarrow L$. A secret
$\vec{s} \in ({\mathbb{F}}_q)^\ell$ is mapped to
$\vec{x}=\psi(\vec{s})+\vec{c}_2\in C_1$, where $\vec{c}_2 \in C_2$ is chosen by
random. For the analysis, we assume that also the secrets are chosen uniformly from $(\bbF_q)^\ell$. 
In this way, the vectors of shares are chosen uniformly from $C_1$.
The $n$ shares distributed among the $n$ participants are the
$n$ coordinates of $\vec{x}$.  The threshold parameters $t$ and $r$ of the scheme
are the unique numbers such that: 
\begin{enumerate}
\item No group of $t$ participants can
recover any information about $\vec{s}$, but some groups of size $t+1$
can.
\item All groups of size $r$ can recover the secret in full, but some
groups of size $r-1$ cannot. 
\end{enumerate}
Only for $\ell=1$ we can hope for $r=t+1$
in which case we have a complete picture of the security. Such schemes
are called $t$-threshold secret sharing schemes. For general linear
ramp secret sharing schemes we have the parameters $t_1, \ldots ,
t_\ell , r_1, \ldots , r_\ell $ where for $m=1, \ldots , \ell$,  $t_m$
and $r_m$ are the unique numbers such that the following hold:
\begin{enumerate}
\item  No
group of $t_m$ participants can recover $m$ $q$-bits of information
about $\vec{s}$, but some groups of size $t_m+1$ can.
\item  All groups of
size $r_m$ can recover $m$ $q$-bits of information about $\vec{s}$,
but some groups of size $r_m-1$ cannot. 
\end{enumerate}
Clearly, $t=t_1$ and $r=r_\ell$. Observe that the $\tau$ in
requirement R1 could either be $t$ or it could be $t_i$ for some low
value of $i$. 
From~\cite[Th.\ 6.7]{Bains},
\cite[Th.\ 4]{kurihara2012secret} and \cite[Th.\ 6]{geil2014} we have the
following characterization of these parameters:
\begin{eqnarray}
t_m&=&M_m((C_2)^\perp,(C_1)^\perp)-1 \label{eqsnabeleins} \\
r_m&=&n-M_{\ell-m+1}(C_1,C_2)+1, \label{eqsnabeltwei} 
\end{eqnarray}
where $M_m(C_1,C_2)$ is the $m$-th relative generalized Hamming weight
for $C_1$ with respect to $C_2$ and $(C)^\perp$ denotes the dual code
of $C$. To make the section complete we need a formal definition of
these parameters. We start by recalling the well-know concept of generalized Hamming
weights \cite{klove1978weight, helleseth1977weight,wei1991generalized}. Recall that for $D \subseteq
(\bbF_q)^n$ the support of $D$ is defined as $$\mathrm{supp}(D) =
\{i :  c_i \neq 0\mbox{ for some }\vec{c}=(c_1, \ldots , c_n) \in D\}.$$

\begin{definition} \label{defi2}
Let $C$ be a linear code and $k$ its dimension. For $r=1,\ldots,k$, the $r$-th generalized Hamming weight (GHW) of $C$ is defined by
$$d_r(C) = \min\{|\mathrm{supp}(D)| : D\mbox{ is a linear subcode of }C\mbox{ and }\dim(D) = r\}.$$
The sequence $(d_1(C),\ldots,d_{k}(C))$ is called the hierarchy of the GHWs of $C$.
\end{definition}

Note that in particular $d_1(C)$ is the minimum distance of $C$. A further generalization of GHWs was introduced by Luo et al. in \cite{luo2005some}.

\begin{definition} \label{defi3}
Let $C_2 \subsetneq C_1$ be linear codes, $\ell = \dim(C_1)-\dim(C_2)$ the codimension of $C_1$ and $C_2$, and $n$ the length of the codes. For $m =1,\ldots,\ell$, the $m$-th relative generalized Hamming weight (RGHW) of $C_1$ with respect to $C_2$ is defined by
$$M_m(C_1,C_2) = \min_{J \subseteq \{1,\ldots,n\}} \{|J| : \dim((C_1)_J) - \dim((C_2)_J) = m\}$$
where $(C_i)_J = \{\vec{c} \in C_i : c_t = 0\mbox{ for }t \notin J\}$ for $i=1,2$.
The sequence $(M_1(C_1,C_2),\ldots,$ $M_{\ell}(C_1,C_2))$ is called the hierarchy of the RGHWs of $C_1$ with respect to $C_2$.
\end{definition}

If $C_2$ is the zero code $\{ \vec{0}\}$ then the $m$-th RGHW of $C_1$ with respect to $C_2$ is equivalent to the $m$-th GHW of $C_1$. This fact should be more clear from the following result \cite[Lem.\ 1]{liu2008relative}.

\begin{theorem} \label{teo1}
Let $C_2 \subsetneq C_1$ be linear codes and $\ell = \dim(C_1)-\dim(C_2)$ be the codimension of $C_1$ and $C_2$. For $m =1,\ldots,\ell$ we have that
\begin{eqnarray}
M_m(C_1,C_2) = \min\{|\mathrm{supp}(D)| : D\mbox{ is a linear
  subcode of }C_1, {\mbox{ \ \ \ }} \nonumber \\
D \cap C_2 = \{\vec{0}\}\mbox{ and }\dim(D) = m\}.\nonumber
\end{eqnarray}

\end{theorem}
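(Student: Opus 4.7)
The plan is to prove the two inequalities $M_m(C_1,C_2)\le N$ and $M_m(C_1,C_2)\ge N$, where $N$ denotes the right-hand side quantity $\min\{|\mathrm{supp}(D)|:D\le C_1,\ D\cap C_2=\{\vec 0\},\ \dim D=m\}$. Both directions are obtained by converting a witness for one minimum into a witness for the other, using the basic observation that for any $J$ we have $(C_1)_J\cap C_2=(C_2)_J$ (since the coordinate-vanishing condition passes through intersections).

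For $N\le M_m(C_1,C_2)$, I would take an optimal $J$ from Definition~\ref{defi3} and build a complement: choose any subspace $D\subseteq (C_1)_J$ with $(C_1)_J=D\oplus (C_2)_J$, which has dimension $\dim(C_1)_J-\dim(C_2)_J=m$. Then $D\subseteq C_1$, $\mathrm{supp}(D)\subseteq J$ so $|\mathrm{supp}(D)|\le|J|$, and $D\cap C_2=D\cap(C_1)_J\cap C_2=D\cap(C_2)_J=\{\vec 0\}$. This $D$ is a valid competitor for $N$.

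For the reverse inequality $M_m(C_1,C_2)\le N$, I would take an optimal $D$ and set $J=\mathrm{supp}(D)$, giving $D\subseteq (C_1)_J$. The sum $D+(C_2)_J$ lives in $(C_1)_J$ and is direct because $D\cap(C_2)_J\subseteq D\cap C_2=\{\vec 0\}$, so $\dim(C_1)_J-\dim(C_2)_J\ge m$. This is where the main subtlety of the proof lies: Definition~\ref{defi3} demands the dimension jump to equal $m$ exactly, not merely be at least $m$. To handle this I plan to invoke a discrete intermediate value argument: removing one coordinate from $J$ changes $\dim(C_1)_J$ by $0$ or $1$ and likewise $\dim(C_2)_J$, so the difference $\dim(C_1)_J-\dim(C_2)_J$ changes by at most one in absolute value at each step. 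Since this difference equals $0$ for $J=\emptyset$ and is at least $m$ for $J=\mathrm{supp}(D)$, as we delete coordinates one at a time we must hit the value $m$ at some intermediate set $J'\subseteq J$. This $J'$ satisfies the constraint of Definition~\ref{defi3}, hence $M_m(C_1,C_2)\le|J'|\le|J|=|\mathrm{supp}(D)|=N$.

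The only real obstacle is the equality-versus-inequality issue above; the two constructive passages between $D$ and $J$ are otherwise immediate from the identity $(C_1)_J\cap C_2=(C_2)_J$ and a routine choice of complement. Once the intermediate-value step is in place, the two inequalities match and the theorem follows.
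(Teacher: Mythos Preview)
Your proof is correct. The paper itself does not prove Theorem~\ref{teo1}; it simply quotes the result from \cite[Lem.~1]{liu2008relative}, so there is nothing to compare against directly.

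One small remark that would tighten your intermediate-value step: because $(C_2)_J\subseteq (C_1)_J$, whenever deleting a coordinate $i$ from $J$ drops $\dim(C_2)_J$ by one, the same coordinate functional is already nonzero on $(C_1)_J$, so $\dim(C_1)_J$ also drops by one. Hence the difference $\dim(C_1)_J-\dim(C_2)_J$ can only stay the same or decrease by exactly one at each deletion, never increase. This monotonicity makes the discrete intermediate-value argument immediate and removes any need to worry about overshooting.
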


This alternative characterization of RGHWs is useful when the codes
are of an algebraic nature.

\begin{remark}\label{remduality}
It is well-known that 
$RM_q(u,s)^\perp=RM_q((q-1)m-u-1,s)$, \cite[Rem.\ 4.7]{heijnen1998generalized}. Hence,
when both $C_1$ and $C_2$ are $q$-ary Reed-Muller codes then the
information leakage described in~(\ref{eqsnabeleins}) and (\ref{eqsnabeltwei}) is all about relative
generalized Hamming weights of $q$-ary Reed-Muller codes.
\end{remark}

\begin{remark}
As described in this section, in a ramp secret sharing scheme $C_1/C_2$, the code
$C_1$ is divided into disjoint subsets each corresponding to a given
message. The security comes from the randomness with which one picks
the element of the subset. This randomness does not reduce (nor
increase) the locally
error-correcting ability of $C_1$ as the encoded message is still
a word in $C_1$. Hence, if $C_1$ is a $q$-ary Reed-Muller code then
Theorem~\ref{thelocal} describes the ability to perform local
error-correction in $C_1/C_2$.
\end{remark}

We finally remark that the situation of secret sharing is more or less  similar to that of communication
over a wire-tap channel of type II \cite{wyner1975wire}, however we shall not
pursue this connection any further in the present paper.\\

In the following four sections we shall concentrate on estimating the
RGHWs of $q$-ary Reed-Muller codes which as noted gives an overview
on the information leakage from the corresponding schemes $C_1/C_2$.

\section{Useful tools to establish the RGHWs}\label{sec3}\label{secosaco3}
We start our investigations by presenting in this section some theory that shall
help us to derive the weights. The section also includes some initial
results in this direction. First we elaborate slightly on the
definition of $q$-ary Reed-Muller codes.

\begin{definition} \label{defi0}
Let $q$ be a power of a prime, $u$ an integer, $s$
a positive integer, and write $n=q^s$. We enumerate the elements of $({\mathbb{F}}_q)^s$ as
$\{ P_1, \ldots , P_n\}$ and consider the
evaluation map $\varphi: \bbF_q[X_1,\dots,X_s] \rightarrow
(\bbF_q)^n$, $\varphi(f) = (f(P_1),\ldots,f(P_n))$. 
The $q$-ary Reed-Muller code of order $u$ in $s$ variables is defined by
\begin{eqnarray}
&RM_q(u,s)  =  \{\varphi(f): f \in \bbF_q[X_1,\ldots,X_s], \deg(f)
\leq u\} \nonumber \\
& = \mathrm{span}_{\bbF_q}\{\varphi(X_1^{a_1}\cdots X_s^{a_s}) : 
0 \leq a_1,\ldots,a_s < q, a_1+\cdots+a_s \leq u\}. \label{eqrm}
\end{eqnarray}
In this paper we shall use the convention $\deg(0) = -1$ and
$\mathrm{span}_{\bbF_q}\{ \}=\{ \vec{0}\}$. Hence $RM_q(-1,s) = \{\vec{0}\}$.
\end{definition}
Throughout the rest of the paper we shall always write $n=q^s$. Observe that the equality in~(\ref{eqrm}) is a consequence of the fact
that 
\begin{eqnarray}
\varphi(f)=\varphi( f {\mbox{ rem }} \{X_1^q-X_1, \ldots ,
X_s^q-X_s\})\label{eqbrugesmaaske}
\end{eqnarray}
for any $f \in {\mathbb{F}}_q[X_1, \ldots , X_s]$. Here, the argument
on the right side of~(\ref{eqbrugesmaaske}) means the remainder of $f$
after division with $\{X_1^q-X_1, \ldots , X_s^q-X_s\}$
(see~\cite[Sec.\ 2.3]{cox2012ideals} for the multivariate division
algorithm). Furthermore note that $\varphi$ is surjective which is seen
by applying Lagrange interpolation. Dimension considerations now show
that the restriction of $\varphi$ to the span of 
$$R_q^s=\{X_1^{a_1} \cdots X_s^{a_s} : 0 \leq a_i <q, i=1, \ldots ,
s\}$$
is a bijection and  $\{ \varphi (M) : M \in R_q^s\}$ therefore is a basis for
$({\mathbb{F}}_q)^n$ as a vector space. We write 
$$Q^s_q=\{(a_1, \ldots ,
a_s) \in {\mathbb{N}}_0^s : 0 \leq a_i < q, i = 1, \ldots , s\}$$
and $\vec{X}^{\vec{a}}=X_1^{a_1}\cdots X_s^{a_s}$ for $\vec{a}=(a_1,
  \ldots , a_s)\in {\mathbb{N}}_0^s$. Hence, $R_q^s=\{\vec{X}^{\vec{a}} : \vec{a} \in
  Q^s_q\}$. 

\begin{remark}\label{rembasisforD}
From the above discussion we conclude that if $D \subseteq
  {\mbox{RM}}_q(u,s)$ is a subspace of dimension $m$ then without loss
  of generality we may assume that
  $D={\mbox{span}}_{\mathbb{F}_q}\{\varphi(F_1), \ldots ,
  \varphi(F_m)\}$ where the leading monomials (with respect to the
  given fixed monomial ordering $\prec$) satisfy ${\mbox{lm}}(F_i) \in
  R_q^s$, ${\mbox{lm}}(F_i) \neq {\mbox{lm}}(F_j)$ for $i \neq j$, and $\deg (F_i) \leq u$ for $i = 1, \ldots m$. For given $D$
  and fixed $\prec$ these leading monomials are unique.
\end{remark}

We could calculate the RGHWs of $q$-ary Reed-Muller codes using 
the technique from~\cite{geil2014} where the Feng-Rao bound for primary
codes is employed. However, the simple algebraic structure of the
$q$-ary Reed-Muller codes suggests that instead we should apply the
footprint bound which we now introduce. 

\begin{definition}\label{deffoot}
Let $k$ be a field and consider an ideal $J \subseteq
k[X_1,\ldots,X_s]$ and a fixed monomial ordering $\prec$. Let
$\mathcal{M}(X_1,\ldots,X_s)$ denote the set of monomials in the variables $X_1,\ldots,X_s$. The footprint of $J$ with respect to $\prec$ is the set
\begin{eqnarray}
\Delta_\prec(J) = \{M \in \mathcal{M}(X_1,\ldots,X_s) : M\mbox{ is
  not leading monomial of any polynomial in }J\}.\nonumber
\end{eqnarray}
\end{definition}
\begin{example}\label{ex47}
We see immediately that $\Delta_\prec(\langle X_1^q-X_1, \ldots ,
X_s^q-X_s\rangle ) \subseteq R_q^s$.
\end{example}
From~\cite[Th.\ 6]{cox2012ideals} we have the following well-known result.
\begin{theorem}\label{thecox}
Let the notation be as in Definition~\ref{deffoot}. The set
$\{M+J : M \in \Delta_\prec(J)\}$ is a basis for $k[X_1, \ldots ,
  X_s]/J$ as a vector space over $k$.
\end{theorem}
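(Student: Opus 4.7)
The plan is to establish the two defining properties of a basis: that the cosets $\{M+J : M \in \Delta_\prec(J)\}$ span $k[X_1,\ldots,X_s]/J$ and that they are linearly independent over $k$. Both parts lean on the existence of a Gr\"obner basis $G$ for $J$ with respect to $\prec$, together with the multivariate division algorithm from~\cite[Sec.\ 2.3]{cox2012ideals}.

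For the spanning part, I would fix a Gr\"obner basis $G=\{g_1,\ldots,g_t\}$ of $J$ and first reinterpret the footprint: a monomial $M$ lies in $\Delta_\prec(J)$ if and only if $M$ is divisible by no leading monomial $\mathrm{lm}(g_i)$. One direction is immediate from the definition of leading monomial; the other uses the Gr\"obner basis property, namely that the leading ideal $\langle \mathrm{lm}(f): f\in J\rangle$ equals $\langle \mathrm{lm}(g_1),\ldots,\mathrm{lm}(g_t)\rangle$. Given any $f\in k[X_1,\ldots,X_s]$, apply the multivariate division algorithm to divide $f$ by $G$, obtaining $f=q_1 g_1+\cdots+q_t g_t+r$ where every monomial in $r$ is divisible by none of $\mathrm{lm}(g_i)$, so every monomial of $r$ lies in $\Delta_\prec(J)$. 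Since $f-r\in J$, we get $f+J=r+J$, exhibiting $f+J$ as a $k$-linear combination of cosets indexed by $\Delta_\prec(J)$.

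For linear independence, suppose $\sum_{M\in S} c_M (M+J)=0$ in the quotient for some finite $S\subseteq \Delta_\prec(J)$ and scalars $c_M\in k$, not all zero. Then $h:=\sum_{M\in S}c_M M$ is a nonzero element of $J$. Let $M_0$ be the $\prec$-largest monomial appearing in $h$ with $c_{M_0}\neq 0$; then $\mathrm{lm}(h)=M_0$, so $M_0$ is the leading monomial of a polynomial in $J$, contradicting $M_0\in \Delta_\prec(J)$. Hence all $c_M$ are zero.

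The proof is essentially bookkeeping once Gr\"obner basis technology is invoked, so there is no genuine obstacle; the only delicate point is the equivalence between ``$M$ is not a leading monomial of anything in $J$'' and ``$M$ is not divisible by any $\mathrm{lm}(g_i)$,'' which must be stated carefully since the former is a statement about all of $J$ while the latter refers only to $G$. I would spell out that this equivalence is exactly the content of the leading-ideal characterization of a Gr\"obner basis, and then both halves of the argument proceed cleanly.
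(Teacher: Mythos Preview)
Your argument is correct and is the standard proof of this classical fact. Note, however, that the paper does not actually prove this theorem: it simply cites it as~\cite[Th.\ 6]{cox2012ideals}, so there is no ``paper's own proof'' to compare against beyond observing that your write-up is precisely the argument one finds in that reference.
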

\begin{example}
This is a continuation of Example~\ref{ex47}. From
Theorem~\ref{thecox} and the fact th at $\varphi : R_q^s \rightarrow
({\mathbb{F}}_q)^{n}$ is a bijection we conclude
$\Delta_\prec(\langle X_1^q-X_1, \ldots , X_s^q-X_s\rangle )=R_q^s$.
\end{example}

Consider polynomials $F_1, \ldots , F_m \in {\mathbb{F}}_q[X_1, \ldots
  , X_s]$. Let $\{Q_1, \ldots , Q_N\}$ be their common zeros over
${\mathbb{F}}_q$ and define the vector space homomorphism $\psi:
{\mathbb{F}}_q[X_1, \ldots , X_s] \rightarrow ({\mathbb{F}}_q)^N$,
$\psi(f)=(f(Q_1), \ldots , f(Q_s))$. This map is surjective (Lagrange
interpolation again) and by Theorem~\ref{thecox} the domain of $\psi$
  is a vector space of
  dimension $\left|\Delta_{\prec}(\left\langle F_1,\ldots,F_m,
X_1^q-X_1,\ldots,X^q_s - X_s \right\rangle)\right|$ (independently of
  the chosen monomial
  ordering $\prec$). As a corollary to Theorem~\ref{thecox} we therefore
obtain the following incidence of the footprint bound. For
the general version of the footprint bound see~\cite{onorin}
and~\cite[Pro.\ 8, Sec.\ 5.3]{cox2012ideals}.

\begin{lemma} \label{lemfoot}
Let $F_1,\ldots,F_m \in \bbF_q[X_1,\ldots,X_s]$. The number of common
zeros of $F_1,\ldots,F_m$ over $\bbF_q$ is at most equal to 
$\left|\Delta_{\prec}(\left\langle F_1,\ldots,F_m,
X_1^q-X_1,\ldots,X^q_s - X_s \right\rangle)\right|$ 
(here, $\prec$ is any monomial
ordering).
\end{lemma}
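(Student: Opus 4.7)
The plan is to compare two dimensions: the size of the common zero set on the one hand, and the vector space dimension of a quotient of the polynomial ring on the other hand, using Theorem~\ref{thecox} to translate the latter into the footprint.

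Let $I = \langle F_1, \ldots, F_m, X_1^q - X_1, \ldots, X_s^q - X_s \rangle$ and let $\{Q_1, \ldots, Q_N\}$ be the common zeros of $F_1, \ldots, F_m$ in $({\mathbb{F}}_q)^s$. First I would check that $I$ is contained in the kernel of the evaluation map $\psi : {\mathbb{F}}_q[X_1, \ldots, X_s] \to ({\mathbb{F}}_q)^N$ introduced in the paragraph before the lemma: each $F_i$ vanishes at every $Q_j$ by definition, and each field equation $X_\ell^q - X_\ell$ vanishes identically on $({\mathbb{F}}_q)^s$ by Fermat's little theorem. So every element of $I$ lies in $\ker \psi$, and $\psi$ therefore factors through the quotient as $\bar{\psi} : {\mathbb{F}}_q[X_1, \ldots, X_s]/I \to ({\mathbb{F}}_q)^N$.

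Next I would invoke the surjectivity of $\psi$ (hence of $\bar{\psi}$), which follows by Lagrange interpolation just as noted in the excerpt: for any target $(a_1, \ldots, a_N) \in ({\mathbb{F}}_q)^N$ one can exhibit a polynomial taking value $a_j$ at $Q_j$. Finally I would apply Theorem~\ref{thecox} to conclude that ${\mathbb{F}}_q[X_1, \ldots, X_s]/I$ has $\{M + I : M \in \Delta_\prec(I)\}$ as an ${\mathbb{F}}_q$-basis, hence dimension $|\Delta_\prec(I)|$. A surjection from an ${\mathbb{F}}_q$-space of dimension $|\Delta_\prec(I)|$ onto $({\mathbb{F}}_q)^N$ forces $N \leq |\Delta_\prec(I)|$, which is the claimed inequality.

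There is no real obstacle: the argument is essentially the one already sketched in the paragraph immediately preceding the lemma, and the only thing to verify is the inclusion $I \subseteq \ker \psi$, which is immediate. Independence from the chosen monomial ordering $\prec$ is automatic since the left-hand side $N$ does not involve $\prec$ at all, while the dimension $|\Delta_\prec(I)|$ is ordering-independent by Theorem~\ref{thecox}.
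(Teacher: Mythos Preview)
Your proof is correct and follows essentially the same approach as the paper: the paper's argument is the paragraph immediately preceding the lemma, which sketches exactly the dimension comparison you carry out. You have merely made explicit the inclusion $I \subseteq \ker\psi$ and the passage to the quotient $\bar\psi$, which the paper leaves implicit (indeed slightly imprecisely phrased as ``the domain of $\psi$'').
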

We note that actually equality holds in Lemma~\ref{lemfoot}
(see~\cite[Pro.\ 8, Sec.\ 5.3]{cox2012ideals}), but we shall
not need this fact. To make Lemma~\ref{lemfoot} operational we
recall the following notation from \cite{bezrukov2005macaulay}.

\begin{definition}\label{defpart}
The partial ordering $\Ppreceq$ 
on the monomials in $R_q^s$ and on the elements in $Q^s_q$ is defined by
$${\vec{X}}^{\vec{a}} \Ppreceq {\vec{X}}^{\vec{b}}\mbox{ (or }\vec{a} \Ppreceq \vec{b}\mbox{)} \iff a_i \leq b_i\mbox{ for all }i\in \{1,\ldots,s\}.$$
The upward shadow of $\vec{a} \in Q^s_q$ is $\nabla \vec{a} = \{\vec{b} \in Q^s_q :  \vec{b} \Psucceq \vec{a}\}.$\\
The lower shadow  of $\vec{a} \in Q^s_q$ is $\Delta \vec{a} = \{\vec{b} \in Q^s_q :  \vec{b} \Ppreceq \vec{a}\}.$\\
Let $A \subseteq Q^s_q$, we define $\nabla A = \bigcup_{\vec{a} \in A} \nabla \vec{a}$ and $\Delta A = \bigcup_{\vec{a} \in A} \Delta \vec{a}$.
\end{definition}

\begin{example} \label{ex4}
For  $\vec{a} = (2,3) \in Q^2_4$ we have that
$$\nabla \vec{a} = \{(2,3),(3,3)\}$$
$$\Delta \vec{a} = \{(2,3),(1,3),(0,3),(2,2),(1,2),(0,2),(2,1),(1,1),(0,1),(2,0),(1,0),(0,0)\}.$$
The partial ordering is not a total ordering; for example we neither have $(3,2) \Ppreceq (2,3)$ nor $(3,2) \Psucceq (2,3)$.

\end{example}

An important tool for calculating RGHWs of $q$-ary Reed-Muller codes is
the following corollary to Lemma~\ref{lemfoot}.

\begin{corollary} \label{corfoot} Consider any monomial ordering and let $D =
  \mathrm{span}_{\bbF_q}\{\varphi(F_1),\ldots,\varphi(F_m)\}$ be a
  subspace of $(\bbF_q)^n$ of dimension $m$ where without loss of
  generality we assume $\mathrm{lm}(F_i) = {\vec{X}}^{\vec{a}_{i}} \in R_q^s$
  for $i = 1,\ldots,m$ and $\vec{a}_i \neq \vec{a}_j$ for $i \neq
  j$ (Remark~\ref{rembasisforD}). Writing $A=\{\vec{a}_1, \ldots ,
  \vec{a}_m\}$ we have $|\mathrm{supp}(D)| \geq \left|\nabla A\right|.$
\end{corollary}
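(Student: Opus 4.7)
The plan is to reduce the statement to the footprint bound of Lemma~\ref{lemfoot} by relating the complement of $\mathrm{supp}(D)$ to the common zero locus of $F_1,\ldots,F_m$.

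First, I would observe that $\mathrm{supp}(D)=\{1,\ldots,n\}\setminus Z^\star$, where $Z^\star=\{i : F_j(P_i)=0 \text{ for all } j=1,\ldots,m\}$. Indeed, if $i\in Z^\star$ then every element $\sum\lambda_j\varphi(F_j)$ of $D$ has $i$-th coordinate zero, while if $i\notin Z^\star$ then some $\varphi(F_j)$ has nonzero $i$-th entry. Hence $|\mathrm{supp}(D)|=n-|Z|$, where $Z$ is the common zero set of $F_1,\ldots,F_m$ in $(\mathbb{F}_q)^s$. By Lemma~\ref{lemfoot},
\begin{equation*}
|\mathrm{supp}(D)|\;\geq\; n\,-\,\bigl|\Delta_\prec\bigl(\langle F_1,\ldots,F_m,X_1^q-X_1,\ldots,X_s^q-X_s\rangle\bigr)\bigr|.
\end{equation*}

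Next I would show that $\Delta_\prec(J)\subseteq R_q^s\setminus \nabla A$, where $J=\langle F_1,\ldots,F_m,X_1^q-X_1,\ldots,X_s^q-X_s\rangle$. The inclusion $\Delta_\prec(J)\subseteq R_q^s$ follows from Example~\ref{ex47}, since $J$ contains all the $X_i^q-X_i$. For the stronger statement, let $\vec{X}^{\vec{b}}$ be any monomial with $\vec{b}\in\nabla\vec{a}_i$ for some $i$; then $\vec{b}-\vec{a}_i$ has nonnegative entries, so $\vec{X}^{\vec{b}}=\vec{X}^{\vec{b}-\vec{a}_i}\cdot \mathrm{lm}(F_i)=\mathrm{lm}(\vec{X}^{\vec{b}-\vec{a}_i}F_i)$, and $\vec{X}^{\vec{b}-\vec{a}_i}F_i\in J$. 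Thus $\vec{X}^{\vec{b}}$ is a leading monomial of an element of $J$ and cannot lie in $\Delta_\prec(J)$. Therefore $\Delta_\prec(J)\subseteq R_q^s\setminus \nabla A$, giving $|\Delta_\prec(J)|\leq |R_q^s|-|\nabla A|=n-|\nabla A|$.

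Combining the two bounds yields $|\mathrm{supp}(D)|\geq n-(n-|\nabla A|)=|\nabla A|$, as desired. There is no real obstacle here: the only subtle point is verifying the inclusion $\nabla A \cap R_q^s \subseteq \{ \mathrm{lm}(g):g\in J\}$, which hinges on the assumption $\vec{a}_i\in Q_q^s$ so that $\vec{X}^{\vec{b}-\vec{a}_i}$ is a legitimate monomial factor; everything else is a direct application of the footprint bound and the simple counting identity $|R_q^s|=n$.
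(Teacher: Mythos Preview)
Your proof is correct and follows essentially the same approach as the paper's: both identify $|\mathrm{supp}(D)|$ with $n$ minus the number of common zeros of the $F_i$, apply Lemma~\ref{lemfoot}, and then bound the footprint by showing that every monomial $\vec{X}^{\vec{b}}$ with $\vec{b}\in\nabla A$ is a leading monomial of some element of the ideal (namely $\vec{X}^{\vec{b}-\vec{a}_i}F_i$). Your write-up is in fact slightly more explicit than the paper's at the key step, but the argument is the same.
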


\begin{proof}
The elements of $D$ are linear combination of
$\varphi(F_1),\ldots,\varphi(F_m)$, hence $|{\mbox{supp}}(D)|$ equals the length $n$ minus the number of common zeros of $F_1,\ldots,F_m$ over $\bbF_q$. By Lemma \ref{lemfoot} we get
\begin{eqnarray}
&&|\mathrm{supp}(D)| \nonumber \\
&\geq&n-| \Delta_\prec(\langle F_1, \ldots , F_m,X_1^q-X_1, \ldots ,
  X_s^q-X_s\rangle) |\nonumber \\
&\geq&n-\bigg| \bigg( \Delta_\prec(\langle X_1^q-X_1, \ldots ,
  X_s^q-X_s\rangle) \nonumber \\
&&\backslash \cup_{i=1}^m\{\vec{X}^{\vec{a}} \in
  \Delta_{\prec}(\langle X_1^q-X_1, \ldots , X_s^q-X_s\rangle) :
  \vec{X}^{\vec{a}} {\mbox{ is divisible by }} \vec{X}^{\vec{a}_i}\}
  \bigg) \bigg| \nonumber \\
&=&n-|R_q^s|+|\bigcup_{i=1}^m \{\vec{a} \in Q^s_q : \vec{a}
  \Psucceq \vec{a}_{i}\}| = |\bigcup_{i=1}^m \nabla \vec{a}_{i}| =|
  \nabla A|\nonumber
\end{eqnarray}
and the proof is complete.
\end{proof}

Interestingly for any choice of $A$ as in Corollary~\ref{corfoot} there exists some subspaces $D$ for which the bound is sharp.

\begin{proposition} \label{prop1}
Consider any monomial ordering and $A=\{\vec{a}_1, \ldots , \vec{a}_m\} \subseteq Q_q^s$ where
$\vec{a}_i \neq \vec{a}_j$ for $i \neq j$.
Then 
\begin{eqnarray}
\min\{|\mathrm{supp}(D)| :  D =
\mathrm{span}_{\bbF_q}\{\varphi(F_1),\ldots,\varphi(F_m)\} \mbox{
  for some } F_1, \ldots , F_m {\mbox{ \ \ \ \ }}\nonumber \\
{\mbox{ with }}
{\mbox{lm}}(F_i)=\vec{X}^{\vec{a}_i}, i=1, \ldots , m\}=| \nabla A |.\nonumber
\end{eqnarray}
\end{proposition}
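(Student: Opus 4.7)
The plan is to combine the lower bound of Corollary~\ref{corfoot} with an explicit construction that attains equality. Since Corollary~\ref{corfoot} already gives $|\mathrm{supp}(D)| \geq |\nabla A|$ whenever $\mathrm{lm}(F_i) = \vec{X}^{\vec{a}_i}$, it is enough to produce one tuple $(F_1, \ldots, F_m)$ realising $|\mathrm{supp}(D)| = |\nabla A|$.

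To do this I would fix an arbitrary enumeration $\bbF_q = \{\beta_0, \beta_1, \ldots, \beta_{q-1}\}$ and set
\[
F_i \;=\; \prod_{j=1}^{s}\prod_{k=0}^{a_{i,j}-1}(X_j-\beta_k), \qquad i=1, \ldots, m,
\]
with the convention that an empty product equals $1$. Every monomial appearing in the expansion of $F_i$ is a divisor of $\vec{X}^{\vec{a}_i}$, and $\vec{X}^{\vec{a}_i}$ itself appears there with coefficient $1$; since any monomial ordering makes proper divisors strictly smaller, $\mathrm{lm}(F_i) = \vec{X}^{\vec{a}_i}$ for every choice of $\prec$, so the tuple is admissible.

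Finally I would evaluate the $F_i$'s along the natural bijection $\Lambda\colon Q_q^s \to \bbF_q^s$, $\Lambda(\vec{b}) = (\beta_{b_1}, \ldots, \beta_{b_s})$. Substituting $X_j = \beta_{b_j}$ gives
\[
F_i(\Lambda(\vec{b})) \;=\; \prod_{j=1}^{s}\prod_{k=0}^{a_{i,j}-1}(\beta_{b_j}-\beta_k),
\]
which is nonzero iff $b_j \geq a_{i,j}$ for every $j$, i.e.\ iff $\vec{b} \in \nabla \vec{a}_i$. Since $\mathrm{supp}(D)$ consists of exactly those coordinates where at least one $F_i$ fails to vanish, it equals $\Lambda\bigl(\bigcup_i \nabla \vec{a}_i\bigr) = \Lambda(\nabla A)$ and so has cardinality $|\nabla A|$; combined with Corollary~\ref{corfoot} this gives the proposition. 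The only delicate point in the plan is the choice of $F_i$: the naive guess $F_i = \vec{X}^{\vec{a}_i}$ already fails in simple cases (e.g.\ for $s=1$, $q=3$, $A=\{2\}$ the monomial $X_1^2$ vanishes only at $0$, yielding $|\mathrm{supp}(D)| = 2 > 1 = |\nabla A|$), so the lower-order factors $\prod_{k<a_{i,j}}(X_j-\beta_k)$ are essential in order to align the common zero set with the $\Lambda$-image of $Q_q^s \setminus \nabla A$; once the correct $F_i$ is in hand, both the leading-monomial computation and the support count are immediate expansions.
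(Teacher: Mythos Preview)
Your proposal is correct and follows essentially the same approach as the paper: both invoke Corollary~\ref{corfoot} for the lower bound and then construct the identical explicit polynomials $F_i=\prod_{j=1}^{s}\prod_{k=0}^{a_{i,j}-1}(X_j-\beta_k)$ to realise equality, with the same bijection between $Q_q^s$ and $\bbF_q^s$ to identify the support as $\nabla A$. Your additional remark on why the naive choice $F_i=\vec{X}^{\vec{a}_i}$ fails is a helpful aside but does not alter the route.
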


\begin{proof}
From Corollary \ref{corfoot} we know that
\begin{eqnarray}
\min\{|\mathrm{supp}(D)| : D =
\mathrm{span}_{\bbF_q}\{\varphi(F_1),\ldots,\varphi(F_m)\} \mbox{
  for some } F_1, \ldots , F_m {\mbox{ \ \ \ \ }}\nonumber \\
{\mbox{ with }}
{\mbox{lm}}(F_i)=\vec{X}^{\vec{a}_i}, i=1, \ldots , m\}\geq | \nabla A |.\nonumber
\end{eqnarray}
Now we want to prove the other inequality.
Let $\bbF_q=\{\gamma_0,\ldots,\gamma_{q-1}\}$ and $\vec{a} = (a_1,\ldots,a_s) \in Q^s_q$, we write $\vec{\gamma}_{\vec{a}} = (\gamma_{a_1},\ldots,\gamma_{a_s})$. For $i=1,\ldots,m$, we write the coordinates of $\vec{a}_{i}$ as $(a_{i,1},a_{i,2},\ldots,a_{i,s})$. We define the following subspace of $(\bbF_q)^n$:
$$\tilde{D} =
\mathrm{span}_{\bbF_q}\left\{\varphi(G_1),\ldots,\varphi(G_m)\right\}\mbox{
  with }G_i = \prod_{t=1}^s\prod_{j=0}^{a_{i,t}-1}(X_t-\gamma_j)\mbox{
  for  }i=1,\ldots,m.$$
For $i=1, \ldots , m$ we have ${\mbox{lm}}(G_i)=\vec{X}^{\vec{a}_i}$. Furthermore $G_i(\gamma_{\vec{a}}) \neq 0$ if and
only if $\vec{a} \in Q_q^s$ satisfies $\vec{a}_{i}
\Ppreceq \vec{a}$. The last result is equivalent to saying that  $G_i(\gamma_{\vec{a}}) \neq 0$ if and
only if $\vec{a} \in \nabla \vec{a}_i$. The support of $\tilde{D}$ is
the union of all positions where some $\varphi(G_i)$ does not equal
$0$. Hence, $|\mathrm{supp}(\tilde{D})| =\left| \bigcup_{i=1}^m \nabla
\vec{a}_{i} \right|=| \nabla A | $. The proof is complete.
\end{proof}

Recall that a q-ary Reed-Muller code is defined as 
$$RM_q(u,s) = {\mbox{span}}_{\mathbb{F}_q}\{\varphi(f) : f \in
R_q^s, \deg(f) \leq u\}.$$
As is well-known~\cite{heijnen1998generalized} the minimum distance strictly increases when
$u$ increases (until the code equals $({\mathbb{F}}_q)^n$). Hence, if we
consider two codes $C_1=RM_q(u_1,s)$, $C_2=RM_q(u_2,s)$ with $u_2 <
u_1$ then 
\begin{equation}
M_1(C_1,C_2)=d \label{eqnabel1}
\end{equation}
where $d$ is the minimum distance of $C_1$. From
Proposition~\ref{prop1} it is not difficult to establish the other
extreme case, namely that of $M_\ell(C_1,C_2)$ where $\ell = \dim C_1
- \dim C_2$. We have $M_\ell(C_1,C_2)=|\nabla A|$ where 
$$A=\{(a_1, \ldots , a_s) : 0\leq a_i < q, i=1, \ldots , s, u_2
< \sum_{i=1}^sa_1 \leq u_1\}.$$
We have $|Q_q^s \backslash \nabla A| =\dim C_2$ and therefore
\begin{equation}
M_\ell (C_1,C_2)=n-\dim C_2. \label{eqnabel2}
\end{equation}
Treating the intermediate cases is much more subtle. This is done in
the following sections.

\section{RGHWs of $q$-ary Reed-Muller codes}\label{secthree}\label{secosaco4}

In this section we employ Proposition \ref{prop1} to compute
the hierarchy of RGHWs in the case that $C_1$ and $C_2$ are both
$q$-ary Reed-Muller codes. The main result is Theorem
\ref{teo4}. 

Our method for calculating the hierarchy of RGHWs involves the
anti lexicographic ordering on the  monomials in $R_q^s$ (and on the
elements in $Q^s_q$). To relate our findings to Heijnen and Pellikaan's
work on GHWs we also need the lexicographic ordering on the same sets.

\begin{definition} \label{defi4}
The lexicographic ordering $\lexprec$ on the monomials in $R_q^s$ and on
the elements in $Q^s_q$ is defined by
$${\vec{X}}^{\vec{a}} \lexprec {\vec{X}}^{\vec{b}}\mbox{(or }\vec{a} \lexprec \vec{b}\mbox{)} \iff a_1=b_1,\ldots,a_{l-1}=b_{l-1} \mbox{ and }a_l < b_l\mbox{ for some }l.$$
The anti lexicographic ordering $\antiprec$ on the monomials in $R_q^s$ and on the elements in $Q^s_q$ is defined by
$${\vec{X}}^{\vec{a}} \antiprec {\vec{X}}^{\vec{b}}\mbox{(or }\vec{a} \antiprec \vec{b}\mbox{)} \iff a_s=b_s,\ldots,a_{s-l+1}=b_{s-l+1} \mbox{ and }a_{s-l} > b_{s-l}\mbox{ for some }l.$$
\end{definition}

\begin{example} \label{ex1}
For $s = 2$, $q=3$ with $X = X_1$ and $Y = X_2$ we have
$$1 \lexprec Y \lexprec Y^2  \lexprec X  \lexprec XY  \lexprec XY^2  \lexprec X^2  \lexprec X^2Y  \lexprec X^2Y^2,$$
$$X^2Y^2 \antiprec XY^2 \antiprec Y^2 \antiprec X^2Y \antiprec XY \antiprec Y \antiprec X^2 \antiprec X \antiprec 1.$$
\end{example}
From this example it is easy to see that the anti lexicographic ordering
is not the inverse ordering of the lexicographic ordering. Recalling from Definition~\ref{defpart} the ordering $\Ppreceq$ we note that if ${\vec{X}}^{\vec{a}} \Ppreceq {\vec{X}}^{\vec{b}}$ (or $\vec{a} \Ppreceq \vec{b}$) then ${\vec{X}}^{\vec{a}} \lexpreceq {\vec{X}}^{\vec{b}}$ and ${\vec{X}}^{\vec{a}} \antisucceq {\vec{X}}^{\vec{b}}$ (or $\vec{a} \lexpreceq \vec{b}$ and $\vec{a} \antisucceq \vec{b}$).

The following concepts will be used extensively throughout our exposition.

\begin{definition}\label{deffw}
Given $\vec{a} = (a_1,\ldots,a_s) \in Q^s_q$, we call $\deg(\vec{a}) = \deg({\vec{X}}^{\vec{a}}) = \sum_{t=1}^s a_t$ the degree of $\vec{a}$. 
Let $a,b$ be two integers with $0 \leq a \leq b \leq s(q-1)$, then we define
$$F_q((a,b),s) =\{\vec{a} \in Q^s_q : a \leq \deg(\vec{a}) \leq b\}\mbox{ and}$$
$$W_q((a,b),s) =\{{\vec{X}}^{\vec{a}} \in R_q^s : \vec{a} \in F_q((a,b),s)\}.$$
\end{definition}

The index $q$ and the value $s$ will be omitted in the rest of this section, thus instead we will use the notations $F(a,b)$ and $W(a,b)$, respectively.

\begin{definition}\label{defnnn}
Let $m \in \{1,\ldots,|F(a,b)|\}$, we denote by $L_{(a,b)}(m)$ the set of the first $m$ elements of $F(a,b)$ using the lexicographic ordering and by $N_{(a,b)}(m)$ the set of the first $m$ elements of $F(a,b)$ using the anti lexicographic ordering.
\end{definition}
The sets $N_{(a,b)}(m)$ will play a crucial role in the following
derivation of a formula for the RGHWs of $q$-ary Reed-Muller
codes. The sets $L_{(a,b)}(m)$ shall help us establish the connection
to the work by Heijnen and Pellikaan on GHWs. Their main
result~\cite[Th.\ 5.10]{heijnen1998generalized} 
is as follows:

\begin{theorem}\label{teoP1}
Let $\vec{a} = (a_1,\ldots,a_s)$ be the $r$-th element in
$F(s(q-1)-u_1,s(q-1))$ with respect to the lexicographic ordering. Then 
\begin{equation}d_r(RM_q(u_1,s)) = |\Delta L_{(s(q-1)-u_1,s(q-1))}(r)| = \sum_{i=1}^s a_{s-i+1}q^{i-1}+1.\label{eqreform}\end{equation}
\end{theorem}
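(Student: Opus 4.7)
The plan is to reduce Theorem~\ref{teoP1} to a shadow-minimization problem on the product-of-chains poset $(Q_q^s,\Ppreceq)$ and solve the latter via a Clements--Lindström-type compression together with an inductive count. Fix a graded monomial ordering. By Remark~\ref{rembasisforD}, any $r$-dimensional subspace $D\subseteq RM_q(u_1,s)$ admits a basis $\{\varphi(F_1),\ldots,\varphi(F_r)\}$ with distinct leading monomials $\vec{X}^{\vec{a}_1},\ldots,\vec{X}^{\vec{a}_r}\in R_q^s$ and $\deg \vec{a}_i \leq \deg F_i \leq u_1$, so $A=\{\vec{a}_1,\ldots,\vec{a}_r\}\subseteq F(0,u_1)$. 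Corollary~\ref{corfoot} forces $|\mathrm{supp}(D)|\geq |\nabla A|$, while the explicit construction in the proof of Proposition~\ref{prop1} realizes the bound with polynomials $G_i$ of total degree $\deg \vec{a}_i\leq u_1$, hence living in $RM_q(u_1,s)$. Therefore
$$d_r(RM_q(u_1,s))=\min\bigl\{|\nabla A|:A\subseteq F(0,u_1),\ |A|=r\bigr\}.$$

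I would then dualize using the order-reversing involution $\tau(b_1,\ldots,b_s)=(q-1-b_1,\ldots,q-1-b_s)$ on $Q_q^s$, which sends $F(0,u_1)$ bijectively onto $F(s(q-1)-u_1,s(q-1))$ and satisfies $\tau(\nabla A)=\Delta \tau(A)$. Putting $B=\tau(A)$ recasts the problem as
$$d_r(RM_q(u_1,s))=\min\bigl\{|\Delta B|:B\subseteq F(s(q-1)-u_1,s(q-1)),\ |B|=r\bigr\}.$$
The key claim is that the minimum is attained by the lex-initial segment $B^{\star}=L_{(s(q-1)-u_1,s(q-1))}(r)$. I would prove this by a Clements--Lindström-type shadow-compression argument on the product poset $Q_q^s$: iterated exchanges of elements of $B$ for lex-smaller elements of the same degree can only decrease $|\Delta B|$, and the unique fixed point of all admissible exchanges inside $F(s(q-1)-u_1,s(q-1))$ is $B^{\star}$.

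Having identified the extremizer, the formula $|\Delta B^{\star}|=\sum_{i=1}^s a_{s-i+1}q^{i-1}+1$ follows by induction on $s$: split $B^{\star}$ by the value of the first coordinate, with $b_1<a_1$ contributing a ``full slab'' whose downward shadow is enumerable in closed form, and $b_1=a_1$ reducing to the $(s-1)$-variable instance for the suffix $(a_2,\ldots,a_s)$; the recursion telescopes to the stated expression. The main obstacle is the compression step: one must simultaneously verify that each admissible swap preserves membership in the degree band $[s(q-1)-u_1,s(q-1)]$ and weakly decreases the shadow. This is a consequence of the Clements--Lindström theorem for multisets, and it is compatible with the present setup because every degree-slice of $F(s(q-1)-u_1,s(q-1))$ is itself a lex-initial configuration on that slice, so restricted compressions never violate the band constraint.
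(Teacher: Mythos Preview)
The paper does not prove Theorem~\ref{teoP1}; it is quoted verbatim as the main result of Heijnen--Pellikaan~\cite[Th.~5.10]{heijnen1998generalized}. What the paper does prove is the generalization to an arbitrary degree band (Lemma~\ref{lem3} in Appendix~\ref{A1}), and it does so by indicating line-by-line modifications of the proof in Heijnen's thesis~\cite[App.~B.1]{heijnen1999phd}. So there is no self-contained argument in the paper to compare against.

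That said, your outline is essentially the Heijnen--Pellikaan strategy and matches how the paper itself handles the generalization in~(\ref{eq2})--(\ref{eq4}) together with Lemma~\ref{lem3}. Your reduction step is clean: the combination of Remark~\ref{rembasisforD}, Corollary~\ref{corfoot}, and Proposition~\ref{prop1} gives exactly $d_r=\min\{|\nabla A|:A\subseteq F(0,u_1),|A|=r\}$, and your involution $\tau$ correctly converts this to a lower-shadow minimization on $F(s(q-1)-u_1,s(q-1))$. One small remark: the paper uses $\mu$, which also reverses the coordinate order; your $\tau$ does not, but this is harmless for the present theorem since you work directly on the $\Delta$/lex side.

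Where your sketch is thin is the compression step. Saying ``iterated exchanges for lex-smaller elements of the same degree can only decrease $|\Delta B|$'' is not how Clements--Lindstr\"om actually runs: the proof proceeds by coordinate-wise compressions $C_j$ (replacing each slice $B\cap\{x_j=\text{const}\}$ by a suitable initial segment) and one must check both that each $C_j$ weakly shrinks $\Delta B$ \emph{and} that $C_j$ keeps $B$ inside the degree band. Your final sentence (``every degree-slice \ldots\ is itself a lex-initial configuration'') is not a meaningful statement as written and does not supply that verification. This is precisely the technical content of~\cite[App.~B.1]{heijnen1999phd}, and the paper's Appendix~\ref{A1} shows which lemmas need adjusting to accommodate a general upper bound $b$. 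If you want a self-contained proof, you should either reproduce those compression lemmas or invoke the Clements--Lindstr\"om theorem for the product of chains explicitly, checking that the lex-initial segment of $F(s(q-1)-u_1,s(q-1))$ is compressed in every direction. Your inductive computation of $|\Delta B^{\star}|$ in the last step is fine in spirit and recovers the formula of Lemma~\ref{lemP1}.
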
 
Before continuing our work on establishing the RGHWs we reformulate
the expressions in~(\ref{eqreform}). We shall need the following
result corresponding to \cite[Lem.\ 5.8]{heijnen1998generalized}.

\begin{lemma}\label{lemP1}
Let $t$ be an integer satisfying $1 \leq t \leq  q^s$. Write $t-1 = \sum_{i=1}^s a_{s-i+1}q^{i-1}$. Then $(a_1,\ldots,a_s)$ is the $t$-th element of $Q^s_q$ with respect to the lexicographic ordering.\end{lemma}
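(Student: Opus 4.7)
The plan is to prove the lemma by a direct counting argument combined with the uniqueness of the base-$q$ representation of integers in $\{0,\dots,q^s-1\}$.

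First I would show that for any $\vec{a}=(a_1,\ldots,a_s) \in Q^s_q$, the number of elements $\vec{b} \in Q^s_q$ with $\vec{b} \lexpreceq \vec{a}$ equals
$$1 + \sum_{l=1}^s a_l\, q^{s-l}.$$
This is done by partitioning the set $\{\vec{b} : \vec{b} \lexprec \vec{a}\}$ according to the first coordinate $l$ at which $\vec{b}$ differs from $\vec{a}$. By the definition of $\lexprec$, for such a $\vec{b}$ one must have $b_1=a_1,\ldots,b_{l-1}=a_{l-1}$, $b_l \in \{0,\ldots,a_l-1\}$, and $b_{l+1},\ldots,b_s \in \{0,\ldots,q-1\}$ arbitrary. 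This contributes $a_l \cdot q^{s-l}$ elements for each $l$; summing over $l=1,\ldots,s$ and adding $1$ for $\vec{b}=\vec{a}$ itself gives the displayed count.

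Second, since the rank (position in lexicographic order) of $\vec{a}$ within $Q^s_q$ equals the number of elements $\vec{b}$ with $\vec{b} \lexpreceq \vec{a}$, we conclude that this rank is exactly $t$ precisely when
$$t-1 = \sum_{l=1}^s a_l\, q^{s-l}.$$
By the uniqueness of the base-$q$ representation of integers in $\{0,\ldots,q^s-1\}$ (noting that each $a_l$ lies in $\{0,\ldots,q-1\}$ by the definition of $Q^s_q$), the digits $a_1,\ldots,a_s$ are uniquely determined by $t$ through this equation, with $a_1$ being the most significant digit.

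Third, I would perform the re-indexing $i = s-l+1$ (so $l = s-i+1$ and $s-l = i-1$) to rewrite
$$t-1 = \sum_{l=1}^s a_l\, q^{s-l} = \sum_{i=1}^s a_{s-i+1}\, q^{i-1},$$
matching the expression in the statement. There is no significant obstacle here; the only care needed is to verify the bijection between the two sums and to observe that the lexicographic ordering places the coordinate $a_1$ in the role of the most significant base-$q$ digit, which is precisely what the counting step confirms.
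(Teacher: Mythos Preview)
Your proof is correct. The paper does not actually supply a proof of this lemma; it simply cites it as \cite[Lem.\ 5.8]{heijnen1998generalized}. Your direct counting argument---partitioning $\{\vec{b}:\vec{b}\lexprec\vec{a}\}$ by the first coordinate of disagreement and then re-indexing---is the standard and natural way to establish the result, and it is complete as written.
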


Also we shall need the bijection $\mu: Q^s_q \rightarrow Q^s_q$ given by
$\mu(a_1,\ldots,a_s) = (q-1-a_s,\ldots,q-1-a_1)$. Observe that $\mu$ has the properties 
\begin{itemize}
\item $\vec{a} \antiprec \vec{b} \iff \mu(\vec{a}) \lexprec \mu(\vec{b})$,
\item $\mu(F(a,b)) = F(s(q-1)-b,s(q-1)-a)$,  
\item $\mu(\nabla N_{(a,b)}(m)) = \Delta  L_{(s(q-1)-b,s(q-1)-a)}(m)$.
\end{itemize}
For the proofs and other properties of $\mu$ we refer to
Lemma~\ref{lem3} in Appendix~\ref{A1}. Note that by the first property
 an element $\vec{a}$ in a subset $A$ of $Q^s_q$ is the
$t$-th element in $A$ using the anti lexicographic ordering if and only
if $\mu(\vec{a})$ is the $t$-th element in $\mu(A)$ using the
lexicographic ordering. We can now reformulate Theorem~\ref{teoP1}
into the following result which is not stated in \cite{heijnen1998generalized}. \\ 

\begin{theorem} \label{teo3}
Let $\vec{a}$ be the $r$-th element in $F(0,u_1)$ using the anti
lexicographic ordering. Because $F(0,u_1) \subseteq Q^s_q$ there exists $t$
such that $\vec{a}$ is the $t$-th element in $Q^s_q$ using the anti
lexicographic ordering. We have 
$$d_r(RM_q(u_1,s)) = |\nabla N_{(0,u_1)}(r)| = t.$$
\end{theorem}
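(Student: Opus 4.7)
The plan is to reduce Theorem \ref{teo3} to the already-established Theorem \ref{teoP1} and Lemma \ref{lemP1} via the bijection $\mu$, whose relevant properties are listed just before the statement (and will be proved in Lemma \ref{lem3} of Appendix \ref{A1}).

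First, I would use the order-reversing property $\vec{a} \antiprec \vec{b} \iff \mu(\vec{a}) \lexprec \mu(\vec{b})$ together with $\mu(F(0,u_1)) = F(s(q-1)-u_1, s(q-1))$ to translate the hypothesis: if $\vec{a}$ is the $r$-th element of $F(0,u_1)$ with respect to $\antiprec$, then $\mu(\vec{a})$ is the $r$-th element of $F(s(q-1)-u_1, s(q-1))$ with respect to $\lexprec$. In the same way, the integer $t$ of the statement, defined as the position of $\vec{a}$ in $Q^s_q$ under $\antiprec$, coincides with the position of $\mu(\vec{a})$ in $Q^s_q$ under $\lexprec$.

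Next, for the first equality $d_r(RM_q(u_1,s)) = |\nabla N_{(0,u_1)}(r)|$, I would apply the third property $\mu(\nabla N_{(0,u_1)}(r)) = \Delta L_{(s(q-1)-u_1, s(q-1))}(r)$. Since $\mu$ is a bijection it preserves cardinalities, so
\[
|\nabla N_{(0,u_1)}(r)| \;=\; |\Delta L_{(s(q-1)-u_1, s(q-1))}(r)|,
\]
and Theorem \ref{teoP1} identifies the right-hand side with $d_r(RM_q(u_1,s))$.

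Finally, for the equality with $t$, write $\mu(\vec{a}) = (b_1,\ldots,b_s)$. Theorem \ref{teoP1} yields $d_r(RM_q(u_1,s)) = \sum_{i=1}^s b_{s-i+1} q^{i-1} + 1$. Since $\mu(\vec{a})$ is the $t$-th element of $Q^s_q$ under $\lexprec$ (by the translation in the first step), Lemma \ref{lemP1} gives $t - 1 = \sum_{i=1}^s b_{s-i+1} q^{i-1}$, and combining these two formulas yields $d_r(RM_q(u_1,s)) = t$. There is no serious obstacle here: the entire argument is bookkeeping with $\mu$, so the only thing to be careful about is to make sure the ordering-flip is applied consistently when passing between $F(0,u_1)$ and its image, and between $\nabla$ and $\Delta$.
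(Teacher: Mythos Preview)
Your proposal is correct and follows essentially the same approach as the paper: both translate via $\mu$ to the lexicographic setting, invoke Theorem~\ref{teoP1} for $d_r(RM_q(u_1,s)) = |\Delta L_{(s(q-1)-u_1,s(q-1))}(r)| = \sum_i b_{s-i+1}q^{i-1}+1$, use Lemma~\ref{lemP1} to identify this with $t$, and use the third listed property of $\mu$ together with bijectivity to get $|\nabla N_{(0,u_1)}(r)| = |\Delta L_{(s(q-1)-u_1,s(q-1))}(r)|$. The only cosmetic difference is the order in which the two equalities are treated.
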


\begin{proof}
By the properties of $\mu$ and using the lexicographic ordering, we have that $\mu(\vec{a}) = (\tilde{a}_1,\ldots,\tilde{a}_s)$ is the $r$-th element in $F(s(q-1)-u_1,s(q-1))$ and the $t$-th element in $Q^s_q$. From Theorem \ref{teoP1} we get  
$$d_r(RM_q(u_1,s)) = |\Delta L_{(s(q-1)-u_1,s(q-1))}(r)| = \sum_{i=1}^s \tilde{a}_{s-i+1}q^{i-1} + 1$$
where by Lemma \ref{lemP1} the last expression can be rewritten as $\sum_{i=1}^s \tilde{a}_{s-i+1}q^{i-1} + 1 = t - 1 + 1 = t$.\\
From the third listed property of $\mu$ we obtain $$|\nabla N_{(0,u_1)}(r)| = |\mu(\nabla N_{(0,u_1)}(r))| = |\Delta L_{(s(q-1)-u_1,s(q-1))}(r)|.$$
\end{proof}

Having reformulated the formula by Heijnen and Pellikaan for GHWs we now
continue our work on establishing a formula for the RGHWs. Consider
$C_2 = RM_q(u_2,s) \subsetneq C_1 = RM_q(u_1,s)$. 
Let $\ell$ be the codimension of $C_1$ and $C_2$, then for  $m=1,\ldots,\ell$  
we have 
that
\begin{eqnarray}
M_m(C_1,C_2) & = & \min\{|\mathrm{supp}(D)| : D\mbox{ is a linear
  subcode of }C_1, \nonumber \\
		&&D \cap C_2 = \{\vec{0}\}\mbox{ and }\dim(D) =
m\} \label{eq1}\\
	     & = & \min\{|\mathrm{supp}(D)| : D =
\mathrm{span}_{\bbF_q}\{\varphi(F_1),\ldots,\varphi(F_m)\}, \nonumber \\
	     && \mathrm{lm}(F_1) = {\vec{X}}^{\vec{a}_1}, \ldots,
\mathrm{lm}(F_m) = {\vec{X}}^{\vec{a}_m}, \, \vec{a}_i \neq \vec{a}_j  \mbox{ for $i \neq
  j$}\nonumber \\
		&&{\mbox{and }}	{\vec{X}}^{\vec{a}_{i}} \in W(u_2+1,u_1)\mbox{ for
}i=1,\ldots,m\}
\label{eq2}
\end{eqnarray}
Equation~(\ref{eq1}) corresponds to Theorem
\ref{teo1}. Equation~(\ref{eq2}) follows from
Remark~\ref{rembasisforD} and the fact that $D \subseteq C_1$ implies
${\mbox{lm}}(F_i) \in W(0,u_1)$, $i=1, \ldots , m$ and from the fact
that $D \cap C_2 = \{\vec{0}\}$ implies ${\mbox{lm}}(F_i) \notin
W(0,u_2)$, $i=1, \ldots , m$. In conclusion ${\mbox{lm}}(F_i) \in
W(u_2+1,u_1)$, $i=1, \ldots , m$. Combining~(\ref{eq2}) with
Proposition~\ref{prop1} we get   
\begin{eqnarray}
M_m(C_1,C_2) & = & \min\{|\bigcup_{i=1}^m \nabla \vec{a}_{i}| :
\vec{a}_{i} \in F(u_2+1,u_1), i=1, \ldots m \nonumber \\
&&\mbox{and  } \vec{a}_i \neq \vec{a}_j, {\mbox{ for }} i \neq j\} \nonumber \\
		& = & \min\{|\nabla A| : A \subseteq F(u_2+1,u_1),
|A| = m\}. \label{eq4}
\end{eqnarray}
The following lemma -- which can be viewed as a generalization
of~\cite[Th.\ 3.7.7]{heijnen1999phd} -- is proved in
  Appendix~\ref{A1}.

\begin{lemma} \label{lem5}
Let $A$ be a subset of $F(a,b)$ consisting of $m$ elements. Then $|\nabla N_{(a,b)}(m)| \leq |\nabla A|$.
\end{lemma}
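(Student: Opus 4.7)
The plan is to reduce the statement via the bijection $\mu$ to an equivalent claim about lexicographic initial segments and lower shadows, and then prove that claim by induction on $s$.

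First, using the three properties of $\mu$ listed above together with the identity $\mu(\nabla A)=\Delta\mu(A)$ (which follows from the fact that $\mu$ reverses the partial order $\Ppreceq$), set $a'=s(q-1)-b$, $b'=s(q-1)-a$, and $B=\mu(A)\subseteq F(a',b')$ with $|B|=m$. Then $|\nabla A|=|\Delta B|$, and by the third listed property of $\mu$ we have $|\nabla N_{(a,b)}(m)|=|\Delta L_{(a',b')}(m)|$, so the lemma becomes: for every $B\subseteq F(a',b')$ with $|B|=m$,
\[
|\Delta L_{(a',b')}(m)|\le|\Delta B|.
\]
The base case $s=1$ is immediate: $F(a',b')=\{a',\ldots,b'\}$, lex agrees with the usual order, so $L_{(a',b')}(m)=\{a',\ldots,a'+m-1\}$ has $|\Delta L|=a'+m$; any other $m$-subset $B$ satisfies $\max B\ge a'+m-1$, hence $|\Delta B|\ge a'+m$.

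For the inductive step I would decompose by the first coordinate. Writing $B^{(i)}=\{b''\in Q_q^{s-1}:(i,b'')\in B\}$, each $B^{(i)}$ lies in $F(\max(0,a'-i),b'-i)\subseteq Q_q^{s-1}$. A direct computation using $(k,c'')\in\Delta B\iff\exists\,i\ge k,\;b''\in B^{(i)},\;c''\Ppreceq b''$ yields
\[
|\Delta B|=\sum_{k=0}^{q-1}|\Delta U_k|,\qquad U_k:=\bigcup_{i\ge k}B^{(i)},
\]
and the $U_k$ form a decreasing chain with $|U_0|=m$. For $L=L_{(a',b')}(m)$, lex fills the slices starting from $i=0$, so each $U_k^L$ is itself a lex initial segment of a corresponding region in $Q_q^{s-1}$; the induction hypothesis then bounds each term $|\Delta U_k^L|\le|\Delta U_k|$ whenever the slice-size profiles agree.

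The main obstacle I expect is the compression step which reduces a general $B$ to one whose slice-size profile matches $L$'s without increasing $|\Delta B|$. Concretely, the required compression moves an element from a slice $i$ to a smaller slice $i'<i$ whenever the destination lies in the allowed region and is free, and one has to verify that no $|\Delta U_k|$ can grow under such a move, so $|\Delta B|$ cannot grow either. This is essentially the strategy of Heijnen--Pellikaan's Theorem~3.7.7, which handles the case $a=0$; the extra work here is to accommodate the lower degree constraint $a$, which in the lex/lower-shadow formulation manifests as the upper bound $b'=s(q-1)-a$ and forces the per-slice regions $F(\max(0,a'-i),b'-i)$ to depend genuinely on both endpoints. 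Once the compression is established, iterating it together with the per-slice induction brings any $B$ to $L_{(a',b')}(m)$, completing the proof.
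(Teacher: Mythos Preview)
Your reduction via $\mu$ to the dual inequality $|\Delta L_{(a',b')}(m)|\le|\Delta B|$ is exactly the paper's approach (this is Lemma~3 in the appendix), and the paper then establishes that inequality not from scratch but by indicating which lines of Heijnen's existing proof for the special case $a'=0$ must be modified. So at the top level your plan coincides with the paper's.

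Where your sketch diverges is in the proposed direct proof of the dual inequality. Your slice-by-first-coordinate decomposition has a concrete difficulty beyond the compression step you already flag: the assertion that $U_k^L=\bigcup_{i\ge k}L^{(i)}$ is a lex initial segment of some $F(\alpha,\beta)\subseteq Q_q^{s-1}$ is not correct in general. The slice $L^{(i)}$ lives in the window $F(\max(0,a'-i),\,b'-i)$, and as $i$ varies both endpoints shift, so the union over $i\ge k$ is typically not contained in a single degree interval of the required form; hence the induction hypothesis, which is stated only for subsets of some $F(\alpha,\beta)$, cannot be invoked on $U_k^L$ as written. Heijnen's argument (and the paper's modification of it) organizes the induction by total degree rather than by a single coordinate, which keeps each piece inside a set of the right shape and lets the compression go through. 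Your overall strategy is sound, but the inductive scaffolding would need to be reworked along those lines before the compression step can be carried out.
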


\begin{proposition}\label{rem1}
Let $C_2 = RM_q(u_2,s) \subsetneq C_1 = RM_q(u_1,s)$. We have
\begin{eqnarray*}
M_m(C_1,C_2) & = & |\nabla N_{(u_2+1,u_1)}(m)|
\end{eqnarray*}
\end{proposition}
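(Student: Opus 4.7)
The statement is essentially a direct consequence of the preceding work; there is no new difficulty beyond combining equation~(\ref{eq4}) with Lemma~\ref{lem5}.

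My plan is to start from the reformulation
\begin{equation*}
M_m(C_1,C_2) = \min\{|\nabla A| : A \subseteq F(u_2+1,u_1),\ |A| = m\},
\end{equation*}
which is exactly equation~(\ref{eq4}) and which in turn was deduced from Proposition~\ref{prop1} together with the leading-monomial characterization of subspaces $D \subseteq C_1$ satisfying $D \cap C_2 = \{\vec{0}\}$ (Remark~\ref{rembasisforD} plus the observation that $\mathrm{lm}(F_i)$ must lie in $W(u_2+1,u_1)$).

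Next I would establish the two inequalities separately. For the upper bound, I would simply note that $N_{(u_2+1,u_1)}(m)$ is, by Definition~\ref{defnnn}, a subset of $F(u_2+1,u_1)$ of cardinality $m$, hence one admissible choice of $A$ in the minimum above; this gives $M_m(C_1,C_2) \leq |\nabla N_{(u_2+1,u_1)}(m)|$. For the lower bound, I would invoke Lemma~\ref{lem5} (with $a = u_2+1$, $b = u_1$), which asserts that among all $m$-element subsets $A$ of $F(u_2+1,u_1)$, the initial segment $N_{(u_2+1,u_1)}(m)$ under the anti lexicographic ordering minimizes $|\nabla A|$. Therefore $|\nabla N_{(u_2+1,u_1)}(m)| \leq |\nabla A|$ for every such $A$, so $|\nabla N_{(u_2+1,u_1)}(m)| \leq M_m(C_1,C_2)$.

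Combining the two inequalities yields the claimed equality. The only nontrivial content here is Lemma~\ref{lem5}, whose proof is deferred to Appendix~\ref{A1}; that compression/shifting-type argument (showing the anti lexicographic initial segment is shadow-minimizing inside a layer set $F(a,b)$) is the real obstacle, but since we are allowed to cite it, the proof of Proposition~\ref{rem1} itself will be short.
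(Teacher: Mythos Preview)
Your proposal is correct and matches the paper's own proof exactly: the paper simply states that the result ``follows from~(\ref{eq4}) and Lemma~\ref{lem5}''. Your explicit separation into the upper bound (taking $A = N_{(u_2+1,u_1)}(m)$) and the lower bound (applying Lemma~\ref{lem5}) is a faithful unpacking of that one-line argument.
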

\begin{proof}
Follows from~(\ref{eq4}) and Lemma \ref{lem5}.
\end{proof}

We are now ready to present the generalization of Theorem~\ref{teo3} to RGHWs.

\begin{theorem} \label{teo4}
Given $C_2 = RM_q(u_2,s) \subsetneq C_1 = RM_q(u_1,s)$, let $\vec{a}$
be the $m$-th element in $F(u_2+1,u_1)$ with respect to the  anti
lexicographic ordering. Because $F(u_2+1,u_1) \subseteq F(0,u_1)
\subseteq Q^s_q$ there exist $r$ and $t$ such that $\vec{a}$ is
the $r$-th element in $F(0,u_1)$ and the $t$-th element in $Q^s_q$ with
respect to the anti lexicographic ordering. We have
$$M_m(C_1,C_2) = t - r + m.$$
\end{theorem}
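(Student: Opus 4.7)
The plan is to reduce Theorem~\ref{teo4} to Theorem~\ref{teo3} by identifying $|\nabla N_{(u_2+1,u_1)}(m)|$ --- which equals $M_m(C_1,C_2)$ by Proposition~\ref{rem1} --- as $|\nabla N_{(0,u_1)}(r)|$ minus $|B|$, where $B := N_{(0,u_1)}(r) \setminus N_{(u_2+1,u_1)}(m)$. Since Theorem~\ref{teo3} gives $|\nabla N_{(0,u_1)}(r)| = t$ and an elementary count yields $|B| = r-m$ --- the anti lexicographic order on $F(u_2+1,u_1)$ is the restriction of the anti lexicographic order on $F(0,u_1)$, and $\vec{a}$ occupies positions $m$ and $r$ in these two sets respectively --- the desired formula $M_m(C_1,C_2) = t-(r-m) = t-r+m$ will follow from the set identity
\[
\nabla N_{(0,u_1)}(r) \;=\; \nabla N_{(u_2+1,u_1)}(m) \,\sqcup\, B.
\]

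Disjointness is a degree comparison: $\nabla N_{(u_2+1,u_1)}(m)$ only contains elements of degree $\geq u_2+1$, whereas $B \subseteq F(0,u_2)$. The inclusion $\nabla N_{(u_2+1,u_1)}(m)\cup B\subseteq \nabla N_{(0,u_1)}(r)$ is automatic from $N_{(u_2+1,u_1)}(m)\cup B = N_{(0,u_1)}(r)$. For the reverse inclusion, take $\vec{c}\in \nabla N_{(0,u_1)}(r)$ together with some $\vec{b}\in N_{(0,u_1)}(r)$ satisfying $\vec{b}\Ppreceq \vec{c}$. The case $\vec{b}\in N_{(u_2+1,u_1)}(m)$ is immediate; assume henceforth $\vec{b}\in B$, so $\deg(\vec{b})\leq u_2$. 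If $\deg(\vec{c})\leq u_2$ then the implication $\vec{b}\Ppreceq \vec{c}\Rightarrow \vec{c}\antipreceq\vec{b}$ recalled after Example~\ref{ex1} forces $\vec{c}\antipreceq \vec{b}\antipreceq \vec{a}$ and hence $\vec{c}\in N_{(0,u_1)}(r)\cap F(0,u_2) = B$.

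The main obstacle is the remaining case $\deg(\vec{c})\geq u_2+1$, where one must exhibit some $\vec{b}^{*}\in N_{(u_2+1,u_1)}(m)$ with $\vec{b}^{*}\Ppreceq \vec{c}$. The plan is to produce $\vec{b}^{*}$ by iteratively incrementing a single coordinate of $\vec{b}$ at a position $i$ with $b_i<c_i$ (such a position exists whenever $\deg(\vec{b})<\deg(\vec{c})$), terminating once the degree reaches $u_2+1$, which is $\leq u_1$ since $C_2\subsetneq C_1$. The crucial compatibility is that a single-coordinate increment produces a vector strictly earlier in the anti lexicographic order than the original, so the terminal $\vec{b}^{*}$ satisfies $\vec{b}^{*}\antipreceq \vec{b}\antipreceq \vec{a}$; combined with $\vec{b}^{*}\Ppreceq \vec{c}$ (preserved at every step by construction) and $\vec{b}^{*}\in F(u_2+1,u_1)$, this places $\vec{b}^{*}$ in $N_{(u_2+1,u_1)}(m)$ and puts $\vec{c}$ in $\nabla N_{(u_2+1,u_1)}(m)$, completing the argument.
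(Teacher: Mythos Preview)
Your proof is correct and follows essentially the same approach as the paper's: both establish the disjoint decomposition $\nabla N_{(0,u_1)}(r) = \nabla N_{(u_2+1,u_1)}(m) \sqcup B$ with $B = N_{(0,u_1)}(r)\setminus N_{(u_2+1,u_1)}(m) = N_{(0,u_2)}(r-m)$, and both use the key mechanism of adding standard unit vectors $\vec{e}_i$ to pass from an element of $B$ to an element of $N_{(u_2+1,u_1)}(m)$ while moving downward in $\antipreceq$. The only cosmetic difference is that the paper proves the equivalent statement $\nabla N_{(0,u_2)}(r-m)\setminus \nabla N_{(u_2+1,u_1)}(m) = N_{(0,u_2)}(r-m)$ via a case split on whether $\deg(\vec{a}_i)=u_2$ or $\deg(\vec{a}_i)<u_2$ (the latter case being handled recursively through earlier $\vec{a}_j$'s), whereas your iterative incrementing reaches degree $u_2+1$ directly in one argument.
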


\begin{proof}
By Proposition~\ref{rem1} we have already proved that $M_m(C_1,C_2) =
|\nabla N_{(u_2+1,u_1)}(m)|$. It remains to be proved that $|\nabla N_{(u_2+1,u_1)}(m)| = t-r+m$.
Because $\vec{a}$ is the $m$-th element in $F(u_2+1,u_1)$ and the $r$-th element in $F(0,u_1)$ we have 
\begin{equation*}
N_{(0,u_1)}(r)  =  N_{(0,u_2)}(r-m)\cup N_{(u_2+1,u_1)}(m) 
\end{equation*}
from which we derive
\begin{eqnarray}
\nabla N_{(0,u_1)}(r) & = & \nabla N_{(u_2+1,u_1)}(m) \cup \nabla
N_{(0,u_2)}(r-m) \nonumber \\
 			& = & \nabla N_{(u_2+1,u_1)}(m) \cup (\nabla
N_{(0,u_2)}(r-m) \backslash \nabla N_{(u_2+1,u_1)}(m)). \label{eqfhs1}
\end{eqnarray}
The union in~(\ref{eqfhs1}) involves two disjoint sets. Hence,
\begin{equation*}
|\nabla N_{(u_2+1,u_1)}(m)|= |\nabla N_{(0,u_1)}(r)|    - |\nabla N_{(0,u_2)}(r-m) \backslash \nabla N_{(u_2+1,u_1)}(m)|.
\end{equation*}
From Theorem~\ref{teo3} we have $|\nabla N_{(0,u_1)}(r)|=t$. Hence, we
will be through if we can prove that 
\begin{equation}
|\nabla N_{(0,u_2)}(r-m) \backslash \nabla N_{(u_2+1,u_1)}(m)|=r-m. \label{eqhjaelp1}
\end{equation}
We enumerate $N_{(0,u_2)}(r-m) = \{\vec{a}_{1},\ldots,\vec{a}_{{r-m}}\}$
according to the anti lexicographic ordering. We have
\begin{eqnarray}
&& \nabla N_{(0,u_2)}(r-m) \backslash \nabla N_{(u_2+1,u_1)}(m) =
  \left(\nabla \bigcup_{i=1}^{r-m} \{\vec{a}_{i}\}\right) \backslash
  \nabla N_{(u_2+1,u_1)}(m) \nonumber \\
& =&  \left(\bigcup_{i=1}^{r-m}  \nabla \vec{a}_{i}\right) \backslash
  \nabla N_{(u_2+1,u_1)}(m) =  \left(\bigcup_{i=1}^{r-m}  \nabla
  \vec{a}_{i} \backslash \nabla \{\vec{a}_{t}: t < i\}\right)
  \backslash \nabla N_{(u_2+1,u_1)}(m) \nonumber \\
&= &\bigcup_{i=1}^{r-m}  \left(\nabla \vec{a}_{i} \backslash
  \left(\nabla \{\vec{a}_{t}: t < i\} \cup \nabla
  N_{(u_2+1,u_1)}(m)\right)\right).\label{eqhjaelp2}
\end{eqnarray}
We will prove that 
\begin{equation}
\nabla \vec{a}_{i} \backslash \left(\nabla
\{\vec{a}_{t}: t < i\} \cup \nabla N_{(u_2+1,u_1)}(m)\right) =
\{\vec{a}_{i}\}\label{eqhjaelp3}
\end{equation}
 holds for $i=1,\ldots,r-m$.\\
As $\vec{a}_{i} \antisucc \vec{a}_{t}$ for $t < i$, we have $\vec{a}_{i} \notin \nabla \{\vec{a}_{t}: t <
i\}$. Furthermore from $\deg(\vec{a}_{i}) \leq u_2$ and
$\deg(\vec{c}) \geq u_2+1$ for any $\vec{c} \in \nabla
N_{(u_2+1,u_1)}(m)$, we conclude $\vec{a}_{i} \notin \nabla
N_{(u_2+1,u_1)}(m)$. It follows that   $$\{\vec{a}_{i}\} \subseteq \nabla \vec{a}_{i} \backslash \left(\nabla \{\vec{a}_{t}: t < i\} \cup \nabla N_{(u_2+1,u_1)}(m)\right).$$
Now we prove the other inclusion. Assume first $\vec{a}_{i} \in
F(u_2,u_2)$. For $t=1,\ldots,s$ we define $\vec{b}_t = \vec{a}_{i} +
\vec{e}_t$ where $\vec{e}_t$ is the standard vector with $1$ in the
$t$-th position. If $\vec{b}_t \in Q^s_q$ then $\vec{b}_t \in
N_{(u_2+1,u_1)}(m)$ because $\deg(\vec{b}_t) = u_2+1$ and $\vec{a} \antisucc \vec{a}_{i}
\antisucc \vec{b}_t$. It follows that $$\nabla \vec{a}_{i} \backslash \left(\nabla \{\vec{a}_{t}: t < i\} \cup \nabla N_{(u_2+1,u_1)}(m)\right) \subseteq \nabla \vec{a}_{i} \backslash \nabla (\{\vec{b}_1,\ldots,\vec{b}_s\}\cap Q^s_q) = \{\vec{a}_{i}\}.$$
Assume next $\vec{a}_{i} \notin F(u_2,u_2)$. Again we define
$\vec{b}_t = \vec{a}_{i} + \vec{e}_t$ for $t=1, \ldots , s$. If
$\vec{b}_t \in Q^s_q$ then $\vec{b}_t \in \{\vec{a}_{t}: t < i\}$
because $\deg(\vec{b}_t) \leq u_2$ and $\vec{a}_{i} \antisucc
\vec{b}_t$. Hence, $$\nabla \vec{a}_{i} \backslash \left(\nabla
\{\vec{a}_{t}: t < i\} \cup \nabla N_{(u_2+1,u_1)}(m)\right)
\subseteq \nabla \vec{a}_{i} \backslash \nabla
(\{\vec{b}_1,\ldots,\vec{b}_s\}\cap Q^s_q) = \{\vec{a}_{i}\}.$$
We have established~(\ref{eqhjaelp3}).\\
Combining finally~(\ref{eqhjaelp3}) and (\ref{eqhjaelp2}) we obtain
$$\nabla N_{(0,u_2)}(r-m) \backslash \nabla N_{(u_2+1,u_1)}(m) =
\bigcup_{i=1}^{r-m} \{\vec{a}_{i}\} = N_{(0,u_2)}(r-m).$$
By Definition~\ref{defnnn} the last set is of size $r-m$ and
(\ref{eqhjaelp1}) follows. The proof is complete.
\end{proof}

Consider the special case of Theorem~\ref{teo4} where
$C_2=\{\vec{0}\}={\mbox{RM}}_q(-1,s)$. In this particular case we
have -- as already noted -- $d_m(C_1)=M_m(C_1,C_2)$. If we apply
Theorem~\ref{teo4} and the notion in there then we obtain $r=m$ and
consequently $M_m(C_1,C_2)=t$. Theorem~\ref{teo3} gives us the same
information $d_m(C_1)=t$. \\

We illustrate the use of Theorem~\ref{teo3} and Theorem~\ref{teo4}
with an example.

\begin{example}  \label{ex5}
In this example we consider Reed-Muller codes in two variables over
${\mathbb{F}}_5$. We first consider the case 
$C_1 = RM_5(5,2)$ and $C_2 = RM_5(3,2)$. Figure~\ref{fig1} 
illustrates how to find $r$ and $m$ for any given $t$ and how to
calculate $d_r(C_1)$ and $M_m(C_1,C_2)$ from this information. The elements
  of $Q^2_5$ are depicted in Part~\ref{fig1}.1. In Parts
  \ref{fig1}.2, \ref{fig1}.3, and \ref{fig1}.4 we illustrate how  the
  elements of $Q^2_5$, $F(0,5)$
and $F(4,5)$, respectively, are ordered. Finally, Part~\ref{fig1}.5
illustrates how to determine $d_r(C_1)$ and $M_m(C_1,C_2)$ from Theorem~\ref{teo3} and Theorem~\ref{teo4}, respectively.

\begin{figure} \tiny
\centering
      \framebox{\parbox{5in}{
		$$	
		\begin{array}{c}
			\begin{array}{c}
			\begin{array}{|c| c| c| c| c|} \hline
				(0,4) & (1,4) & (2,4) & (3,4) & (4,4) \\ \hline
				(0,3) & (1,3) & (2,3) & (3,3) & (4,3) \\ \hline
				(0,2) & (1,2) & (2,2) & (3,2) & (4,2) \\ \hline
				(0,1) & (1,1) & (2,1) & (3,1) & (4,1) \\ \hline
				(0,0) & (1,0) & (2,0) & (3,0) & (4,0) \\ \hline
			\end{array}
 			\\
			\ \\
			\mbox{Part }\ref{fig1}.1: Q^2_5			
			\end{array}

			\\

			\begin{array}{c c c}
			\begin{array}{c}
			\\
			\begin{array}{|c| c| c| c| c|} \hline
				5  &  4 &  3 &  2 & 1  \\ \hline
				10 &  9 &  8 &  7 & 6  \\ \hline
				15 & 14 & 13 & 12 & 11 \\ \hline
				20 & 19 & 18 & 17 & 16 \\ \hline
				25 & 24 & 23 & 22 & 21 \\ \hline
			\end{array}
 			\\
			\ \\
			\mbox{Part }\ref{fig1}.2\mbox{: }t\mbox{-th}\\
			\mbox{positions in }Q^2_5
			\end{array}
			&
			\begin{array}{c}
			\\
			\begin{array}{|c| c| c| c| c|} \hline
				2 & 1 &  &  &  \\ \hline
				5 & 4 & 3 &  &  \\ \hline
				9 & 8 & 7 & 6 &  \\ \hline
				14 & 13 & 12 & 11 & 10 \\ \hline
				19 & 18 & 17 & 16 & 15 \\ \hline
			\end{array}
 			\\
			\ \\
			\mbox{Part }\ref{fig1}.3\mbox{: }r\mbox{-th}\\
			\mbox{positions in }F(0,5)
			\end{array}
			&
			\begin{array}{c}
			\\
			\begin{array}{|c| c| c| c| c|} \hline
				2 & 1 &  &  &  \\ \hline
				 & 4 & 3 &  &  \\ \hline
				 &  & 6 & 5 &  \\ \hline
				 &  &  & 8 & 7 \\ \hline
				 &  &  &  & 9 \\ \hline
			\end{array}
 			\\
			\ \\
			\mbox{Part }\ref{fig1}.4\mbox{: }m\mbox{-th}\\
			\mbox{positions in }F(4,5)
			\end{array}

			\end{array}
			\end{array}
		$$
	 \centering
    \begin{tabular}{ | c | c | c | c | c | c |} 
    \hline
	$Q^2_5$ & $t$ & $r$ & $m$ & $d_r(C_1)$ & $M_m(C_1,C_2)$ \\
	 &  &  &  & $= t$ & $= t - r + m$ \\
    \hline
	$(4,4)$ & 1 & -   &   - &  -	&  - \\
	$(3,4)$ & 2 & -   &   - &  -	&  - \\
	$(2,4)$ & 3 & -   &   - &  -	&  - \\
	$(1,4)$ & 4 & 1   &   1 &  4	&  4 \\
	$(0,4)$ & 5 & 2   &   2 &  5	&  5 \\
	$(4,3)$ & 6 & -   &   - &  -	&  - \\
	$(3,3)$ & 7 & -   &   - &  -	&  - \\
	$(2,3)$ & 8 & 3   &   3 &  8	&  8 \\
	$(1,3)$ & 9 & 4   &   4 &  9	&  9 \\
	$(0,3)$ & 10 & 5   &   - &  10	&  - \\
	$(4,2)$ & 11 & -   &   - &  -	&  - \\
	$(3,2)$ & 12 & 6   &   5 &  12	&  11 \\
	$(2,2)$ & 13 & 7   &   6 &  13	&  12 \\
	$(1,2)$ & 14 & 8   &   - &  14	&  - \\
	$(0,2)$ & 15 & 9   &   - &  15	&  - \\
	$(4,1)$ & 16 & 10   &   7 &  16	&  13 \\
	$(3,1)$ & 17 & 11   &   8 &  17 &  14 \\
	$(2,1)$ & 18 & 12   &   - &  18 &  - \\
	$(1,1)$ & 19 & 13   &   - &  19	&  - \\
	$(0,1)$ & 20 & 14   &   - &  20	&  - \\
	$(4,0)$ & 21 & 15   &   9 &  21 &  15 \\
	$(3,0)$ & 22 & 16   &   - &  22 &  - \\
	$(2,0)$ & 23 & 17   &   - &  23	&  - \\
	$(1,0)$ & 24 & 18   &   - &  24	&  - \\
	$(0,0)$ & 25 & 19   &   - &  25	&  - \\	 
    \hline 
    \end{tabular}
\begin{center}Part \ref{fig1}.5\end{center} 

	}} \caption{Calculation of GHWs and RGHWs for $C_1={\mbox{RM}}_5(5,2)$ and $C_2={\mbox{RM}}_5(3,2)$.}\label{fig1} 
\end{figure}
\clearpage

For the above choice of $C_1$ and $C_2$ most of the time the GHWs and RGHWs are the same. This however, is not the general situation for
$q$-ary Reed-Muller codes as the
following choices of $C_1$ and $C_2$ illustrate.\\
In the remaining part of this example we concentrate on $q$-ary
Reed-Muller codes $C_1={\mbox{RM}}_5(u_1,2)$,
$C_2={\mbox{RM}}_5(u_2,2)$ where $u_1=u_2+1$. In Table~\ref{tabsmal1},
Table~\ref{tabsmal2},  Table~\ref{tabsmal3},  Table~\ref{tabsmal4},
and  Table~\ref{tabsmal5}, respectively, we present parameters
$d_r(C_1)$ and $M_m(C_1,C_2)$ for $(u_1,u_2)$ equal to $(2,1)$,
$(3,2)$, $(4,3)$, $(5,4)$, and $(6,5)$ respectively.   

\begin{table}
\begin{center}
\begin{tabular}{| c| c c |}
\hline
$r=m$ & $d_r(C_1)$ &  $M_m(C_1,C_2)$\\
\hline
1 & 15 & 15 \\
2 & 19 & 19 \\
3 & 20 & 22\\ 
\hline
\end{tabular} 
\end{center}
\caption{$C_1={\mbox{RM}}_5(2,2), C_2={\mbox{RM}}_5(1,2)$.}
\label{tabsmal1}
\end{table}

\begin{table}
\begin{center}
\begin{tabular}{| c| c c |}
\hline
$r=m$ & $d_r(C_1)$ &  $M_m(C_1,C_2)$\\
\hline
1 & 10 & 10 \\
2 & 14 & 14 \\
3 & 15 & 17 \\
4 & 18  & 19 \\
\hline
\end{tabular} 
\end{center}
\caption{$C_1={\mbox{RM}}_5(3,2), C_2={\mbox{RM}}_5(2,2)$.}
\label{tabsmal2}
\end{table}

\begin{table}
\begin{center}
\begin{tabular}{| c| c c |}
\hline
$r=m$ & $d_r(C_1)$ &  $M_m(C_1,C_2)$\\
\hline
1 & 5 & 5 \\
2 & 9 & 9 \\
3 & 10 & 12 \\
4 & 13 & 14 \\
5 & 14 & 15 \\
\hline
\end{tabular} 
\end{center}
\caption{$C_1={\mbox{RM}}_5(4,2), C_2={\mbox{RM}}_5(3,2)$.}
\label{tabsmal3}
\end{table}

\begin{table}
\begin{center}
\begin{tabular}{| c| c c |}
\hline
$r=m$ & $d_r(C_1)$ &  $M_m(C_1,C_2)$\\
\hline
1 & 4 & 4 \\
2 & 5 & 7 \\
3 & 8  & 9  \\
4 & 9  &  10 \\
\hline
\end{tabular} 
\end{center}
\caption{$C_1={\mbox{RM}}_5(5,2), C_2={\mbox{RM}}_5(4,2)$.}
\label{tabsmal4}
\end{table}

\begin{table}
\begin{center}
\begin{tabular}{| c| c c |}
\hline
$r=m$ & $d_r(C_1)$ &  $M_m(C_1,C_2)$\\
\hline
1 & 3 & 3 \\
2 & 4 & 5 \\
3 & 5  & 6  \\
\hline
\end{tabular} 
\end{center}
\caption{$C_1={\mbox{RM}}_5(6,2), C_2={\mbox{RM}}_5(5,2)$.}
\label{tabsmal5}
\end{table}
\end{example}
\newpage
\section{An algorithm to compute RGHWs}\label{secfour}\label{secosaco5}

By Theorem \ref{teo4} there are still two questions that need to be
addressed:
\begin{itemize}
\item[Q1] Given $m \in \{1,\ldots,| F_q((a,b),s)|\}$, how can we find the
$m$-th element $\vec{a}$ of $F_q((a,b),s)$ with respect to the anti lexicographic ordering?
\item[Q2] Given $\vec{a} \in F_q((a,b),s)$ how can we
find 
the corresponding position $t$ and $r$ -- with respect to the anti
lexicographic ordering -- in $Q^s_q$ and in $F_q((0,b),s)$,
respectively?
\end{itemize}
In this section we give answers to these two questions. We start by providing
an algorithm that solves the problem from question Q1. This algorithm
is a generalization of a method proposed in \cite[Sec.\ 6]{heijnen1998generalized}. Due to
the nature of the algorithm from now on we will -- in contrast to the
previous section -- use the full notation
$F_q((a,b),s)$,  rather than just $F_q((a,b))$ (Definition~\ref{deffw}).

\begin{definition} \label{defi7}
Let $0 \leq a \leq b \leq s(q-1)$ and $0 \leq v \leq w < q$ be integers. We define
$$F_q((a,b),(v,w),s) = \{(a_1,\ldots,a_s) \in F_q((a,b),s):  v \leq a_s \leq w\}.$$
We denote by $\rho_q((a,b),s)$ and $\rho_q((a,b),(v,w),s)$ the cardinality of $F_q((a,b),s)$ and $F_q((a,b),(v,w),s)$, respectively. Most of the time the index $q$ will be omitted.
\end{definition}

\begin{figure}
\begin{algorithmic}[1]
\Procedure{VECA}{$A,B,V,S,M,q$: Non-negative integers with $A \leq B
  \leq S(q-1)$,
  $V \leq q-1$, $1 \leq S $, and $M \in \{ 1, \ldots , |F_q((A,B),(0,V),S)| \}$}
\If{$V>B$}
\State ${\mbox{VECA}}(A,B,V,S,M,q) \gets {\mbox{VECA}}(A,B,B,S,M,q)$
\Else 
\If{$S\neq 1$}
\State $\alpha \gets \max \{ A-V,0 \}$
\State $r \gets \rho_q((\alpha,B-V),S-1)$
\If{$M>r$}
\State ${\mbox{VECA}}(A,B,V,S,M,q) \gets {\mbox{VECA}}(A,B,V-1,S,M-r,q)$
\ElsIf{$M<r$}
\State ${\mbox{VECA}}(A,B,V,S,M,q) \gets$
\State \ \ \ \ \ \ $ ({\mbox{VECA}}(\alpha,B-V,q-1,S-1,M,q),V)$
\Else
\State $\theta_1 \gets \alpha {\mbox{ rem }} (q-1)$
\State $\theta_2 \gets (\alpha-\theta_1)/(q-1)$
\If{$\theta_2 < S-1$}
\State ${\mbox{VECA}}(A,B,V,S,M,q)\gets $
\State
\ \ \ \ \ \ $(\underbrace{q-1,\ldots,q-1}_{\theta_2},\theta_1,\underbrace{0,\ldots,0}_{S-\theta_2-2},V)$
\Else
\State ${\mbox{VECA}}(A,B,V,S,M,q)\gets (\underbrace{q-1,\ldots,q-1}_{\theta_2},V)$
\EndIf
\EndIf
\Else
\State ${\mbox{VECA}}(A,B,V,S,M,q) \gets (V-M+1)$
\EndIf
\EndIf
\EndProcedure
\end{algorithmic}
\caption{The recursive algorithm VECA.
We use the
notation $((\beta_1, \ldots , \beta_{\kappa-1}),\beta_\kappa)=(\beta_1, \ldots ,
\beta_{\kappa-1},\beta_\kappa)$ for concatenation.}\label{figveca} 
\end{figure}

\begin{theorem}\label{lem7}
Let $q$ be a fixed prime power and consider  non-negative integers $a,
b, v, s, m$ with 
\begin{equation*}
a \leq b \leq s(q-1), \, v
\leq q-1, \ 1 \leq s, {\mbox{ and }} m \in \{1, \ldots , | F_q((a,b),(0,v),s)|
\}.
\end{equation*}
If these
numbers are used as input to the procedure VECA in
Figure~\ref{figveca} then the output is the $m$-th element
$\vec{a}=(a_1, \ldots , a_s)$ of $F_q((a,b),(0,v),s)$ with respect to
the anti lexicographic ordering.
\end{theorem}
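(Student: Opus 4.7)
The plan is to prove correctness by induction on $S$, with an inner induction on $V$ (needed because the branch $M>r$ recurses at the same $S$ with $V$ replaced by $V-1$, while the branch $M<r$ recurses at $S-1$). First I would dispose of the preprocessing step on lines 2--4: when $V>B$, every $(a_1,\ldots,a_s)\in F_q((A,B),S)$ satisfies $a_s\le\sum_i a_i\le B<V$ automatically, so $F_q((A,B),(0,V),S)=F_q((A,B),(0,B),S)$ and reducing $V$ to $B$ is legitimate. The base case $S=1$ is then immediate: after the cap, $F_q((A,B),(0,V),1)=\{(a_1):\max(A,0)\le a_1\le V\}$, and anti lex on length-one tuples is just decreasing, so the $M$-th element is $(V-M+1)$, matching line~23.

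For the inductive step with $S>1$ and $V\le B$, I would partition $F_q((A,B),(0,V),S)$ according to the value of the last coordinate $a_s$. By the definition of $\antiprec$, elements with larger $a_s$ come first, so the block $\{a_s=V\}$ is the initial segment, followed by $\{a_s=V-1\}$, etc., and within each block the anti lex ordering on full tuples coincides with the anti lex ordering on $(a_1,\ldots,a_{s-1})$. A tuple has $a_s=V$ precisely when $(a_1,\ldots,a_{s-1})\in F_q((\alpha,B-V),(0,q-1),S-1)$ with $\alpha=\max(A-V,0)$, so the block has exactly $r=\rho_q((\alpha,B-V),S-1)$ elements. The three branches of the algorithm are now easy to justify. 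If $M>r$, the answer lies in a later block and so is the $(M-r)$-th element of $F_q((A,B),(0,V-1),S)$ by the inner induction on $V$. If $M<r$, the answer has $a_s=V$ and its first $S-1$ coordinates are, by the outer induction applied to the call \textsc{VECA}$(\alpha,B-V,q-1,S-1,M,q)$, the $M$-th element of $F_q((\alpha,B-V),(0,q-1),S-1)$ in anti lex; appending $V$ is then exactly line~12. If $M=r$, the answer is the last member of the block with $V$ appended.

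The main obstacle is justifying the explicit formula in the $M=r$ case, i.e.\ identifying the anti lex largest element $\vec{y}$ of $F_q((\alpha,B-V),(0,q-1),S-1)$. Writing $\alpha=\theta_2(q-1)+\theta_1$ with $0\le\theta_1<q-1$, I claim $\vec{y}=(\underbrace{q-1,\ldots,q-1}_{\theta_2},\theta_1,\underbrace{0,\ldots,0}_{S-\theta_2-2})$ when $\theta_2<S-1$, and $\vec{y}=(q-1,\ldots,q-1)$ of length $S-1$ when $\theta_2=S-1$ (in which case $M\le|F_q(\ldots)|$ together with $\alpha\le(S-1)(q-1)$ forces $\theta_1=0$). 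To prove this, I would take any competitor $\vec{z}$ with $\sum z_i\ge\alpha$ and examine the rightmost index $j$ at which $\vec{z}$ and $\vec{y}$ disagree. Assuming $z_j<y_j$ gives $\sum z_i\le(j-1)(q-1)+z_j+\sum_{i>j}y_i$, and a direct case analysis ($j\le\theta_2$, $j=\theta_2+1$, $j\ge\theta_2+2$) shows this upper bound is at most $\alpha-1$, contradicting $\sum z_i\ge\alpha$. Hence at the rightmost differing position we must have $z_j>y_j$, giving $\vec{z}\antipreceq\vec{y}$ and confirming that $\vec{y}$ is the anti lex largest element of the block. Appending $V$ then reproduces the output on lines 17--18 or line~20, completing the induction.
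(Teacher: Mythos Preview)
Your proof is correct and takes essentially the same approach as the paper's: both partition $F_q((A,B),(0,V),S)$ by the last coordinate, observe that the block $\{a_s=V\}$ of size $r$ comes first in anti lex, and handle the three branches $M>r$, $M<r$, $M=r$ by recursion, with your explicit double induction on $(S,V)$ playing the role of the paper's loop-invariant-plus-termination argument. Your case analysis on the rightmost differing index to identify the anti-lex-largest element in the $M=r$ branch is actually more detailed than the paper's treatment, which simply asserts the formula.
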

\begin{proof}
Consider the condition
\begin{itemize}
\item[C1:] $A, B,V,S,M$ are non-negative integers with 
$A \leq B \leq S(q-1)$, $V
\leq q-1$, $ 1 \leq S$ and $ M \in \{1, \ldots , | F_q((A,B),(0,V),S)|
\}$.
\end{itemize}
We first show that the following loop invariant holds true:

\begin{itemize}
\item If $V>B$ and $A,B,V,S,M$ satisfy Condition C1 then the elements of  $(\tilde{A}, \tilde{B}, \tilde{V},
  \tilde{S},\tilde{M})=(A,B,B,S,M)$ satisfy Condition C1.
\item If $V \leq B$, $S \neq 1$ and $A,B,V,S,M$ satisfy Condition C1 then:
\begin{itemize}
\item for $M>r$ the elements in  $(\tilde{A}, \tilde{B}, \tilde{V},
  \tilde{S},\tilde{M})=(A,B,V-1,S,M-r)$ satisfy Condition C1,
\item for $M <r$ the elements in $(\tilde{A}, \tilde{B}, \tilde{V},
  \tilde{S},\tilde{M})=(\alpha,B-V,q-1,S-1,M)$ satisfy Condition
  C1. Here $\alpha=\max \{A-V,0\}$.
\end{itemize}
\end{itemize}
Assume first $V >B$. We have $F_q((A,B),(0,V),S)=F_q((A,B),(0,B),S)$
and the result follows. Assume next $V\leq B$ and $S \neq 1$. We consider the case $M>r$ (line 8--9) and leave the case $M<r$ for the
reader. By inspection $\tilde{A} \leq \tilde{B} \leq \tilde{S}(q-1)$,
$\tilde{V} \leq q-1$, $1 \leq \tilde{S}$,  and $\tilde{A},
\tilde{B}, \tilde{S}, \tilde{M}$ are non-negative. Aiming for a
contradiction we assume that $V=0$ is possible (which would cause
$\tilde{V}$ to be negative). But then 
\begin{eqnarray}
r&=&\rho((\alpha,B-V),S-1)=\rho((A,B),S-1)\nonumber \\
&=&\rho((A,B),(0,0),S)=\rho((A,B),(0,V),S)\geq M\nonumber
\end{eqnarray}
where the inequality follows by the assumption that $A, B, V, S, M$
satisfy Condition C1. We have 
reached a contradiction. Hence, we conclude $0 < V$ and therefore $\tilde{V}$ is
non-negative. We next show that $\tilde{M}=M-r$ is in the desired
interval. Clearly $\tilde{M}=M-r \geq 1$. To demonstrate that
$\tilde{M} \leq | F_q((\tilde{A},\tilde{B}),(0,\tilde{V}),\tilde{S})|$
we note that 
\begin{eqnarray}
M &\leq& \rho((A,B),(0,V),S) \nonumber \\
&=&\rho((A,B),(0,V-1),S)+\rho((A,B),(V,V),S)\nonumber \\
&=&\rho((A,B),(0,V-1),S)+\rho((\alpha,B-V),S-1)\nonumber \\
&=&\rho((\tilde{A},\tilde{B}),(0,\tilde{V}),\tilde{S})+r \nonumber
\end{eqnarray}
and the last part of Condition C1 is established.\\
Let $(A_i,B_i,S_i,M_i)$ be the value of $(A,B,S,M)$ before entering
the loop the $i$-th time. The sequence $\big( (A_1, B_1,S_1,M_1),
(A_2, B_2,S_2,M_2), \ldots \big)$ is strictly decreasing with respect
to the partial ordering $\Ppreceq$, and as $A,B,S,M$ are upper bounded as
well as lower bounded the sequence must be finite, meaning that the
algorithm terminates.\\
We next give an induction proof that the algorithm returns the $M$-th
element of $F_q((A,B),(0,V),S)$ with respect to the anti lexicographic
ordering.\\

\noindent {\underline{Basis step:}}\\
First assume $V \leq B$, $S \neq 1$ and let $\theta_1$ and $\theta_2$
be as in line 14 and 15 of the algorithm. Observe that $\theta_2 \leq
S-1$ as $\theta_2=S$ would imply $V=0$ and consequently
$F_q((A,B),(0,V),S)=\emptyset$. This is not possible as
by Condition C1, $M \in \{1, \ldots , | F_q((A,B),(0,V),S) |
\}$. Consider the last element of
$F_q((\alpha,B-V),(V,V),S)$
i.e.\ $$(\underbrace{q-1,\ldots,q-1}_{\theta_2},\theta_1,\underbrace{0,\ldots,0}_{S-\theta_2-2},V)$$
if $\theta_2 < S-1$, and
$$(\underbrace{q-1,\ldots,q-1}_{\theta_2},V)$$
if $\theta_2=S-1$ (in which case $\theta_1=0$). 
This element is the $r$-th element of
$F_q((A,B),(0,V),S)$ where $r$ is as in line 7. Hence, if $M=r$ (lines
13--22) then indeed ${\mbox{VECA}}(A,B,V,S,M,q)$ equals the element in
position $M$ of $F_q((A,B),(0,V),S)$. \\
Assume next $V \leq B$ and $S=1$. We see that the $M$-th element of
$F_q((A,B),(0,V),1)$ equals $(V-(M-1))$ which corresponds
to line 24.\\

\noindent {\underline{Induction step:}}\\
If $V >B$ then as already noted $F_q((A,B),(0,V),S)=F_q((A,B),(0,B),S)$.
For $V \leq B$, $S \neq 1$ we next consider the two cases $M >r$ and $M<r$ separately.\\
We first consider $M>r$ corresponding to lines 8--9 of the
algorithm. We have 
$$\rho((A,B),(V,V),S)=\rho ((\alpha,B-V),(0,q-1),S-1)=r.$$
But $M > r$ and therefore the $M$-th element of $F_q((A,B),(0,V),S)$
equals the $(M-r)$-th element of $F_q((A,B),(0,V-1),S)$.\\
We next consider the case $M < r$. Using similar arguments as above we
see that the $M$-th element of $F_q((A,B),(0,V),S)$ is in
$F_q((A,B),(V,V),S)$. Therefore it equals $(\beta_1, \ldots ,
\beta_{S-1},V)$ where $(\beta_1, \ldots , \beta_{S-1})$ is the
$M$-th element of $F_q((\alpha,B-V),(0,q-1),S-1)$.\\

\noindent The proof is complete.
\end{proof}

Note that for our purpose (that is, to answer Q1), the input $V$ in
the algorithm VECA shall always be equal to $q-1$. The procedure VECA in Figure~\ref{figveca} uses the value
$\rho_q((A,B),S)$ for various choices of $A, B, S$. We
therefore need an algorithm to compute this number.

\begin{lemma}\label{lem120}
Let $q$ be a prime power and consider integers $a,b,s$ with $0 \leq a
\leq b \leq s(q-1)$ and $s \geq 1$. We have
$$\rho_q((a,b),s) = \sum_{i=a}^b \sum_{j=0}^{\lfloor i / q \rfloor} (-1)^j \binom{s}{j} \binom{s-1+i-qj}{s-1}.$$
\end{lemma}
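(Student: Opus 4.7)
The plan is to unfold the definition of $\rho_q((a,b),s)$ and then count the fibers of the degree function by a standard stars-and-bars plus inclusion-exclusion argument.

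First I would note that by Definition~\ref{defi7} together with Definition~\ref{deffw},
$$\rho_q((a,b),s) = \sum_{i=a}^{b}\bigl|\{(a_1,\ldots,a_s) \in \mathbb{N}_0^s : 0 \le a_t \le q-1 \text{ for all }t,\ a_1+\cdots+a_s = i\}\bigr|,$$
so that it suffices, for each fixed $i$ in the range $a \le i \le b$, to prove
$$\bigl|\{(a_1,\ldots,a_s) \in \mathbb{N}_0^s : 0 \le a_t \le q-1,\ \textstyle\sum a_t = i\}\bigr| = \sum_{j=0}^{\lfloor i/q\rfloor} (-1)^j \binom{s}{j} \binom{s-1+i-qj}{s-1}.$$

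The inner identity is pure combinatorics. Without the upper bound $a_t \le q-1$, the number of non-negative integer solutions to $a_1+\cdots+a_s = i$ is the classical stars-and-bars count $\binom{s-1+i}{s-1}$. To enforce the upper bound I would apply inclusion-exclusion on the "bad events" $B_t = \{a_t \ge q\}$, $t=1,\ldots,s$. For a fixed subset $T \subseteq \{1,\ldots,s\}$ of size $j$, the substitution $a'_t = a_t - q$ for $t \in T$ turns the problem into counting non-negative solutions of $\sum a'_t = i - qj$, which gives $\binom{s-1+i-qj}{s-1}$ when $i - qj \ge 0$ and zero otherwise. Since there are $\binom{s}{j}$ choices for $T$ of size $j$, inclusion-exclusion yields the displayed formula, where the summation index is cut off at $\lfloor i/q\rfloor$ precisely because larger $j$ contribute $0$.

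Substituting this fiber count into the sum over $i$ from $a$ to $b$ produces the claimed expression for $\rho_q((a,b),s)$. There is no real obstacle here; the only mild subtlety is justifying that the upper summation limit $\lfloor i/q \rfloor$ (rather than $s$) is harmless, which follows because $\binom{s-1+i-qj}{s-1} = 0$ as soon as $i - qj < 0$, so truncating the inclusion-exclusion sum at $\lfloor i/q\rfloor$ does not change its value.
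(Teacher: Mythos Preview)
Your argument is correct. The decomposition $\rho_q((a,b),s)=\sum_{i=a}^{b}|\{\vec a\in Q_q^s:\deg(\vec a)=i\}|$ is the same one the paper uses implicitly, but from there the two proofs diverge: the paper rewrites this as $\dim RM_q(b,s)-\dim RM_q(a-1,s)$ and then \emph{cites} the known dimension formula for $q$-ary Reed--Muller codes (S{\o}rensen; Tsfasman--Vl\u{a}du\c{t}), whereas you derive the inner count directly by stars-and-bars plus inclusion--exclusion on the events $a_t\ge q$. Your route is more elementary and fully self-contained; the paper's route is shorter on the page but relies on an external reference that, unpacked, is exactly the inclusion--exclusion computation you carry out. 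Neither approach has any advantage in strength---they yield the same identity---so the choice is purely expository.
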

\begin{proof}
We rewrite the first expression as follows
\begin{eqnarray}
\rho_q((a,b),s)& = &|F_q((a,b),s)| = |W_q((a,b),s)| = |W_q((0,b),s)
\backslash W_q((0,a-1),s)|\nonumber \\ 
&=& |W_q(0,b),s)| - |W_q((0,a-1),s)|\nonumber \\
& =& \dim(RM_q(b,s)) -
\dim(RM_q(a-1,s)).\nonumber
\end{eqnarray}
By \cite{sorensen1991projective} and by Exercise 1.2.8 of
\cite{tsfasman1991algebraic} we have that $$\dim(RM_q(u,s)) =
\sum_{i=0}^u \sum_{j=0}^{\lfloor i / q \rfloor} (-1)^j \binom{s}{j}
\binom{s-1+i-qj}{s-1}$$ and the proof follows. 
\end{proof}

\begin{figure}
\begin{algorithmic}[1]
\Procedure{rho}{$a,b,s,q$: Non-negative integers with $0 \leq a \leq
  b\leq s(q-1)$ and $1 \leq s$.}
\State $sum \gets 0$
\For{$i:=a, \ldots , b$}{}
\For{$j:=0, \ldots , \lfloor i/q \rfloor$}{}
\State $sum \gets sum + (-1)^j \binom{s}{j} \binom{s-1+i-qj}{s-1}$
\EndFor
\EndFor
\State \textbf{return} sum
\EndProcedure
\end{algorithmic}
\caption{The algorithm RHO.}\label{figrho}
\end{figure}

\begin{theorem}
Let $q$ be a prime power and consider $a, b, s$ as in Lemma~\ref{lem120}. If the procedure RHO (see
Figure~\ref{figrho}) is used with input $a,b,s,q$ then it returns $\rho_q((a,b),s)$.
\end{theorem}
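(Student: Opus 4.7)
The plan is essentially a matter of reading off the pseudocode: the procedure RHO in Figure~\ref{figrho} is a direct transcription of the closed-form expression supplied by Lemma~\ref{lem120}. The outer loop iterates $i$ from $a$ to $b$ inclusive, matching the outer summation; the inner loop iterates $j$ from $0$ to $\lfloor i/q\rfloor$ inclusive, matching the inner summation; and at each inner iteration the quantity $(-1)^j\binom{s}{j}\binom{s-1+i-qj}{s-1}$ is added to the accumulator $sum$, which is initialized to $0$. No other side effects occur.

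To make this rigorous I would formalize the obvious loop invariant. After the outer-loop iteration with index $i = k$ has completed (with the inner loop exhausted), the invariant
\[
sum \;=\; \sum_{i=a}^{k}\sum_{j=0}^{\lfloor i/q\rfloor} (-1)^j \binom{s}{j} \binom{s-1+i-qj}{s-1}
\]
holds. The base case, taken as the state immediately before entering the outer loop, reduces to the empty sum equal to $0$, matching the initialization in line~2. The inductive step follows from a similar invariant for the inner loop (after its iteration with index $j = \ell$, the running sum has acquired the partial inner sum from $j = 0$ up to $\ell$), which in turn is a textbook correctness argument for an accumulator loop.

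Combining these invariants at termination ($k = b$) shows that the returned value equals the double sum on the right-hand side of Lemma~\ref{lem120}, and by that lemma this common value is $\rho_q((a,b),s)$. There is no real obstacle in this argument; all the mathematical content sits inside Lemma~\ref{lem120}, and the verification of RHO is purely syntactic pattern-matching between the summation bounds in the formula and the loop bounds in the pseudocode.
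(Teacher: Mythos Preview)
Your proposal is correct and takes essentially the same approach as the paper: the paper's entire proof is the single line ``By Lemma~\ref{lem120},'' so your loop-invariant elaboration is a more detailed version of the same observation that RHO transcribes the double sum from that lemma.
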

\begin{proof}
By Lemma~\ref{lem120}.
\end{proof}

Assuming that VECA calls RHO, we can now estimate its time complexity.

\begin{lemma}\label{lemendelig}
The number of binary operations needed to run RHO with input
$a,b,s,q$ is
$${\mathcal{O}}\bigg(\frac{bc}{q} \max \{ s \log q, (s+b)^2 \log^2 (s+b)\}\bigg)$$
where $c=b-a$.
\end{lemma}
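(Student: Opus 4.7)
The plan is to decompose the bit-complexity of RHO as (number of inner-loop iterations) times (bit cost of the arithmetic performed per iteration).

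To count iterations, I would observe that the outer loop has $c+1$ rounds ($i=a,\ldots,b$), and for each fixed $i$ the inner loop has $\lfloor i/q\rfloor+1\le b/q+1$ rounds. Multiplying these gives $O(bc/q)$ passes through line 5 (boundary cases with small $b/q$ are absorbed in the big-$O$ by the usual $+c$ versus $bc/q$ comparison). Each pass evaluates two binomial coefficients, forms a signed product, and adds the result to the running variable $sum$.

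For the cost of the arithmetic inside one pass, I would bound bit-lengths as follows. Since $j\le\lfloor b/q\rfloor\le s$ we have $\binom{s}{j}\le 2^s$, so it has bit length $O(s)$; by generating these values incrementally via $\binom{s}{j}=\binom{s}{j-1}(s-j+1)/j$, their update cost is negligible compared to the main multiplication. The second binomial satisfies $\binom{s-1+i-qj}{s-1}\le\binom{s-1+b}{s-1}\le(s+b)^{s-1}$, so it fits in $O((s+b)\log(s+b))$ bits and can likewise be produced from a running prefix. Hence the schoolbook product of the two binomials costs $O((s+b)^2\log^2(s+b))$ bit operations. For the addition step, the variable $sum$ can be maintained in $O(\max\{s\log q,\,(s+b)\log(s+b)\})$ bits — the first bound because the final value $\rho_q((a,b),s)\le q^s$, the second as an a-priori bound on the signed partial sums — so one addition takes at most $O(\max\{s\log q,\,(s+b)\log(s+b)\})$ bit operations.

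Combining, the per-iteration cost is $O(\max\{s\log q,\,(s+b)^2\log^2(s+b)\})$, and multiplying by the $O(bc/q)$ iteration count yields the stated bound. The main technical point is the bit-length estimate on the large binomial $\binom{s-1+b}{s-1}$, which propagates quadratically through schoolbook multiplication; one must also verify that any incremental scheme used to compute the two binomials across iterations contributes only lower-order terms, so that the multiplication of the two large binomials remains the dominant step.
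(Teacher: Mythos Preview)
Your proposal is correct and follows the same (iterations)$\times$(per-iteration cost) decomposition as the paper. The differences are in bookkeeping only: the paper obtains the $(s+b)^2\log^2(s+b)$ term by invoking a bound from Koblitz on computing a single binomial $\binom{m}{n}$ in $O(m^2\log^2 n)$ bit operations, and then argues that the subsequent product of the two binomials is cheaper by appealing to fast (quasi-linear) integer multiplication; you instead treat the binomials as cheap to produce via a running product and charge the same $(s+b)^2\log^2(s+b)$ to a schoolbook multiplication of the two resulting values. Both routes identify the identical dominant term. Your handling of the addition is in fact slightly more careful than the paper's---the paper asserts the running sum never exceeds $q^s$, which is true of the final value $\rho_q((a,b),s)$ but not necessarily of the intermediate signed partial sums---though this discrepancy does not affect the stated bound since the addition cost is in any case dominated by the quadratic term.
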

\begin{proof}
There are at most $b(b-a)/q$ loop runs in each of which we calculate two binomial
coefficients, perform one multiplication and one addition. According
to~\cite[Example 8, p.\ 7]{koblitz} the number of binary operations
needed to calculate ${{m} \choose {n}}$ is ${\mathcal{O}}
(m^2\log^2n)$. The highest possible $m$ in the algorithm is $N=s+b$
giving at most ${\mathcal{O}}(N^2\log^2N)$ operations for that
task. The multiplication takes place between two numbers no larger than $(s+b)!$
which is ${\mathcal{O}}(N^N)$. As is well-known multiplication of
positive integers $A \geq B$ can be done in
${\mathcal{O}}(\log A\log \log A \log \log \log A)$ binary
operations. In our case this becomes
${\mathcal{O}}(N\log^2N\log\log N)$ which is better than
${\mathcal{O}}(N^2 \log^2N)$. Finally the addition takes place between
numbers equal to $q^s$ at most. Hence, ${\mathcal{O}}(s \log q)$
operations are needed for that part.
\end{proof}
\begin{proposition}
The number of binary operations needed to perform VECA with input $a,
b, v, s, M, q$ is
$${\mathcal{O}}\bigg(sb^2\max\{ s \log q, (s+b)^2\log^2(s+b)\}\bigg).$$
\end{proposition}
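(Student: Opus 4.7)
The plan is to bound separately (a) the total number of recursive invocations VECA makes on input $(a,b,v,s,M,q)$, and (b) the cost of the work performed inside a single invocation (excluding the recursive call). Everything done inside one invocation other than the call to RHO on line~7 amounts to a constant number of comparisons, additions, divisions, and concatenations on integers of size at most $q^{s}$, which is negligible; so the per-invocation cost will be read off from Lemma~\ref{lemendelig}, and multiplying the two bounds should give the claim.

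For (a), I would introduce the potential $\Phi(A,B,V,S,M):=Sq+V$ and verify branch by branch that $\Phi$ strictly decreases at every recursive call. In the normalising branch $V>B$ (line~3), $V$ is replaced by $B<V$. In the branch $M>r$ (line~9), $V$ drops by exactly~$1$. In the branch $M<r$ (lines~11--12), $S$ drops by~$1$ while $V$ is reset to $q-1$, for a net change of $-q+((q-1)-V)\le -1$. The branch $M=r$ terminates. Since the initial potential is at most $sq+(q-1)$ and $\Phi\ge 0$ throughout, this yields $\mathcal{O}(sq)$ recursive invocations in total.

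For (b), I would observe that throughout the recursion the parameters passed to RHO on line~7 satisfy $B\le b$ and $S\le s$, and the range $B-V-\alpha$ is bounded by $b$. Lemma~\ref{lemendelig} then gives a per-call RHO cost of $\mathcal{O}\bigl((b^{2}/q)\,\max\{s\log q,(s+b)^{2}\log^{2}(s+b)\}\bigr)$. Combining this with the $\mathcal{O}(sq)$ invocation count, the factor $q$ cancels and produces exactly $\mathcal{O}\bigl(sb^{2}\max\{s\log q,(s+b)^{2}\log^{2}(s+b)\}\bigr)$, as claimed.

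The main obstacle I expect is verifying the potential argument in the $M<r$ branch, where $V$ is reset upwards to $q-1$: one must check that the simultaneous decrement $S\to S-1$ (contributing $-q$ to $\Phi$) really dominates the upward jump in $V$, uniformly over all reachable states. Once that branch is controlled, the remaining branches together with the uniform bounds $B\le b$, $S\le s$ on RHO's parameters reduce the analysis to routine bookkeeping on top of Lemma~\ref{lemendelig}.
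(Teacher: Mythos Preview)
Your proposal is correct and follows essentially the same two-step decomposition as the paper: bound the number of calls to RHO by $\mathcal{O}(sq)$, bound each RHO call by Lemma~\ref{lemendelig} with parameters at most $b$ and $s$, and let the factor $q$ cancel. The only cosmetic difference is that the paper obtains the $\mathcal{O}(sq)$ bound by observing that if the output is $(q-g_1,\ldots,q-g_s)$ then RHO is invoked at most $g_1+\cdots+g_s\le sq$ times, whereas you reach the same bound via the potential $\Phi=Sq+V$; these are equivalent formulations of the same count.
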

\begin{proof}
If the output of VECA is $(q-g_1, \ldots ,q-g_s)$ then in the worst
case RHO is called $g_1+ \cdots +g_s$ times. Hence, in the worst case
VECA calls RHO $sq$ times. In each call the first input of RHO is lower
bounded by $0$, the second is upper bounded by $b$, and the third is
upper bounded by $s$. The result now follows from Lemma~\ref{lemendelig}.
\end{proof}

\begin{example} \label{exAAA}
We use the algorithm VECA in Figure~\ref{figveca} to find the $34$-th
element $\vec{a}=(a_1, \ldots , a_7)$ of $F_7((20,22),7)$. The
procedure takes as input $(A,B,V,S,M) = (20,22,6,7,34)$. The notation
$\tilde{A}$, $\tilde{B}$, $\tilde{V}$, $\tilde{S}$, $\tilde{M}$ is as
in the proof of Theorem~\ref{lem7}.\\

\noindent {\underline{$(A,B,V,S,M)=
(20,22,6,7,34)$:}}\\
$\rho_7((14,16),6) = 23415 > 34$ (lines 10--12). Thus $a_7 = 6$,
$\tilde{A} = \max\{0,20-6\} = 14$, $\tilde{B} = 22 - 6 = 16$,
$\tilde{V} = q-1 = 6$ and $\tilde{S} = 7-1 = 6$.\\

\noindent {\underline{$(A,B,V,S,M)=
(14,16,6,6,34)$:}}\\
$\rho_7((8,10),5) = 1936 > 34$ (lines 10--12). Thus $a_6 = 6$, $\tilde{A}
= \max\{0,14-6\} = 8$, $\tilde{B} = 16 - 6 = 10$, $\tilde{V} = q-1 =
6$ and $\tilde{S} = 6-1 = 5$.\\

\noindent {\underline{$(A,B,V,S,M)= 
(8,10,6,5,34)$:}}\\ 
$\rho_7((2,4),4) = 64 > 34$ (lines 10--12). Thus $a_5 = 6$, $\tilde{A} =
\max\{0,8-6\} = 2$, $\tilde{B} = 10 - 6 = 4$, $\tilde{V} = q-1 = 6$
and $\tilde{S} = 5-1 = 4$.\\

\noindent {\underline{$(A,B,V,S,M)= 
(2,4,6,4,34)$:}}\\
$6 > 4$ (lines 2--3). Thus $\tilde{V} = B = 4$.\\

\noindent {\underline{$(A,B,V,S,M)=
(2,4,4,4,34)$:}}\\
$\rho_7((0,0),3) = 1 < 34$ (lines 8--9). Thus $\tilde{M} = 34 - 1 = 33$
and $\tilde{V} = 4-1 = 3$.\\

\noindent {\underline{$(A,B,V,S,M)=
(2,4,3,4,33)$:}}\\ 
$\rho_7((0,1),3) = 4 < 33$ (lines 8--9). Thus $\tilde{M} = 33 - 4 = 29$
and $\tilde{V} = 3-1 = 2$.\\

\noindent {\underline{$(A,B,V,S,M)=
(2,4,2,4,29)$:}}\\
$\rho_7((0,2),3) = 10 < 29$ (lines 8--9). Thus $\tilde{M} = 29 - 10 = 19$
and $\tilde{V} = 2-1 = 1$.\\

\noindent {\underline{$(A,B,V,S,M)=
(2,4,1,4,19)$:}}\\
$\rho_7((1,3),3) = 19 = 19$ (lines 13--17). We have $\theta_1 = 1$ and $\theta_2 = 0$, thus $(a_1,a_2,a_3,a_4) = (1,0,0,1)$ and the algorithm ends.\\

\noindent In conclusion the $34$-th element of $F_7((20,22),7)$ is $(a_1,a_2,a_3,a_4,a_5,a_6,a_7) = (1,0,0,1,6,6,6)$.
\end{example}

Having answered question Q1 from the beginning of the section we
now turn to question Q2. Given $\vec{a} \in F_q((a,b),s)$ we need a method to
determine what are the corresponding positions $r$ and $t$ in $F_q((0,b),s)$ and $Q^s_q$, respectively. The following proposition tells us how to find
$r$. This is done by applying the formula~(\ref{hejduder}) in there in
combination with the algorithm RHO.

\begin{proposition} \label{lem8}
The element $\vec{a} = (a_1,\ldots,a_s) \in F_q((a,b),s)$ is the $r$-th
element of $F_q((a,b),s)$ with respect to the anti lexicographic ordering, where 
$$r = \sum_{j=0}^{s-1} \sum_{i=0}^{q-a_{s-j}-2}
\rho_q((\max\{0,a-\sum_{t=0}^{j}a_{s-t}-i-1\},b-\sum_{t=1}^{j}a_{s-t}-i-1),s-j-1)
+ 1.$$
In particular if $a=0$ then
\begin{equation}
r = \sum_{j=0}^{s-1} \sum_{i=0}^{q-a_{s-j}-2}
\rho_q((0,b-\sum_{t=1}^{j}a_{s-t}-i-1),s-j-1) + 1. \label{hejduder}
\end{equation}
\end{proposition}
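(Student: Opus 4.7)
The plan is to compute $r$ by counting the elements $\vec{b} \in F_q((a,b),s)$ strictly preceding $\vec{a}$ in the anti lexicographic order and then adding $1$. I would partition these predecessors according to the length of their longest suffix that agrees with the corresponding suffix of $\vec{a}$: by Definition~\ref{defi4}, every $\vec{b}$ with $\vec{b} \antiprec \vec{a}$ falls into a unique class indexed by some $j \in \{0,1,\ldots,s-1\}$, namely the class characterized by
\begin{equation*}
b_s = a_s,\ b_{s-1} = a_{s-1},\ \ldots,\ b_{s-j+1} = a_{s-j+1},\ \text{and}\ b_{s-j} > a_{s-j}.
\end{equation*}
These classes are pairwise disjoint and their union is exactly $\{\vec{b} \in F_q((a,b),s) : \vec{b} \antiprec \vec{a}\}$, so no inclusion--exclusion is needed.

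For a fixed $j$, I would parameterize the strictly-larger coordinate by $i \in \{0,1,\ldots, q-a_{s-j}-2\}$, writing $b_{s-j} = a_{s-j}+1+i$. The remaining coordinates $b_1,\ldots,b_{s-j-1}$ then range freely in $\{0,\ldots,q-1\}$ subject only to the global degree condition $a \leq \deg(\vec{b}) \leq b$. Letting $\sigma$ denote the contribution of the fixed block $b_{s-j},b_{s-j+1},\ldots,b_s$ to $\deg(\vec{b})$, this degree constraint becomes
\begin{equation*}
a - \sigma \leq \sum_{k=1}^{s-j-1} b_k \leq b - \sigma.
\end{equation*}
Since $\sum_{k=1}^{s-j-1} b_k \geq 0$ automatically, the lower bound may be replaced by $\max\{0, a-\sigma\}$, and by Definition~\ref{defi7} the number of admissible $(b_1,\ldots,b_{s-j-1})$ is precisely $\rho_q\bigl((\max\{0, a-\sigma\}, b-\sigma), s-j-1\bigr)$. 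Re-indexing the frozen tail $a_{s-j+1},\ldots,a_s$ via $t=s-u$ rewrites $\sigma$ in terms of the sums $\sum_{t=0}^{j}a_{s-t}$ (or equivalently $\sum_{t=1}^{j}a_{s-t}$ once the separately-tracked $a_{s-j}+1+i$ contribution is factored out), which produces the $\rho_q$ factors asserted in the proposition. Summing over $(j,i)$ and adding $1$ for $\vec{a}$ itself yields the formula, and the specialization $a=0$ is immediate because then $a-\sigma \leq 0$ and the $\max$ collapses to $0$.

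The main obstacle, in my view, is purely combinatorial bookkeeping: keeping careful track of which tail-sum appears where, and verifying the boundary case $j=s-1$ in which no free coordinates remain and $\rho_q$ is evaluated at its third argument $0$. At that boundary one must check that $\rho_q((\alpha,\beta),0)$ equals the indicator of whether $0 \in [\alpha,\beta]$, which is consistent both with Definition~\ref{defi7} (since $Q_q^0 = \{()\}$ has a single element of degree $0$) and with the binomial-coefficient identity of Lemma~\ref{lem120}. Beyond this, once the partition and parameterization are in place the enumeration is routine, and the final $+1$ simply accounts for $\vec{a}$'s own place in the ordering.
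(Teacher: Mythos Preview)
Your proposal is correct and follows essentially the same approach as the paper. The paper's proof partitions $\{\vec{b}\in F_q((a,b),s):\vec{b}\antipreceq\vec{a}\}$ in exactly the same way (according to the rightmost coordinate where $\vec{b}$ exceeds $\vec{a}$), obtaining first a sum of terms of the form $\rho_q((\cdot,\cdot),(a_{s-j}+1,q-1),s-j)$, and then invokes a separate Lemma~\ref{lemsyv} to expand each such term as a sum over $i$ of $\rho_q((\cdot,\cdot),s-j-1)$; your direct parameterization $b_{s-j}=a_{s-j}+1+i$ simply fuses these two steps into one, so the arguments are the same in substance.
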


\begin{proof}
We must count the number of elements $\vec{b}=(b_1, \ldots , b_s)$ in
$F_q((a,b),s)$ which are smaller than or equal to $\vec{a}$ with respect to
the anti lexicographic ordering. This number equals
\begin{eqnarray}
r& =& |\{\vec{b} \in F_q((a,b),s) : \vec{b} \antipreceq \vec{a}\}|
\nonumber \\
&=& |\{\vec{b} \in F_q((a,b),s) : b_s > a_s\}| + |\{\vec{b} \in F_q((a,b),s)
: \vec{a} \antipreceq \vec{b}, b_s = a_s\}|\nonumber \\
&=& \rho((a,b),(a_s+1,q-1),s) + |\{\vec{b} \in F_q((a,b),s) : b_{s-1} >
a_{s-1}, b_s = a_s\}| \nonumber \\
&&+ |\{\vec{b} \in F_q((a,b),s) : \vec{a} \antipreceq \vec{b}, b_{s-1} =
a_{s-1}, b_s = a_s\}| \nonumber \\
&=&  \rho((a,b),(a_s+1,q-1),s) +
\rho((\max\{0,a-a_s\},b-a_s),(a_{s-1}+1,q-1),s-1) \nonumber \\
&&+ |\{\vec{b} \in F_q((a,b),s) : \vec{a} \antipreceq \vec{b}, b_{s-1} =
a_{s-1}, b_s = a_s\}| \nonumber \\
&=&\cdots  \nonumber \\
&=& \sum_{j=0}^{s-1}
\rho((\max\{0,a-\sum_{t=0}^{j-1}a_{s-t}\},b-\sum_{t=0}^{j-1}a_{s-t}),(a_{s-j}+1,q-1),s-j)
+ |\{\vec{a}\}| \nonumber \\
&= &\sum_{j=0}^{s-1} \rho((\max\{0,a-\sum_{t=0}^{j-1}a_{s-t}\},b-\sum_{t=0}^{j-1}a_{s-t}),(a_{s-j}+1,q-1),s-j) + 1.\nonumber
\end{eqnarray}
By the below Lemma \ref{lemsyv}, for $j = 0,\ldots,s-1$ we have
\begin{eqnarray}
&&\rho((\max\{0,a-\sum_{t=0}^{j-1}a_{s-t}\},b-\sum_{t=0}^{j-1}a_{s-t}),(a_{s-j}+1,q-1),s-j)\nonumber \\
&=& \sum_{i=0}^{q-a_{s-j}-2}
  \rho((\max\{0,a-\sum_{t=0}^{j}a_{s-t}-i-1\},b-\sum_{t=1}^{j}a_{s-t}-i-1),s-j-1)\nonumber 
\end{eqnarray}
and the proof is complete.
\end{proof}

\begin{lemma} \label{lemsyv}
Given a prime power $q$, let $0 \leq a \leq b \leq s(q-1)$  and $0 \leq v \leq w < \min\{b,q\}$
be integers. Then $\rho_q((a,b),(v,w),s) = \sum_{i=0}^{w-v} \rho_q((\max\{0,a-v-i\},b-v-i),s-1)$.
\end{lemma}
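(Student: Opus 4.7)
The plan is to partition $F_q((a,b),(v,w),s)$ according to the value of the last coordinate. Since every element $(a_1,\ldots,a_s)$ of this set satisfies $v \leq a_s \leq w$, writing $a_s = v + i$ for $i \in \{0,1,\ldots,w-v\}$ produces a disjoint decomposition into $w - v + 1$ pieces. Summing cardinalities reduces the claim to showing that the piece with $a_s = v + i$ has size $\rho_q((\max\{0,a-v-i\},b-v-i),s-1)$.

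For each fixed $i$ I would exhibit a bijection from the piece with $a_s = v + i$ onto $F_q((\max\{0,a-v-i\},b-v-i),s-1)$ by the projection $(a_1,\ldots,a_s)\mapsto(a_1,\ldots,a_{s-1})$. Substituting $a_s = v + i$ into the degree inequality $a \leq a_1+\cdots+a_s \leq b$ gives $a-v-i \leq a_1+\cdots+a_{s-1} \leq b-v-i$, and because the partial sum is automatically non-negative, the lower bound sharpens to $\max\{0,a-v-i\}$, which is precisely the defining condition of the target set. Injectivity is immediate, and surjectivity follows by appending $v+i$ back as the last coordinate, so the bijection is established.

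There is no serious obstacle; the only points that require care are sanity checks on the new bounds. First, the hypothesis $w < q$ ensures $v + i \leq w \leq q-1$, so the last coordinate really is a legal entry of $Q^s_q$; second, the hypothesis $w < b$ gives $b - v - i \geq b - w > 0$, so in combination with $a \leq b$ the pair $(\max\{0,a-v-i\},b-v-i)$ is a valid argument for $\rho_q(\cdot,\cdot,s-1)$. (If $b-v-i$ happens to exceed $(s-1)(q-1)$ the upper bound is simply inactive, which is consistent with the convention already used elsewhere in the paper.) Once these checks are noted, the identity is obtained by summing the sizes of the $w-v+1$ pieces, and the lemma is really just a bookkeeping exercise.
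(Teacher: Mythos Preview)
Your proposal is correct and follows essentially the same approach as the paper: both partition $F_q((a,b),(v,w),s)$ according to the value $a_s=v+i$ of the last coordinate and identify each piece with $F_q((\max\{0,a-v-i\},b-v-i),s-1)$ via the obvious projection. Your write-up is slightly more explicit about the bijection and the sanity checks on the new bounds, but the underlying argument is identical.
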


\begin{proof}
\begin{eqnarray}
\rho((a,b),(v,w),s) &= &|F_q((a,b),(v,w),s)| \nonumber \\
&=&  |\{(a_1,\ldots,a_s) \in F_q((a,b),s):  v \leq a_s \leq w\}|
\nonumber \\
&=&|\bigcup_{i=0}^{w-v} \{(a_1,\ldots,a_s) \in F_q((a,b),s):  a_s =
v+i\}|\nonumber \\
& =& |\bigcup_{i=0}^{w-v} F_q((a,b),(v+i,v+i),s)| \nonumber \\
& =& \sum_{i=0}^{w-v} \rho((a,b),(v+i,v+i),s)\nonumber \\
& = &\sum_{i=0}^{w-v} \rho((\max\{0,a-v-i\},b-v-i),s-1).\nonumber 
\end{eqnarray}
\end{proof}

Setting $a=0$ and $b=s(q-1)$ in Proposition~\ref{lem8} we could of
course compute the $t$ such that $\vec{a}$ is the $t$-th element of
$Q_q^s$, but with the following reformulation of Lemma~\ref{lem8} we
can calculate it much easier.

\begin{lemma} \label{lem9}
The element $(a_1,\ldots,a_s) \in Q^s_q$ is the $t$-th element of $Q^s_q$
with respect to the anti lexicographic ordering where 
$$t = q^s - \sum_{i=1}^s a_i q^{i-1}.$$
\end{lemma}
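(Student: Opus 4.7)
The plan is to directly count, for the element $\vec{a}=(a_1,\ldots,a_s) \in Q^s_q$, the number of elements $\vec{b} \in Q^s_q$ satisfying $\vec{b} \antipreceq \vec{a}$; by Definition~\ref{defnnn}-style reasoning this count is exactly the position $t$ of $\vec{a}$ in $Q^s_q$ under the anti lexicographic ordering.

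To carry out the count I would partition the elements $\vec{b}$ strictly smaller than $\vec{a}$ according to the unique index $j \in \{1,\ldots,s\}$ determined by the definition of $\antiprec$: namely, the largest $j$ such that $b_s=a_s,\ b_{s-1}=a_{s-1},\ldots,b_{j+1}=a_{j+1}$ and $b_j > a_j$. For a fixed such $j$, the coordinate $b_j$ can independently take any of the $q-1-a_j$ values strictly greater than $a_j$ in $\{0,1,\ldots,q-1\}$, while the coordinates $b_1,\ldots,b_{j-1}$ are entirely unconstrained and so contribute a factor $q^{j-1}$. Adding $1$ to account for $\vec{a}$ itself then yields
$$t \;=\; 1 + \sum_{j=1}^{s}(q-1-a_j)\,q^{j-1}.$$

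The remaining step is a one-line geometric-series simplification: since $(q-1)\sum_{j=1}^{s} q^{j-1} = q^s-1$, we obtain
$$t \;=\; 1 + (q^s-1) - \sum_{j=1}^{s} a_j\,q^{j-1} \;=\; q^s - \sum_{i=1}^{s} a_i\,q^{i-1},$$
which is the claimed formula. There is no substantive obstacle here; the argument is elementary counting combined with a geometric sum. As an alternative derivation worth noting, the lemma can be obtained as an immediate consequence of Lemma~\ref{lemP1}: the bijection $\mu$ (with the properties listed right before Theorem~\ref{teo3}) converts the anti lexicographic position of $\vec{a}$ in $Q^s_q$ into the lexicographic position of $\mu(\vec{a})=(q-1-a_s,\ldots,q-1-a_1)$, and applying Lemma~\ref{lemP1} to $\mu(\vec{a})$ gives $t-1 = \sum_{i=1}^{s}(q-1-a_i)q^{i-1}$, producing the same simplification.
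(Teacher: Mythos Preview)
Your proposal is correct. The primary direct-counting argument you give is sound: partitioning the elements $\vec{b}\antiprec\vec{a}$ by the rightmost index $j$ at which $b_j>a_j$ yields exactly $\sum_{j=1}^{s}(q-1-a_j)q^{j-1}$ strict predecessors, and the geometric-series simplification is routine.

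This is a slightly different route from the paper's. The paper does not count directly; instead it proves the lemma precisely via the alternative you sketch at the end: it applies the bijection $\mu$ to convert the anti lexicographic position of $\vec{a}$ into the lexicographic position of $\mu(\vec{a})=(q-1-a_s,\ldots,q-1-a_1)$ in $Q^s_q$, then invokes Lemma~\ref{lemP1} to read off $t-1=\sum_{i=1}^s(q-1-a_i)q^{i-1}$. Your direct count has the advantage of being self-contained (it does not rely on Lemma~\ref{lemP1} or on the properties of $\mu$), while the paper's approach reuses machinery already established earlier in the section. Both arguments are equally elementary and of comparable length.
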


\begin{proof}
Recall from Section~\ref{secthree} the map $\mu: Q^s_q \rightarrow Q^s_q$, 
$\mu(a_1,\ldots,a_s) = (q-1-a_s,\ldots,q-1-a_1)$. By Lemma \ref{lemP1}
$\mu(a_1,\ldots,a_s) = (q-1-a_s,\ldots,q-1-a_1)$ is the $t$ element of
$Q^s_q$ using the lexicographic ordering where
$t-1 = \sum_{i=1}^s (q-1-a_i)q^{i-1} = q^s - 1 - \sum_{i=1}^s a_i
q^{i-1}$. Recall from Section~\ref{secthree} that
$\vec{c} \antiprec \vec{d} \iff \mu(\vec{c}) \lexprec \mu(\vec{d})$. 
Therefore $(a_1,\ldots,a_s)$ is the $t$-th element of $Q^s_q$ using the anti lexicographic ordering.
\end{proof}

Summarizing this section: to find the $m$-th RGHW of $C_1 =
RM_q(u_1,s)$ with respect to $C_2 = RM_q(u_2,s)$, we perform the following steps.
\begin{enumerate}
\item Find the $m$-th element $(a_1,\dots,a_s)$ of
  $F_q((u_2+1,u_1),s)$ by using the algorithm VECA in
  Theorem~\ref{lem7} with input $A=u_2+1$, $B=u_1$, $V=q-1$, $S=s$,
  and $M=m$.
\item Find the $r$-th position of $(a_1,\dots,a_s)$ in
  $F_q((0,u_1),s)$ using Proposition~\ref{lem8} in combination with
  the algorithm RHO.
\item Find the $t$-th position of $(a_1,\dots,a_s)$ in $Q^s_q$
  using Lemma~\ref{lem9}.
\item Compute $M_m(C_1,C_2) = t - r + m$ (Theorem \ref{teo4}).
\end{enumerate}

\begin{example} \label{ex5B}
This is a continuation of Example~\ref{ex5}, in the beginning of which
we considered
$C_1 = RM_5(5,2)$ and $C_2 = RM_5(3,2)$. Applying the above procedure
to establish the $8$-th RGHW we first use Theorem~\ref{lem7} to
establish that the $8$-th element of $F_5((4,5),2)$ is $(3,1)$.
Using Proposition \ref{lem8} we then find that $(3,1)$ is the $11$-th
element of $F_5((0,5),2)$ and Lemma~\ref{lem9} next tells us that it is the $17$-th element of $Q^2_5$. Hence, $M_8(C_1,C_2) = 17-11+8 = 14$.
\end{example}

\begin{example} \label{ex5C}
We consider $C_1 = RM_{16}(90,7)$ and $C_2 = RM_{16}(88,7)$. We want
to compute the $1000$-th RGHW  of $C_1$ with respect to $C_2$. Applying the algorithm VECA in Theorem  \ref{lem7} we find that that
$(9,10,14,11,15,15,15)$ is the $1000$-th element of $F_{16}((88,90),7)$. Applying next
Proposition~\ref{lem8} and Lemma~\ref{lem9} we find that it is the $14557$-th
element of $F_{16}((0,90),7)$ and the $16727$-th element of $Q^7_{16}$. Hence,
$M_{1000}(C_1,C_2) = 16727-14557+1000 = 3170$. To find the $1000$-th
GHW of $C_1$, we use Theorem~\ref{lem7} with $C_2 = RM_{16}(-1,7)$ and
we find that $(5,1,10,15,15,15,15)$ is the $1000$-th element of
$F_{16}((0,90),7)$. By Lemma~\ref{lem9} it is the $1515$-th element of
$Q^7_{16}$. Hence, from Theorem \ref{teo3} we deduce $d_{1000}(C_1) = 1515$.
\end{example}

\section{Closed formula expressions for $q$-ary Reed-Muller codes in
  two variables} \label{RM2}\label{secosaco6}

In the previous section we presented a method to calculate RGHWs for
any set of $q$-ary Reed-Muller codes $C_i={\mbox{RM}}_q(u_i,s)$,
$i=1, 2$. As an alternative, for $q$-ary Reed-Muller codes in two
variables (which by Definition~\ref{defi0} means that $s=2$) it is a
manageable task to list closed formula expressions for all possible
situations. This is done in the first half of the present
section. Letting next $u_2=-1$, corresponding to $C_2=\{ \vec{0}\}$,
we in particular 
get closed formula expressions for the GHWs (such formulas -- to the
best of our knowledge -- cannot be found in the literature). 
The formulas in the present section can be derived
by applying Proposition~\ref{rem1} directly. We shall leave the
details for the reader. To simplify the description we use the
notation $t=u_1-u_2$ which of course implies that
$u_1=u_2+t$. Hence, throughout this section $C_2={\mbox{RM}}_q(u_2,2)$
and $C_1={\mbox{RM}}_q(u_2+t,2)$.

\subsection{Formulas for RGHW}  \label{RGHWRM2}

We have the following three cases.

\subsubsection{First case: $u_2-q+2 \geq 0$}

\begin{figure}[!thpb] \label{figure11}
\centering
       $$
		\begin{array}{c} \\
			\begin{array}{c c c c c}
				Y^4 	& \underline{XY^4}	& \underline{X^2Y^4}	& X^3Y^4		& X^4Y^4		\\
				Y^3 	& XY^3 			& \underline{X^2Y^3} 	& \underline{X^3Y^3} 	& X^4Y^3		\\
				Y^2	& XY^2 			& X^2Y^2 		& \underline{X^3Y^2} 	& \underline{X^4Y^2}	\\
				Y 	& XY 			& X^2Y   		& X^3Y 			& \underline{X^4Y} 	\\
				1 	& X 			& X^2    		& X^3 			& X^4 	
			\end{array}
 			\\
			\ \\
			W_5(5,6)\mbox{ underlined, i.e. }u_2=4\mbox{ and }t=2\\
			\mbox{(First case)}
		\end{array}
	$$ 
\end{figure}

In this case the codimension is $\ell = t(2q-u_2-t-2)+\frac{t(t+1)}{2}$.

\begin{itemize}
\item If $m = 1,\ldots,t(2q-u_2-t-2)$ then there exist $a \in
  \{0,\ldots,2(q-1)-u_2-t-1\}$ and $b\in\{1,\ldots,t\}$ such that $m =
  at+b$. We have
$$M_m(C_1,C_2) = \left(2q-2-u_2-\frac{a}{2}\right)(a+1) + b - t.$$

\item If $m = t(2q-u_2-t-2)+1,\ldots,t(2q-u_2-t-2)+\frac{t(t+1)}{2}$,
  then there exists $c \in \left\{1,\ldots,\frac{t(t+1)}{2}\right\}$
  such that $m = t(2q-u_2-t-2) + c$. We have
$$M_m(C_1,C_2) = \frac{1}{2}(2q-u_2-t-2)(2q-u_2+t-1)+c.$$
\end{itemize}

\subsubsection{Second case: $u_2-q+t+1 \leq 0$}

\begin{figure}[!thpb] \label{figure22}
\centering
     $$
			\begin{array}{c}
			\\
			\begin{array}{c c c c c}
				Y^4 		& XY^4		& X^2Y^4	& X^3Y^4	& X^4Y^4	\\
				\underline{Y^3} 	& XY^3 		& X^2Y^3 	& X^3Y^3 	& X^4Y^3	\\
				\underline{Y^2}	& \underline{XY^2} 	& X^2Y^2 	& X^3Y^2 	& X^4Y^2	\\
				Y 		& \underline{XY} 	& \underline{X^2Y}   	& X^3Y 		& X^4Y 		\\
				1 		& X 		& \underline{X^2}    	& \underline{X^3} 	& X^4 	
			\end{array}
 			\\
			\ \\
			W_5(2,3)\mbox{ underlined, i.e. }u_2=1\mbox{ and }t=2\\
			\mbox{(Second case)}
			\end{array}
$$
\end{figure}

In this case the codimension is $\ell =\frac{t(t+1)}{2}+t(u_2+1)$.

\begin{itemize}
\item If $m = 1,\ldots,\frac{t(t+1)}{2}$ then there exist $a \in
  \{0,\ldots,t-1\}$ and $b\in\{1,\ldots,a+1\}$ such that $m =
  \frac{a(a+1)}{2}+b$. We have
$$M_m(C_1,C_2) = q(q-u_2-t+a)+b-a-1.$$

\item If $m = \frac{t(t+1)}{2}+1,\ldots,\frac{t(t+1)}{2}+t(u_2+1)$,
  then there exist $a \in \{0,\ldots,u_2\}$ and $b\in\{1,\ldots,t\}$
  such that $m = \frac{t(t+1)}{2} + at +b$. We have
$$M_m(C_1,C_2) = q(q+a-u_2)+b-t-1-\frac{a(a+3)}{2}.$$
\end{itemize}

\subsubsection{Third case: $u_2-q+2 < 0$ and $u_2-q+t+1 > 0$}

\begin{figure}[!thpb] \label{figure33}
\centering
      $$
			\begin{array}{c}
			\\
			\begin{array}{c c c c c}
				\underline{Y^4} 	& \underline{XY^4}	& X^2Y^4	& X^3Y^4		& X^4Y^4	\\
				\underline{Y^3} 	& \underline{XY^3} 	& \underline{X^2Y^3} 	& X^3Y^3 		& X^4Y^3	\\
				Y^2		& \underline{XY^2} 	& \underline{X^2Y^2} 	& \underline{X^3Y^2} 		& X^4Y^2	\\
				Y 		& XY 		& \underline{X^2Y}   	& \underline{X^3Y} 		& \underline{X^4Y} 	\\
				1 		& X 		& X^2    	& \underline{X^3}		& \underline{X^4} 	
			\end{array}
 			\\
			\ \\
			W_5(3,5)\mbox{ underlined, i.e. }u_2=2\mbox{ and }t=3\\
			\mbox{(Third case)}
			\end{array}
$$
\end{figure}

In this case the codimension is $\ell = (2q-u_2)(u_2+t)+3(q-u_2)-q^2-2-\frac{t(t+3)}{2}$.

\begin{itemize}
\item If $m = 1,\ldots,\frac{1}{2}(q-u_2-2)(2t-q+u_2+1)+t$ then there
  exist $a \in \{0,\ldots,q-u_2-2\}$ and
  $b\in\{1,\ldots,u_2+t-q+a+2\}$ such that $m =
  a(u_2+t-q+1)+\frac{a(a+1)}{2}+b$. We have 
$$M_m(C_1,C_2) = (a+2)(q-1)-u_2-t+b.$$

\item If $m =
  \frac{1}{2}(q-u_2-2)(2t-q+u_2+1)+t+1,\ldots,\frac{1}{2}(q-u_2-2)(2t-q+u_2+1)+t(q-t)$
  then there exist $a \in \{0,\ldots,q-t-2\}$ and $b\in\{1,\ldots,t\}$
  such that $m = \frac{1}{2}(q-u_2-2)(2t-q+u_2+1)+(a+1)t+b$. We have
$$M_m(C_1,C_2) = q(q-u_2+a)-\frac{a(a+3)}{2}-t+b-1.$$

\item If $m =
  \frac{1}{2}(q-u_2-2)(2t-q+u_2+1)+t(q-t)+1,\ldots,(2q-u_2)(u_2+t)+3(q-u_2)-q^2-2-\frac{t(t+3)}{2}$
  then there exists $c \in
  \{1,\ldots,\frac{1}{2}((t+1)^2-(q-u_2-1)^2+q-u_2-t-2)\}$ such that
  $m = \frac{1}{2}(q-u_2-2)(2t-q+u_2+1)+t(q-t)+c$. We have
$$M_m(C_1,C_2) = \frac{1}{2}(3q^2-2u_2q-3q-t^2-t)+c.$$
\end{itemize}


\subsection{Formulas for GHW}

Applying the formulas from the previous section to the special case
of $u_2=-1$ and consequently $u_1=t+1$ we get by letting $u=u_1$ the following results
concerning the GHWs of ${\mbox{RM}}_q(u,s)$.

\subsubsection{The case $u-q+1\leq 0$}
In this case the dimension of $C_1$ is $k_1 = \frac{(u+1)(u+2)}{2}$.
\begin{itemize}
\item For $r = 1,\ldots,\frac{(u+1)(u+2)}{2}$ there exist $a
  \in \{0,\ldots,u\}$ and $b\in\{1,\ldots,a+1\}$ such that $r =
  \frac{a(a+1)}{2}+b$. We have
$$d_r(C_1) = q(q-u+a)+b-a-1.$$
\end{itemize}

\subsubsection{The case $u-q+1 > 0$}
In this case the dimension of $C_1$ is $k_1 = q(2u_1-q+3)-\frac{u_1(u_1+3)}{2}-1$.
\begin{itemize}
\item For $r = 1,\ldots,q(u+2)-\frac{u(u+3)}{2}-1$ there
  exist $a \in \{0,\ldots,2(q-1)-u\}$ and
  $b\in\{1,\ldots,u-q+2+a\}$ such that $r =
  a(u-q+1)+\frac{a(a+1)}{2}+b$. We have
$$d_r(C_1) = (a+2)(q-1)-u+b.$$

\item For $r =
  q(u+2)-\frac{u(u+3)}{2},\ldots,q(2u-q+3)-\frac{u(u+3)}{2}-1$
  there exists $c \in \{1,\ldots,q(u-q+1)\}$ such that $r =
  q(u+2)-\frac{u(u+3)}{2}-1+c$. We have
$$d_r(C_1) = q(2q-u-1)+c.$$
\end{itemize}

\subsection{Comparing RGHW and GHW in a special case}

Consider the special case $u_2 = q-2$ and $t=1$. 
If $m = 1,\ldots,q$ then there exist $a \in \{0,\ldots,q-1\}$ and
$b\in\{1,\ldots,a+1\}$ such that $m = \frac{a(a+1)}{2}+b$. We have
$$M_m(C_1,C_2) = \frac{m}{2}(2q-m+1) \mbox{ and }d_m(C_1) = (q-1)(a+1)+b$$
Thus 
$$M_m(C_1,C_2) - d_m(C_1) = \frac{1}{8}(-a^4-2a^3+(-4b+4q+1)a^2$$
$$+(-4b-4q+10)a-4b^2+8bq-4b-8q+8).$$

For the particular case that $q=16$ we get the values listed in
Table~\ref{tabslut}

\begin{table}
\begin{center}
\begin{tabular}{|c|rrrrrrrrrr|}
\hline
m&1&2&3&4&5&6&7&8&9&10\\
\hline
${\mbox{diff}}(m)$&0&0&14&15&29&43&45&59&73&87\\
$M_m(C_1,C_2)$&16&31&46&61&76&91&106&121&136&151\\
\hline
\end{tabular}
\end{center}
\begin{center}
\begin{tabular}{|c|rrrrrr|}
\hline
m&11&12&13&14&15&16\\
\hline
${\mbox{diff}}(m)$&90&104&118&132&146&150\\
$M_m(C_1,C_2)$&166&181&196&211&226&241\\
\hline
\end{tabular}
\end{center}
\caption{The special case $u_2=q-2$ and $t=1$ with $q=16$. That is,
  $C_1={\mbox{RM}}_{16}(15,2)$ and $C_2={\mbox{RM}}_{16}(14,2)$. The
  function ${\mbox{diff}}(m)$ equals $M_m(C_1,C_2)-d_m(C_1)$.}
\label{tabslut}
\end{table}

\section{Locally correctable ramp secret sharing schemes}\label{secosaco7}

We now return to the communication problem described in the
introduction of the paper. Recall that we consider a secret sharing
scheme based on a coset construction $C_1/C_2$ where $C_1$ and $C_2$
are $q$-ary Reed-Muller codes. Requirement R2  about local
correctability was treated in Theorem~\ref{thelocal}. In
Section~\ref{sec3} -- Section~\ref{RM2} we showed a low complexity method
to determine the RGHWs and in particular we derived closed formula
expressions in the case of codes in two variables. By the following
result (corresponding to~(\ref{eqsnabeleins}) and (\ref{eqsnabeltwei})) 
\begin{eqnarray}
t_m=M_m((C_2)^\perp,(C_1)^\perp)-1, & r_m=n-M_{\ell-m+1}(C_1,C_2)+1,  \label{eqny2}
\end{eqnarray}
this method immediately translates into accurate
information on the information leakage and thereby explains what can be
done regarding requirement R1.\\

Combining (\ref{eqny2}) with (\ref{eqnabel1}) and
(\ref{eqnabel2}) and using Remark~\ref{remduality} we obtain
\begin{equation}
\begin{array}{ll}
t_1=d(C_2^\perp)-1, & r_1=\dim (C_2)+1,\\
t_\ell=\dim (C_1) -1,& r_\ell = n-d(C_1)+1,
\end{array} \label{eqhejsadu}
\end{equation}
where $d(C)$ is the minimum distance of $C$. To apply
Theorem~\ref{thelocal} (which ensures local correctability) we
need $u_1 < q-1$. Under that assumption (\ref{eqhejsadu}) becomes
\begin{equation}
\begin{array}{ll}
t_1=u_2+1, & r_1={{s+u_2} \choose {u_2}} +1,\\
t_\ell={{s+u_1} \choose {u_1}} -1,& r_\ell = q^s-(q-u_1)q^{s-1}+1=u_1 q^{s-1}+1.
\end{array} \label{eqhejsathere}
\end{equation}
By Theorem~\ref{thelocal} we need to make $u_1+1$ or $q-1$ queries
(depending on the error-probability of the system) to hopefully correct an
entry. We observe that the number of queries in both cases is 
strictly larger than $t_1$. However, it is only larger than
$r_1$ when $u_2$ is very small. Actually, for most values of $u_2$ the
number of queries needed will be much smaller than $r_1$. Recall from
the proofs in~\cite{now} of the local correctability of $q$-ary
Reed-Muller codes that the random point sets queried is
chosen from a family of point sets with a particular geometry (the
geometry is different for the three different cases treated in
Theorem~\ref{thelocal}). Knowing only the values $t_1$ and $r_1$ -- with
the number of queries being a number in between -- we
cannot say if those point sets get access to information or
not. However, when $t_2$ is larger than or equal to the number of queries then
for sure they get at most access to 1 $q$-bit of information. As is
demonstrated in the following two examples this is often the case. Of
course the
situation gets more complicated if the decoding is not
successful in the first run and another series of queries is needed. In that case we
may either ensure that the information from the first query is
deleted or we may simply trust the party that performs the
error-correction. Below we study in detail various schemes over the
alphabets ${\mathbb{F}}_8$ and ${\mathbb{F}}_{16}$.

\begin{example}
In this example we consider schemes over ${\mathbb{F}}_8$. Depending
on the error-probability it is sufficient to make $u_1+1$ or $q-1=7$
queries to correct an entry. The number of participants is
$n=8^2=64$. In Table~\ref{topstart1} -- Table~\ref{topslut1} we consider codes $C_1=RM_q(u_1,2)$ and
$C_2=RM_q(u_2)$ for different choices of $u_2 < u_1 \leq q-2$ and we
list the parameters $t=t_1, \ldots , t_\ell$ and $r_1, \ldots ,
r_\ell=r$ (in particular the number of columns equals the codimension
$\ell$). We also list corresponding numbers $t_1^\prime, \ldots ,
t_\ell^\prime$ and $r_1, \ldots , r_\ell^\prime$. They are the
lower bounds and upper bounds, respectively, that  we would get on the
$t_i$'s and the $r_i$'s, respectively, by using GHWs instead of RGHWs. It is quite clear that the amount of information leaked to
the party performing the local error-correction is often lower than
what could be anticipated from studying only the GHWs.

\begin{table}
\centering
\begin{tabular}{|c|c c c c c c c|}
\hline
$m$  & 1 & 2 & 3 & 4 & 5 & 6 & 7\\
\hline
$t_m$ & 6 & 12 & 17 & 21 & 24 & 26 & 27 \\
$t_m^\prime$ & 6 & 7 & 13 & 14 & 15 & 20 & 21 \\
$r_m$ & 22 & 24 & 27 & 31 & 36 & 42 & 49 \\
$r_m^\prime$ & 28 & 33 & 34 & 35 & 41 & 42 & 49 \\
\hline
\end{tabular}
\caption{Scheme based on $C_1 = RM_8(6,2)$ and $C_2 = RM_8(5,2)$. For local
error-correction 7 queries are needed.}
\label{topstart1}
\end{table}

\begin{table}
\centering
\begin{tabular}{|c|c c c c c c c c c c c c c|}
\hline
$m$  & 1 & 2 & 3 & 4 & 5 & 6 & 7 & 8 & 9 & 10 & 11 & 12 & 13\\
\hline
$t_m$ & 5 & 6 & 11 & 12 & 16 & 17 & 20 & 21 & 23 & 24 & 25 & 26 & 27 \\
$t_m^\prime$ & 5 & 6 & 7 & 12 & 13 & 14 & 15 & 19 & 20 & 21 & 22 & 23 & 26 \\
$r_m$ & 16 & 17 & 19 & 20 & 23 & 24 & 28 & 29 & 34 & 35 & 41 & 42 & 49 \\
$r_m^\prime$ & 19 & 20 & 21 & 25 & 26 & 27 & 28 & 33 & 34 & 35 & 41 & 42 & 49 \\
\hline
\end{tabular}
\caption{Scheme based on $C_1 = RM_8(6,2)$ and $C_2 = RM_8(4,2)$. For local
error-correction 7 queries are needed.}
\end{table}
\begin{table}
\centering
\begin{tabular}{|c|c c c c c c |}
\hline
$m$  & 1 & 2 & 3 & 4 & 5 & 6 \\
\hline
$t_m$ & 5 & 10 & 14 & 17 & 19 & 20 \\
$t_m^\prime$ & 5 & 6 & 7 & 12 & 13 & 14 \\
$r_m$ & 16 & 19 & 23 & 28 & 34 & 41 \\
$r_m^\prime$ & 25 & 26 & 27 & 33 & 34 & 41 \\
\hline
\end{tabular}
\caption{Scheme based on $C_1 = RM_8(5,2)$ and $C_2 = RM_8(4,2)$. For local
error-correction 6 or 7 queries are needed, depending on the error-probability.}
\end{table}

\begin{table}
\centering
\begin{tabular}{|c|c c c c c c c c c c c|}
\hline
$m$  & 1 & 2 & 3 & 4 & 5 & 6 & 7 & 8 & 9 & 10 & 11 \\
\hline
$t_m$ & 4 & 5 & 9 & 10 & 13 & 14 & 16 & 17 & 18 & 19 & 20 \\
$t_m^\prime$ & 4 & 5 & 6 & 7 & 11 & 12 & 13 & 14 & 15 & 18 & 19 \\
$r_m$ & 11 & 12 & 15 & 16 & 20 & 21 & 26 & 27 & 33 & 34 & 41 \\
$r_m^\prime$ & 13 & 17 & 18 & 19 & 20 & 25 & 26 & 27 & 33 & 34 & 41 \\
\hline
\end{tabular}
\caption{Scheme based on $C_1 = RM_8(5,2)$ and $C_2 = RM_8(3,2)$. For local
error-correction 6 or 7 queries are needed, depending on the error-probability.}
\end{table}
\begin{table}
\centering
\begin{tabular}{|c|c c c c c c c c c c c c c c c|}
\hline
$m$  & 1 & 2 & 3 & 4 & 5 & 6 & 7 & 8 & 9 & 10 & 11 & 12 & 13 & 14 & 15\\
\hline
$t_m$ & 3 & 4 & 5 & 8 & 9 & 10 & 12 & 13 & 14 & 15 & 16 & 17 & 18 & 19 & 20 \\
$t_m^\prime$ & 3 & 4 & 5 & 6 & 7 & 10 & 11 & 12 & 13 & 14 & 15 & 17 & 18 & 19 & 20 \\
$r_m$ & 7 & 8 & 9 & 12 & 13 & 14 & 18 & 19 & 20 & 25 & 26 & 27 & 33 & 34 & 41 \\
$r_m^\prime$ & 9 & 10 & 11 & 12 & 13 & 17 & 18 & 19 & 20 & 25 & 26 & 27 & 33 & 34 & 41 \\
\hline
\end{tabular}
\caption{Scheme based on $C_1 = RM_8(5,2)$ and $C_2 = RM_8(2,2)$. For local
error-correction 6 or 7 queries are needed, depending on the error-probability.}
\end{table}

\begin{table}
\centering
\begin{tabular}{|c|c c c c c |}
\hline
$m$  & 1 & 2 & 3 & 4 & 5\\
\hline
$t_m(RGHW)$ & 4 & 8 & 11 & 13 & 14 \\
$t_m(GHW)$ & 4 & 5 & 6 & 7 & 11 \\
$r_m(RGHW)$ & 11 & 15 & 20 & 26 & 33 \\
$r_m(GHW)$ & 18 & 19 & 25 & 26 & 33 \\
\hline
\end{tabular}
\caption{Scheme based on $C_1 = RM_8(4,2)$ and $C_2 = RM_8(3,2)$. For
local error-correction 5 or 7 queries are needed, depending on the error-probability.}
\label{topslut1}
\end{table}


\end{example}

\begin{example}
In this example we consider schemes over ${\mathbb{F}}_{16}$. Depending
on the error-probability it is sufficient to make $u_1+1$ or $q-1=15$
queries. The number of participants is $n=16^2=264$. The information
in Table~\ref{topstart2} -- Table \ref{topslut2} is similar to the previous example.

\begin{table}
\centering
\begin{tabular}{|c|c c c c c c c c c c c c c c c|}
\hline
$m$  & 1 & 2 & 3 & 4 & 5 & 6 & 7 & 8 & 9 & 10 & 11 & 12 & 13 & 14 & 15\\
\hline
$t_m$ & 14 & 28 & 41 & 53 & 64 & 74 & 83 & 91 & 98 & 104 & 109 & 113 & 116 & 118 & 119 \\
$t_m^\prime$ & 14 & 15 & 29 & 30 & 31 & 44 & 45 & 46 & 47 & 59 & 60 & 61 & 62 & 63 & 74 \\
$r_m$ & 106 & 108 & 111 & 115 & 120 & 126 & 133 & 141 & 150 & 160 & 171 & 183 & 196 & 210 & 225 \\
$r_m^\prime$ & 161 & 162 & 163 & 164 & 165 & 177 & 178 & 179 & 180 & 193 & 194 & 195 & 209 & 210 & 225 \\
\hline
\end{tabular}
\caption{Scheme based on $C_1 = RM_{16}(14,2)$ and $C_2 = RM_{16}(13,2)$. For local
error-correction 15 queries are needed.}
\label{topstart2}
\end{table}

\begin{table}
\centering
\begin{tabular}{|c|c c c c c c c c c c c c c c|}
\hline
$m$  & 1 & 2 & 3 & 4 & 5 & 6 & 7 & 8 & 9 & 10 & 11 & 12 & 13 & 14\\
\hline
$t_m$ & 13 & 26 & 38 & 49 & 59 & 68 & 76 & 83 & 89 & 94 & 98 & 101 & 103 & 104 \\
$t_m^\prime$ & 13 & 14 & 15 & 28 & 29 & 30 & 31 & 43 & 44 & 45 & 46 & 47 & 58 & 59 \\
$r_m$ & 92 & 95 & 99 & 104 & 110 & 117 & 125 & 134 & 144 & 155 & 167 & 180 & 194 & 209 \\
$r_m^\prime$ & 146 & 147 & 148 & 149 & 161 & 162 & 163 & 164 & 177 & 178 & 179 & 193 & 194 & 209 \\
\hline
\end{tabular}
\caption{Scheme based on $C_1 = RM_{16}(13,2)$ and $C_2 = RM_{16}(12,2)$. For local
error-correction 14 or 15 queries are needed, depending on the error-probability.}
\end{table}

\begin{table}
\centering
\begin{tabular}{|c|c c c c c c c c c c c c c |}
\hline
$m$  & 1 & 2 & 3 & 4 & 5 & 6 & 7 & 8 & 9 & 10 & 11 & 12 & 13\\
\hline
$t_m$ & 12 & 24 & 35 & 45 & 54 & 62 & 69 & 75 & 80 & 84 & 87 & 89 & 90 \\
$t_m^\prime$ & 12 & 13 & 14 & 15 & 27 & 28 & 29 & 30 & 31 & 42 & 43 & 44 & 45 \\
$r_m$ & 79 & 83 & 88 & 94 & 101 & 109 & 118 & 128 & 139 & 151 & 164 & 178 & 193 \\
$r_m^\prime$ & 131 & 132 & 133 & 145 & 146 & 147 & 148 & 161 & 162 & 163 & 177 & 178 & 193 \\
\hline
\end{tabular}
\caption{Scheme based on $C_1 = RM_{16}(12,2)$ and $C_2 = RM_{16}(11,2)$. For local
error-correction 13 or 15 queries are needed, depending on the error-probability.}
\end{table}

\begin{table}
\centering
\begin{tabular}{|c|c c c c c c c c c c c c|}
\hline
$m$  & 1 & 2 & 3 & 4 & 5 & 6 & 7 & 8 & 9 & 10 & 11 & 12\\
\hline
$t_m$ & 11 & 22 & 32 & 41 & 49 & 56 & 62 & 67 & 71 & 74 & 76 & 77 \\
$t_m^\prime$ & 11 & 12 & 13 & 14 & 15 & 26 & 27 & 28 & 29 & 30 & 31 & 41 \\
$r_m$ & 67 & 72 & 78 & 85 & 93 & 102 & 112 & 123 & 135 & 148 & 162 & 177 \\
$r_m^\prime$ & 116 & 117 & 129 & 130 & 131 & 132 & 145 & 146 & 147 & 161 & 162 & 177 \\
\hline
\end{tabular}
\caption{Scheme based on $C_1 = RM_{16}(11,2)$ and $C_2 = RM_{16}(10,2)$. For local
error-correction 12 or 15 queries are needed, depending on the error-probability.}
\end{table}

\begin{table}
\centering
\begin{tabular}{|c|c c c c c c c c c c c|}
\hline
$m$  & 1 & 2 & 3 & 4 & 5 & 6 & 7 & 8 & 9 & 10 & 11\\
\hline
$t_m$ & 10 & 20 & 29 & 37 & 44 & 50 & 55 & 59 & 62 & 64 & 65 \\
$t_m^\prime$ & 10 & 11 & 12 & 13 & 14 & 15 & 25 & 26 & 27 & 28 & 29 \\
$r_m$ & 56 & 62 & 69 & 77 & 86 & 96 & 107 & 119 & 132 & 146 & 161 \\
$r_m^\prime$ & 101 & 113 & 114 & 115 & 116 & 129 & 130 & 131 & 145 & 146 & 161 \\
\hline
\end{tabular}
\caption{Scheme based on $C_1 = RM_{16}(10,2)$ and $C_2 = RM_{16}(9,2)$. For local
error-correction 11 or 15 queries are needed, depending on the error-probability.}
\end{table}

\begin{table}
\centering
\begin{tabular}{|c|c c c c c c c c c c|}
\hline
$m$  & 1 & 2 & 3 & 4 & 5 & 6 & 7 & 8 & 9 & 10\\
\hline
$t_m$ & 9 & 18 & 26 & 33 & 39 & 44 & 48 & 51 & 53 & 54 \\
$t_m^\prime$ & 9 & 10 & 11 & 12 & 13 & 14 & 15 & 24 & 25 & 26 \\
$r_m$ & 46 & 53 & 61 & 70 & 80 & 91 & 103 & 116 & 130 & 145 \\
$r_m^\prime$ & 97 & 98 & 99 & 100 & 113 & 114 & 115 & 129 & 130 & 145 \\
\hline
\end{tabular}
\caption{Scheme based on $C_1 = RM_{16}(9,2)$ and $C_2 = RM_{16}(8,2)$. For local
error-correction 10 or 15 queries are needed, depending on the
error-probability.}
\label{topslut2}
\end{table}

\end{example}

\section{Concluding remarks}\label{secconcl}\label{secosaco8}
In this paper we applied a coset
construction of $q$-ary Reed-Muller codes to the situation where a
central party wants to store a secret on a distributed media in such a
way that other parties with access to a large part of the media can
recover the secret, whereas parties with limited access cannot. The
reason for choosing $q$-ary Reed-Muller codes is that with such codes
one is able to perform local error-correction. For the purpose of
analysing the information leakage we determined the relative
generalized Hamming weights of the codes involved. This was done using
the footprint bound from Gr\"{o}bner basis theory. There is a very strong connection between the
footprint bound and the Feng-Rao bound for primary codes~\cite{andersen2008evaluation,geil2013feng}
which is the bound that we used in~\cite{geil2014} to estimate RGHWs of
one-point algebraic geometric codes. Using the
footprint bound rather than the Feng-Rao bound for primary or dual codes
saved us some cumbersome notation (which is difficult to avoid in
the case of one-point algebraic geometric codes). Using
the derived information on the RGHWs we discussed the trade off
between security in the above scheme and the ability to perform
local error-correction.\\

\section*{Acknowledgments}

The authors gratefully acknowledge the support from
the Danish National Research Foundation and the National Natural Science
Foundation of China (Grant No.\ 11061130539) for the Danish-Chinese
Center for Applications of Algebraic Geometry in Coding Theory and
Cryptography. Also the authors gratefully acknowledge the support from
The Danish Council for Independent Research (Grant
No.\ DFF--4002-00367). 
Part of this work was done while the first listed author
was visiting East China Normal University. We are grateful to
Professor Hao Chen for his hospitality. Finally the authors would like
to thank Diego Ruano, Hans H\"{u}ttel and Ruud Pellikaan for helpful discussions.

\appendix

\section{Proof of Lemma~\ref{lem5}} \label{A1}
To prove Lemma~\ref{lem5} we start by
generalizing~\cite[Th.\ 3.7.7]{heijnen1999phd} which corresponds to 
Lemma~\ref{lem3} below in the particular case that $b=s(q-1)$. The
proof of \cite[Th.\ 3.7.7]{heijnen1999phd} was given in
\cite[App.\ B.1]{heijnen1999phd}. 

\begin{lemma} \label{lem3}
Let $A$ be a subset of $F_q(a,b)$ consisting of $m$ elements. Then $|\Delta L_{(a,b)}(m)| \leq |\Delta A|$.
\end{lemma}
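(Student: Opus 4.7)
The plan is to prove Lemma~\ref{lem3} by induction on $s$, the proof being a generalization of~\cite[Th.\ 3.7.7]{heijnen1999phd} (proved in~\cite[App.\ B.1]{heijnen1999phd}), with extra care for the upper degree bound $b \leq s(q-1)$.

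First I would handle the base case $s = 1$. Here $F_q((a,b),1) = \{a, a+1, \ldots, b\}$ and the lex ordering places smaller integers first, so $L_{(a,b)}(m) = \{a, \ldots, a+m-1\}$ and $\Delta L_{(a,b)}(m) = \{0,1,\ldots,a+m-1\}$ has cardinality $a+m$. For any $m$-subset $A \subseteq \{a,\ldots,b\}$, the maximum is at least $a+m-1$, so $\Delta A$ contains $\{0,1,\ldots,a+m-1\}$, giving $|\Delta A| \geq a+m$.

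For the induction step, I would slice $A \subseteq F_q((a,b),s)$ by its first coordinate. Setting $A_j = \{(c_2, \ldots, c_s) \in Q_q^{s-1} : (j, c_2, \ldots, c_s) \in A\}$ for $j = 0, \ldots, q-1$, each $A_j$ sits inside $F_q((\alpha_j, \beta_j), s-1)$ with $\alpha_j = \max(0, a-j)$ and $\beta_j = \min(b-j,(s-1)(q-1))$. Writing $m_j = |A_j|$, the decomposition of the lower shadow by first coordinate yields
\[
|\Delta A| \;=\; \sum_{k=0}^{q-1} \Bigl|\bigcup_{j=k}^{q-1} \Delta A_j\Bigr|.
\]
The core of the proof is a two-stage compression. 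In stage one, for each $j$, I would replace $A_j$ by the lex-initial segment $\widetilde{A}_j := L_{(\alpha_j,\beta_j)}(m_j)$; by the induction hypothesis $|\Delta \widetilde{A}_j| \leq |\Delta A_j|$. To conclude that \emph{each} of the $q$ unions above can only shrink, I would establish the auxiliary fact that the shadows of lex-initial segments are nested, i.e.\ $\Delta L_{(\alpha,\beta)}(m) \subseteq \Delta L_{(\alpha,\beta)}(m')$ whenever $m \leq m'$, and that $\Delta \widetilde{A}_j \subseteq \Delta A_j$ may be arranged by a further exchange argument (swapping any $\vec{c} \in \Delta A_j \setminus \Delta \widetilde{A}_j$ for the missing element of $\Delta \widetilde{A}_j$ without enlarging any union). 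In stage two, an exchange argument moves mass from large-$j$ slices into small-$j$ slices until the sizes $(m_0, \ldots, m_{q-1})$ match those of the slicing of $L_{(a,b)}(m)$; the claim is that such a shift never increases $|\Delta A|$, because the level-$k$ union in the displayed formula is monotone in the individual $m_j$'s in a way that favours concentration at small $j$.

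The main obstacle will be the extra upper-degree truncation $\beta_j < (s-1)(q-1)$ that the generalization introduces: the shadows $\Delta L_{(\alpha_j,\beta_j)}(m_j)$ are no longer initial segments of a simple Macaulay simplex, so the nestedness and union-monotonicity properties invoked above must be verified directly by an auxiliary inductive characterization of $\Delta L_{(\alpha,\beta)}(m)$ as a concrete subset of $Q_q^{s-1}$. Once that characterization is in hand, the two compressions together transform an arbitrary $m$-subset of $F_q((a,b),s)$ into $L_{(a,b)}(m)$ without enlarging the lower shadow, which is exactly the desired inequality.
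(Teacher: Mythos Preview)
Your coordinate--slicing framework is reasonable, and the decomposition $|\Delta A|=\sum_{k=0}^{q-1}|\bigcup_{j\ge k}\Delta A_j|$ is correct. The gap is in stage one. You need that replacing each slice $A_j$ by $\widetilde A_j=L_{(\alpha_j,\beta_j)}(m_j)$ does not increase any of the unions $\bigcup_{j\ge k}\Delta A_j$, but the justification you give is wrong on both counts. The nestedness $\Delta L_{(\alpha,\beta)}(m)\subseteq\Delta L_{(\alpha,\beta)}(m')$ for $m\le m'$ is trivial and irrelevant, since the slices carry \emph{different} bounds $(\alpha_j,\beta_j)$. And the containment $\Delta\widetilde A_j\subseteq\Delta A_j$ that you want to ``arrange by exchange'' is simply false: for $s-1=2$, $q=3$, $\alpha_j=\beta_j=2$, take $A_j=\{(2,0)\}$; then $\widetilde A_j=\{(0,2)\}$, and $\Delta\widetilde A_j=\{(0,0),(0,1),(0,2)\}$ is not contained in $\Delta A_j=\{(0,0),(1,0),(2,0)\}$. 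What actually makes stage one work is a different lemma: $\Delta L_{(\alpha,\beta)}(m)$ is always a lex-initial segment of the full cube $Q_q^{s-1}$ (its last element being the $m$-th element of $F_q((\alpha,\beta),s-1)$). Granting this, the sets $\Delta\widetilde A_j$ for varying $j$ are totally ordered by inclusion, so $\bigl|\bigcup_{j\ge k}\Delta\widetilde A_j\bigr|=\max_{j\ge k}|\Delta\widetilde A_j|\le\max_{j\ge k}|\Delta A_j|\le\bigl|\bigcup_{j\ge k}\Delta A_j\bigr|$, using the induction hypothesis only for the middle inequality. You would need to prove that initial-segment property separately; it does not follow from what you wrote. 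Your stage two is also only a sketch: ``shifting mass to small $j$'' requires showing concretely that the sizes of the lex-initial-segment shadows behave monotonically in a way compatible with the unions, and the degree truncation $\beta_j$ makes this nontrivial.

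For comparison, the paper does not give a self-contained argument at all: it observes that Heijnen's proof in \cite[App.~B.1]{heijnen1999phd} of the case $b=s(q-1)$ goes through verbatim after a short list of textual substitutions (replacing $F_{\ge v}$ by $F_q(v,b)$, letting the degree index run only up to $b$, etc.). The structure of those substitutions shows that Heijnen's proof slices by \emph{degree} rather than by a coordinate, so even once repaired your approach would be organized differently from the one the paper invokes.
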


\begin{proof}
In Appendix B.1 of \cite{heijnen1999phd} a proof for Lemma~\ref{lem3}
is given in the particular case that $b=s(q-1)$. We indicate how this
proof can be modified to cover all possible choices of $b$. First note
that~\cite{heijnen1999phd} uses $v$ where we use $a$, uses $m$ where
we use $s$, and uses $r$ where we use $m$. With the following
modifications the proof in~\cite{heijnen1999phd} is lifted to a proof of Lemma~\ref{lem3}.
\begin{itemize}
\item In \cite[Rem.\ B.1.2]{heijnen1999phd}: Replace $F_{\geq v}$ with
  $F_q(v,b)$ and let the parameter $k$ go from $v$ to $b$. 
\item In \cite[Def.\ B.1.6]{heijnen1999phd}: Replace $F_{\geq l}$ with
  $F_q(l,b)$. 
\item In \cite[Lem.\ B.1.10]{heijnen1999phd}: Replace $F_{\geq v}$
  with $F_q(v,b)$ and let the summation end with $A_b$ rather than
  $A_{s(q-1)}$. 
\item In \cite[Lem.\ B.1.13, Lem.\ B.1.14 and their proofs]{heijnen1999phd}: Replace  $F_{\geq l}$, $F_{\geq (l-1)}$,
  $F_{\geq v}$, $L_{\geq l-1}(r)$ and $L_{\geq l}(r)$ with $F_q(l,b)$,
  $F_q(l-1,b)$, $F_q(v,b)$, $L_{(l,b)}(r)$ and $L_{(l-1,b)}(r)$, respectively.
\end{itemize}
\end{proof}

Recall from Section~\ref{secthree} the map $\mu: Q^s_q \rightarrow Q^s_q$ given by
$\mu(a_1,\ldots,a_s) = (q-1-a_s,\ldots,q-1-a_1)$. To translate
Lemma~\ref{lem3} into Lemma~\ref{lem5} we need the following results.

\begin{lemma} \label{lem4}
Let $0 \leq a \leq b \leq s(q-1)$ be integers,  $\vec{a}, \vec{b} \in Q^s_q$ and $m \in \{1,\ldots,|F_q(a,b)|\}$, then we have that
\begin{enumerate}
\item $\vec{a} \lexprec \vec{b} \iff \mu(\vec{a}) \antiprec \mu(\vec{b})$,
\item $\vec{a} \antiprec \vec{b} \iff \mu(\vec{a}) \lexprec \mu(\vec{b})$,
\item $\vec{a} \Ppreceq \vec{b} \iff \mu(\vec{a}) \Psucceq \mu(\vec{b})$,
\item $\mu(\nabla \vec{a}) = \Delta \mu(\vec{a})$,
\item $\mu(\nabla A) = \Delta \mu(A)$, 
\item $\mu(F_q(a,b)) = F_q(s(q-1)-b,s(q-1)-a)$,
\item $A \subseteq F_q(a,b) \iff \mu(A) \subseteq F_q(s(q-1)-b,s(q-1)-a)$,
\item $\mu( N_{(a,b)}(m)) = L_{(s(q-1)-b,s(q-1)-a)}(m)$,
\item $\mu(\nabla N_{(a,b)}(m)) = \Delta  L_{(s(q-1)-b,s(q-1)-a)}(m)$.
\end{enumerate}
\end{lemma}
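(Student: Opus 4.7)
The plan is to verify the nine items in the order listed, since each one feeds directly into the next. A useful preliminary observation is that $\mu$ is an involution (applying it twice gives the identity and $\mu$ is a bijection on $Q^s_q$), which lets me reduce every biconditional in the statement to a single implication.

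First I would settle (1) and (2) by unpacking Definition~\ref{defi4}. Suppose $\vec{a}$ and $\vec{b}$ first disagree in lex at coordinate $l$, so $a_1=b_1,\ldots,a_{l-1}=b_{l-1}$ and $a_l<b_l$. After applying $\mu$, the coordinates $a_1,\ldots,a_{l-1}$ (respectively $b_1,\ldots,b_{l-1}$) appear, complemented, in positions $s-l+2,\ldots,s$ of $\mu(\vec{a})$ (respectively $\mu(\vec{b})$); they agree. In position $s-l+1$ we see $q-1-a_l>q-1-b_l$, which is exactly the condition $\mu(\vec{a})\antiprec\mu(\vec{b})$. The converse of (1) and both directions of (2) follow by the same computation read backwards, or by invoking the involution. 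Item (3) is a one-line coordinate check: $a_i\le b_i$ for every $i$ iff $q-1-a_i\ge q-1-b_i$ for every $i$, and the reversal of coordinate order preserves $\Ppreceq$-type conditions.

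Items (4)–(7) follow mechanically from (3) and the degree identity $\deg(\mu(\vec{a}))=s(q-1)-\deg(\vec{a})$. For (4): $\vec{c}\in\mu(\nabla\vec{a})$ iff $\vec{c}=\mu(\vec{b})$ with $\vec{b}\Psucceq\vec{a}$, iff (by (3)) $\vec{c}\Ppreceq\mu(\vec{a})$, i.e.\ $\vec{c}\in\Delta\mu(\vec{a})$. Item (5) is (4) distributed over a union. Item (6) uses the degree identity to convert the constraint $a\le\deg(\vec{a})\le b$ into $s(q-1)-b\le\deg(\mu(\vec{a}))\le s(q-1)-a$; (7) is (6) applied to each element of $A$.

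Finally, (8) combines (2), (6) and (7): by (6)–(7), $\mu$ is a bijection $F_q(a,b)\to F_q(s(q-1)-b,s(q-1)-a)$, and by (2) it converts the anti-lexicographic order on the source to the lexicographic order on the target, so the first $m$ anti-lex elements on the left correspond to the first $m$ lex elements on the right. Item (9) is then a two-step composition: $\mu(\nabla N_{(a,b)}(m))=\Delta\mu(N_{(a,b)}(m))=\Delta L_{(s(q-1)-b,s(q-1)-a)}(m)$ by (5) and (8). The only genuine obstacle is the index bookkeeping in (1)–(2), where one must keep straight that the reversal of coordinate order turns the leftmost differing position in lex into the rightmost differing position in anti-lex, and that complementation flips the strict inequality; once (1)–(3) are in place the remaining six items are short substitutions.
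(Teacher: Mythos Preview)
Your proposal is correct and follows essentially the same route as the paper: a direct coordinate unpacking for (1)--(3), then (4)--(7) from (3) and the degree identity, and (8)--(9) by composing the earlier items. Your write-up is in fact slightly more explicit than the paper's (you spell out the involution reduction and the position bookkeeping in (1)--(2), and phrase (8) as ``bijection plus order conversion'' where the paper just says ``by induction''), but the underlying argument is the same.
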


\begin{proof}

Let $\vec{a} = (a_1,\ldots,a_s)$ and $\vec{b} = (b_1,\ldots,b_s)$.
\begin{enumerate}
\item $ \vec{a} \lexprec \vec{b} \iff  a_1 = b_1, \ldots, a_{l-1} = b_{l-1}, a_l < b_l\mbox{ for some }l \iff q-1-a_1 = q-1-b_1, \ldots, q-1-a_{l-1} = q-1-b_{l-1}, q-1-a_l > q-1-b_l\mbox{ for some }l \iff \mu(\vec{a}) \antiprec \mu(\vec{b}) $.

\item Similar to 1.

\item $ \vec{a} \Ppreceq \vec{b} \iff   a_1 \leq b_1, \ldots, a_s \leq b_s  \iff q-1-a_1 \geq q-1-b_1, \ldots, q-1-a_{s} \geq q-1-b_{s} \iff   \mu(\vec{a}) \Psucceq \mu(\vec{b})$.

\item $ \vec{b} \in \mu(\nabla \vec{a})  \iff   \exists \vec{b}_1 = \mu^{-1}(b) \in \nabla \vec{a} \iff \vec{b}_1 \Ppreceq \vec{a} \iff   \mu(\vec{b}_1) \Psucceq \mu(\vec{a}) \iff \vec{b} \Psucceq \mu(\vec{a}) \iff   \vec{b} \in \Delta \mu(\vec{a})$.

\item $\mu(\nabla A) = \mu(\bigcup_{\vec{a} \in A}\nabla \vec{a}) = \bigcup_{\vec{a} \in A} \mu(\nabla \vec{a}) = \bigcup_{\vec{a} \in A} \Delta \mu(\vec{a}) = \Delta \bigcup_{\vec{a} \in A} \mu(\vec{a})  =   \Delta \mu(A)$.

\item $\vec{a} \in F_q(a,b) \iff a \leq \deg(\vec{a}) \leq b \iff   a \leq \sum_{i=1}^s a_i  \leq b \iff s(q-1)-a \geq s(q-1)-\sum_{i=1}^s a_i \geq s(q-1)-b  \iff   s(q-1)-b \leq \sum_{i=1}^s (q-1-a_i) \leq s(q-1)-a \iff  \mu(\vec{a}) \in F_q(s(q-1)-b,s(q-1)-a ) $.

\item Similar to 6.

\item Follows from 1,2 and 7 by induction.

\item $\mu(\nabla N_{(a,b)}(m)) = \Delta \mu(N_{(a,b)}(m)) = \Delta L_{(s(q-1)-b,s(q-1)-a)}(m)$.
\end{enumerate}
\end{proof}

We are now ready to prove Lemma~\ref{lem5}.

\begin{proof}[Proof of Lemma \ref{lem5}]
By 7.\ in Lemma \ref{lem4} we have $\mu(A) \subseteq F_q(s(q-1)-b,s(q-1)-a)$. It follows that
\begin{eqnarray*}
|\nabla N_{(a,b)}(m)| & = & |\mu(\nabla N_{(a,b)}(m))| \\
& = & |\Delta L_{(s(q-1)-b,s(q-1)-a)}(m)| \\
& \leq & |\Delta \mu(A)| \\
& = & |\mu(\nabla A)| \\
& = & |\nabla A|, \\
\end{eqnarray*}
where the first and the last line is a consequence of the fact that $\mu$ is
bijective, the second line follows from 9.\ in Lemma~\ref{lem4}, the third line follows from Lemma~\ref{lem3}, and the fourth line
follows from 5.\ in Lemma~\ref{lem4}.
\end{proof}

\bibliographystyle{abbrv}

\begin{thebibliography}{10}

\bibitem{andersen2008evaluation}
\newblock H.~E. Andersen and O.~Geil.
\newblock Evaluation codes from order domain theory.
\newblock {\em Finite Fields and Their Applications}, 14(1):92--123, 2008.

\bibitem{Bains}
\newblock T.~Bains.
\newblock Generalized {H}amming weights and their applications to secret sharing schemes.
\newblock Master's thesis, Univ.\ Amsterdam, 2008.

\bibitem{bezrukov2005macaulay}
\newblock S.~L. Bezrukov and U.~Leck.
\newblock Macaulay posets.
\newblock {\em The Electronic Journal of Combinatorics}, 1000:DS12--Jan, 2005.

\bibitem{chen2007secure}
\newblock H.~Chen, R.~Cramer, S.~Goldwasser, R.~De~Haan, and V.~Vaikuntanathan.
\newblock Secure computation from random error correcting codes.
\newblock In {\em Advances in Cryptology-EUROCRYPT 2007}, pages 291--310.  Springer, 2007.

\bibitem{cox2012ideals}
\newblock D.~A. Cox, J.~Little, and D.~O'Shea.
\newblock {\em Ideals, varieties, and algorithms: an introduction to
  computational algebraic geometry and commutative algebra}.
\newblock Springer, third edition, 2012.

\bibitem{duursma2012multiplicative}
\newblock I.~Duursma and J.~Shen.
\newblock Multiplicative secret sharing schemes from {R}eed-{M}uller type codes.
\newblock In {\em Information Theory Proceedings (ISIT), 2012 IEEE
  International Symposium on}, pages 264--268. IEEE, 2012.

\bibitem{geilrefined}
\newblock O.~Geil, S.~Martin, U.~Mart{\'\i}nez-Pe{\~n}as, and D.~Ruano.
\newblock Refined analysis of RGHWs of code pairs coming from Garcia-Stichtenoth’s second tower.
\newblock In proceedings of ACA 2015, Kalamata, Greece, 5 pages, 2015.

\bibitem{tower}
\newblock O.~Geil, S.~Martin, U.~Mart{\'\i}nez-Pe{\~n}as, R.~Matsumoto, and D.~Ruano.
\newblock On asymptotically good ramp secret sharing schemes.
\newblock {\em arXiv preprint arXiv:1502.05507}, 2015.

\bibitem{geil2014}
\newblock O.~Geil, S.~Martin, R.~Matsumoto, D.~Ruano, and Y.~Luo.
\newblock Relative generalized {H}amming weights of one-point algebraic geometric codes.
\newblock {\em Information Theory, IEEE Transactions on}, 60(10):5938--5949,
  2014.

\bibitem{geil2013feng}
\newblock O.~Geil, R.~Matsumoto, and D.~Ruano.
\newblock {F}eng--{R}ao decoding of primary codes.
\newblock {\em Finite Fields and their Applications}, 23:35--52, 2013.

\bibitem{heijnen1999phd}
\newblock P.~Heijnen.
\newblock Some classes of linear codes.
\newblock In {\em Ph.D. Thesis}. Technische Universiteit Eindhoven, 1999.

\bibitem{heijnen1998generalized}
\newblock P.~Heijnen and R.~Pellikaan.
\newblock Generalized {H}amming weights of $q$-ary {R}eed-{M}uller codes.
\newblock In {\em IEEE Trans. Inform. Theory}. Citeseer, 1998.

\bibitem{helleseth1977weight}
\newblock T.~Helleseth, T.~Kl{\o}ve, and J.~Mykkeltveit.
\newblock The weight distribution of irreducible cyclic codes with block
  lengths $n_1((q^l- 1)/n)$.
\newblock {\em Discrete Mathematics}, 18(2):179--211, 1977.

\bibitem{onorin}
\newblock T.~H{\o}holdt.
\newblock On (or in) {D}ick {B}lahut's footprint.
\newblock {\em Codes, Curves and Signals}, pages 3--9, 1998.

\bibitem{58}
\newblock J.~Katz and L.~Trevisan.
\newblock On the efficiency of local decoding procedures for error-correcting
  codes.
\newblock In {\em Proceedings of the thirty-second annual ACM symposium on
  Theory of computing}, pages 80--86. ACM, 2000.

\bibitem{klove1978weight}
\newblock T.~Kl{\o}ve.
\newblock The weight distribution of linear codes over ${G}{F}(q^l)$ having
  generator matrix over ${G}{F}(q)^\ast$.
\newblock {\em Discrete Mathematics}, 23(2):159--168, 1978.

\bibitem{koblitz}
\newblock N.~Koblitz.
\newblock {\em A course in number theory and cryptography}, volume 114.
\newblock Springer Science \& Business Media, 1994.

\bibitem{kurihara2012secret}
\newblock J.~Kurihara, T.~Uyematsu, and R.~Matsumoto.
\newblock Secret sharing schemes based on linear codes can be precisely
  characterized by the relative generalized {H}amming weight.
\newblock {\em IEICE Transactions on Fundamentals of Electronics,
  Communications and Computer Sciences}, 95(11):2067--2075, 2012.

\bibitem{lee}
\newblock K.~Lee.
\newblock Bounds for generalized {H}amming weights of general {AG} codes.
\newblock {\em Finite Fields and Their Applications}, 34(0):265 -- 279, 2015.

\bibitem{liu2008relative}
\newblock Z.~Liu, W.~Chen, and Y.~Luo.
\newblock The relative generalized {H}amming weight of linear $q$-ary codes and
  their subcodes.
\newblock {\em Designs, Codes and Cryptography}, 48(2):111--123, 2008.

\bibitem{luo2005some}
\newblock Y.~Luo, C.~Mitrpant, A.~H. Vinck, and K.~Chen.
\newblock Some new characters on the wire-tap channel of type {II}.
\newblock {\em Information Theory, IEEE Transactions on}, 51(3):1222--1229,
  2005.

\bibitem{sorensen1991projective}
\newblock A.~B. S{\o}rensen.
\newblock Projective {R}eed-{M}uller codes.
\newblock {\em Information Theory, IEEE Transactions on}, 37(6):1567--1576,
  1991.

\bibitem{tsfasman1991algebraic}
\newblock M.~Tsfasman and S.~G. Vladut.
\newblock {\em Algebraic-geometric codes}.
\newblock Kluwer Academic Publishers, 1991.

\bibitem{wei1991generalized}
\newblock V.~K. Wei.
\newblock Generalized {H}amming weights for linear codes.
\newblock {\em Information Theory, IEEE Transactions on}, 37(5):1412--1418,
  1991.

\bibitem{wyner1975wire}
\newblock A.~D. Wyner.
\newblock The wire-tap channel.
\newblock {\em Bell System Technical Journal, The}, 54(8):1355--1387, 1975.

\bibitem{now}
S.~Yekhanin.
\newblock Locally decodable codes.
\newblock {\em Foundations and Trends in Theoretical Computer Science},
  6(3):139--255, 2010.

\bibitem{zhang2015relative}
J.~Zhang, and K.~Feng
\newblock Relative Generalized Hamming Weights of Cyclic Codes.
\newblock {\em arXiv preprint arXiv:1505.07277}, 2015

\end{thebibliography}

\end{document}